\documentclass[a4paper,11pt]{article}

\usepackage{graphicx}
\usepackage{fullpage}
\usepackage{libertine}
\usepackage{color}
\usepackage[ruled,noline,noend,linesnumbered]{algorithm2e}
\SetArgSty{textrm} 

\usepackage{amsmath,amsfonts,amssymb,amsthm}

\usepackage[breaklinks=true]{hyperref}
\usepackage[svgnames,dvipsnames]{xcolor}
\usepackage[capitalise,nameinlink]{cleveref}
\hypersetup{colorlinks={true},linkcolor={DarkBlue},citecolor=[named]{DarkGreen}}

\crefname{mechanism}{Mechanism}{Mechanisms}

\crefformat{alg2}{#2RemoveForwardEdges#3}

\usepackage{natbib}

\usepackage{subcaption}

\usepackage{tikz}  
\usetikzlibrary{arrows}
\usetikzlibrary{patterns,snakes}
\usetikzlibrary{decorations.shapes}
\tikzstyle{overbrace text style}=[font=\tiny, above, pos=.5, yshift=5pt]
\tikzstyle{overbrace style}=[decorate,decoration={brace,raise=5pt,amplitude=3pt}]
\usetikzlibrary{shapes.geometric}

\newtheorem{inftheorem}{Informal Theorem}
\newtheorem{remark}{Remark}

\title{\bf Optimal Metric Distortion for Matching on the Line}

\usepackage{authblk}
\author[1]{Aris Filos-Ratsikas}
\author[2]{Vasilis Gkatzelis} 
\author[1]{Mohamad Latifian}
\author[2]{\\Emma Rewinski}
\author[3]{Alexandros A. Voudouris}

\affil[1]{University of Edinburgh, UK}
\affil[2]{Drexel University, USA}
\affil[3]{University of Essex, UK}

\date{}

\newtheorem{theorem}{Theorem}[section]
\newtheorem{corollary}[theorem]{Corollary}

\newtheorem{lemma}[theorem]{Lemma}

\newtheorem{observation}[theorem]{Observation}

\theoremstyle{definition}
\newtheorem*{comment*}{Comment}

\newcommand{\Gin}{G_{\text{in}}}
\newcommand{\Gout}{G_{\text{out}}}

\newcommand{\Ain}{A_{\text{in}}}
\newcommand{\Aout}{A_{\text{out}}}

\newcommand{\Sg}{P}

\newcommand{\Alg}{\text{ALG}}

\newcommand{\fav}{\text{top}}
\newcommand{\plu}{\text{plu}}


\newcommand{\SC}{\text{SC}}

\newcommand{\C}[1]{\text{SC}_#1}

\newcommand{\bsucc}{\boldsymbol{\succ}}

\newcounter{note}[section]

\setlength{\parskip}{0.3em}


\begin{document}

\allowdisplaybreaks

\maketitle

\begin{abstract}
    We study the distortion of one-sided and two-sided matching problems on the line. In the one-sided case, $n$ agents need to be matched to $n$ items, and each agent's cost in a matching is their distance from the item they were matched to. We propose an algorithm that is provided only with ordinal information regarding the agents' preferences (each agent's ranking of the items from most- to least-preferred) and returns a matching aiming to minimize the social cost with respect to the agents' true (cardinal) costs. We prove that our algorithm simultaneously achieves the best-possible approximation of $3$ (known as distortion) with respect to a variety of social cost measures which include the utilitarian and egalitarian social cost. In the two-sided case, where the agents need be matched to $n$ other agents and both sides report their ordinal preferences over each other, we show that it is always possible to compute an optimal matching. In fact, we show that this optimal matching can be achieved using even less information, and we provide bounds regarding the sufficient number of queries.
\end{abstract}

\section{Introduction}
Matching problems are prevalent in everyday life and have played a central role in (computational) social choice for multiple decades, starting from the pioneering works of \citet{gale1962college}, \citet{shapley1974cores}, and \citet{hylland1979efficient}. These works consider settings in which a set of $n$ \emph{agents} need to be matched to a set of $n$ items (one-sided matching), or to another set of $n$ agents (two-sided matching), in some fair and efficient manner with respect to the agents' preferences. Some important applications of these problems include college admissions \citep{gale1962college}, national residency matching  \citep{nrmp,roth1984evolution,roth1999redesign}), school choice \citep{abdulkadirouglu2005boston,abdulkadirouglu2005new}, and organ exchange \citep{unos,roth2004kidney}. 

To accurately capture the, often complicated, preferences of an agent over a discrete set of outcomes, the common approach is to use a Von Neumann–Morgenstern utility function which assigns a numerical value to each outcome, indicating how strongly it is liked or disliked \citep{von1944theory}. Although the significance of these \emph{cardinal} preferences of the agents is recognized in the classic literature on matching (e.g., \citep{hylland1979efficient,zhou1990conjecture,bogomolnaia2001new}), most matching algorithms that have been proposed are \emph{ordinal}: They ask each agent to report only their ranking over the outcomes from most- to least-preferred.
This is motivated by the fact that it can be cognitively or computationally prohibitive for the agents to come up with these values, whereas simply ranking outcomes is an easier, often routine, task. However, given access only to this limited (ordinal) information regarding the agents' preferences, to what extent can one approximate social objectives which are functions of the cardinal values? 

This is precisely the question studied in the vibrant literature of \emph{distortion in social choice}, whose goal is to design ordinal algorithms with low \emph{distortion}, i.e., strong worst case approximation guarantees with respect to cardinal social cost objectives (see \cite{distortion-survey} for a recent survey). One of the most well-studied settings in this literature since its inception \citep{procaccia2006distortion} is that of \emph{metric distortion}, where the agents' cardinal preferences correspond to distances in some metric space \citep{merrill1999unified,enelow1984spatial}. The metric distortion literature has been very successful in a plethora of different settings, leading to (near-)optimal distortion bounds. Examples include single-winner voting \citep{anshelevich2018approximating,gkatzelis2020resolving,kempe2022veto}, multi-winner voting \citep{caragiannis2022multiwinner}, and probabilistic social choice \citep{feldman2016voting,anshelevich2017randomized,charikar2022randomized,charikar24breaking}. However, a notable exception that remains poorly-understood, despite its fundamental nature and wealth of applications, is the metric distortion of matching problems.

In the \emph{metric matching} setting, agents and items are embedded in a metric space, and an agent's cost in a matching is the distance between them and their match. The distortion of the metric (one-sided) matching problem was first studied by \citet{caragiannis2024augmentation}\footnote{The conference version of this paper was published in 2016.} who focused on the utilitarian social cost (the total cost) and, among other results, observed that the distortion of the well-known \emph{serial dictatorship} algorithm is $2^{n-1}$. For its randomized counterpart, the \emph{random serial dictatorship}, they showed that its distortion is between $n^{0.29}$ and $n$. Their work also implies a lower bound of $3$ on the distortion of any (possibly randomized) ordinal algorithm with respect to the utilitarian social cost. More recently, \citet{anari2023matching} achieved improvements on both fronts: They designed a deterministic algorithm with distortion $O(n^2)$ and complemented it with a lower bound of $\Omega(\log n)$ on the distortion of any ordinal algorithm, even randomized ones. Despite these improvements, there is still a very large gap between the lower and the upper bound, and closing this gap is one of the main open problems in this literature. 

In this paper, we resolve the metric distortion of the matching problem for the interesting case of the line metric, which has received a significant amount of attention, both because it models scenarios of interest and because it gives rise to worst-case instances for many metric distortion settings. The line metric, known as \emph{$1$-Euclidean}, provides an abstraction of political or ideological beliefs along different axes, encompassed into a spectrum, e.g., between ``left'' and ``right'', which is inherent in the pioneering works of \cite{hotelling1929stability} and \cite{downs1957economic}. The line metric can also be used to capture preferences over policy issues, such as the extent to which a government should be involved in the economy, ranging from full involvement to zero intervention \citep{stokes1963spatial}; see \citep{filos2024distortion} for a more detailed discussion.
It also turns out that in most of the fundamental metric distortion settings (e.g., single-winner and multi-winner voting), the best possible distortion on the line metric is either the same as or within a small constant from the best possible distortion on any metric. 
In the metric matching problem, however, this remains unknown: the lower bound of $\Omega(\log n)$ shown by \citet{anari2023matching} requires a highly involved tree metric, so it does not rule out the existence of an algorithm with \emph{constant} distortion on the line. Is a such a constant achievable? 

\subsection{Our Contribution}
Our main result provides a positive answer to this question by providing an ordinal algorithm that achieves a distortion of $3$ with respect to the well-known \emph{$k$-centrum cost} (for different values of $k \in \{1,\ldots,n\}$), which is equal to the sum of the $k$ largest agent costs~\citep{Tamir2001kcentrum}; see also \citep{Han2023multiple}. This objective naturally interpolates between the maximum (egalitarian) cost when $k=1$ and the utilitarian social cost when $k=n$. In fact, we show that this distortion is the best possible for each of these cost functions and even for randomized algorithms, which fully resolves this question.

\begin{inftheorem}
\emph{There is a deterministic ordinal algorithm for the 
one-sided matching problem on the line that simultaneously achieves distortion 3 with respect to the $k$-centrum cost for all $k\in \{1, \ldots, n\}$. This distortion is the best possible for any ordinal algorithm, including randomized ones.}
\end{inftheorem}

This result provides a clear, asymptotic separation between the distortion of one-sided matching on the line metric and its distortion in more general metric spaces. 

Building on our main result above, we then consider the distortion of the metric two-sided matching problem. Despite the ubiquitous nature of this problem, it is perhaps surprising that, to the best of our knowledge, its metric distortion has not been considered prior to our work. For this setting, we prove a seemingly surprising result, namely that, using only ordinal information, we can construct a matching that \emph{exactly} minimizes the $k$-centrum cost.

\begin{inftheorem}
\emph{There is a deterministic ordinal algorithm for the 
two-sided matching problem on the line that always returns an \emph{optimal} matching (i.e., achieves distortion $1$).}
\end{inftheorem}

From a technical standpoint, what enables us to improve from a distortion of $3$ in the one-sided matching case to achieving optimality in the two-sided matching setting is the additional information that we have from the preferences of the agents on the other side. We explore this further and show that an optimal matching can be computed using even less information about the ordinal preferences of the agents, which is acquired by appropriate {\em queries}.

\subsection{Further Related Work}
Prior to the work of \citet{anari2023matching}, \citet{anshelevich2021given} considered the metric one-sided matching problem as part of a class of more general facility assignment problems, and showed a bound of $3$ on the distortion of ordinal algorithms for the social cost and the maximum cost objectives, but under the restriction that the item locations in the metric space are known. Other works \citep{Anshelevich2019bipartitematching,Anshelevich2016truthful,anshelevich2016blind} considered related matching settings where the goal is to maximize the \emph{metric utilities} of the agents rather than minimizing their costs.

In the non-metric setting, where the agents are assumed to have normalized utilities over the items, the best possible distortion for one-sided matching has been identified for both deterministic~\citep{amanatidis2022matching} and randomized algorithms~\citep{filos2014RSD}. The distortion of one-sided matching (and generalizations of it) has also been studied when more information can be elicited via queries~ \citep{amanatidis2022matching,amanatidis2024dice,ma2021matching,latifian2024approval,ebadian2025bit}.
The query models in those works are inherently different from the one we consider here, as they elicit \emph{cardinal} information, on top of the ordinal preferences which are considered to be known. In contrast, our queries in the two-sided model elicit only \emph{ordinal} information about the agents' preferences on each side.  

\section{Preliminaries}
For any positive integer $\ell$, let $[\ell] = \{1,\ldots,\ell\}$. 
An instance of our problem consists of a set of $n$ {\em agents} $A=\{a_1,\dots,a_n\}$ and a set of $n$ {\em items} $G=\{g_1,\dots,g_n\}$. We assume that agents and items are located on distinct points on the same line metric\footnote{The assumption that agent and item locations are all distinct is without loss of generality: it suffices to assume that agent's ordinal preferences consistently tie-break over items with the same location.}, and let $d(a,g)$ denote the {\em distance} between any agent $a$ and item $g$. Let $\succ_a$ be the {\em ordinal preference} of agent $a$ over the set of items $G$, such that $g \succ_a g'$ implies $d(a,g) \leq d(a,g')$. Let $\bsucc := (\succ_a)_{a \in A}$ be the {\em ordinal profile} consisting of the ordinal preferences of all agents. Note that many different distance metrics may induce the same ordinal preferences for the agents. We write $d \rhd\!\!\bsucc$ to denote the event that the metric $d$ is consistent to the ordinal profile $\bsucc$. We also denote by $\fav(a)$ the {\em favorite item} of agent $a$, that is, $\fav(a) \succ_a g$ for every $g \in G \setminus\{\fav(a)\}$. Let $\plu(g)$ be the plurality score of an item $g\in G$, which is equal to the number of agents whose favorite item is $g$, i.e., $\plu(g)=\left|\left\{a\in A : \fav(a) = g\right\}\right|$.

Our goal is to choose a matching $M$ between the agents and the items. Given such a matching, we denote by $M(a)$ the item matched to agent $a$, and by $M(g)$ the agent matched to item $g$. We evaluate the quality a matching using a variety of social cost measures, captured by the $k$-centrum cost. Given some $k\in [n]$, the {\em k-centrum cost} of a matching $M$ is the sum of the $k$ largest distances between the agents and their matched items, among all agents: 
    \begin{equation*}
        \C{k}(M|d) = \sum_{q=1}^k {\max_{a\in A}}^q d(a,M(a)),
    \end{equation*}
where $\max^q$ returns the $q$-th largest value. Note that this captures the two most well-studied social cost functions, the egalitarian and the utilitarian social cost, as special cases. 
Specifically, $\C{n}(M|d)=\sum_a d(a, M(a))$ corresponds to the utilitarian social cost, which evaluates on the total cost over all agents, while $\C{1}(M|d)=\max_a d(a, M(a))$ corresponds to the egalitarian social cost, which focuses on the agent that suffers the largest cost. In general, smaller values of $k$ put more emphasis on fairness by focusing on the least happy agents. When the metric $d$ is clear from context, we will simplify our notation and write $\C{k}(M)$ instead of $\C{k}(M|d)$. 

A matching {\em algorithm} $\Alg$ takes as input an ordinal profile $\bsucc$ and computes a one-to-one matching $\Alg(\bsucc)$ of the items to the agents. We want to design matching algorithms with as small distortion as possible. 
The {\em distortion} of a matching algorithm $\Alg$ with respect to the $k$-centrum cost $\C{k}$ for some $k\in [n]$ is the worst-case ratio over all possible ordinal profiles and consistent metrics between the objective value of the matching returned by the algorithm and the minimum possible objective value over all possible matchings: 
\begin{align*}
    \sup_{\bsucc} \sup_{d: d \rhd \bsucc}~ \frac{\C{k}(\Alg(\bsucc)|d)}{\min_M \C{k}(M|d)}.
\end{align*}

\subsection{$M^*$: A well-structured optimal matching.}\label{sec:opt_m}
For each instance there may exist multiple matchings that minimize the $k$-centrum cost. Among these optimal matchings, we are particularly interested in one that greedily matches agents to items according to their true ordering on the line; the optimality of this matching is established in the following theorem. Throughout the paper, we denote this specific matching by $M^*$ and refer to it simply as the optimal matching. Also, for each agent $a$ we refer to $M^*(a)$ as $a$'s optimal item. Due to space constraints, the proof of the following theorem (as well as of others in the rest of the paper) are deferred to the supplementary material. 

\begin{theorem}\label{lem:opt_structure}
Given the true ordering of all agents and items on the line, greedily matching the leftmost agent to the leftmost item leads to an optimal matching $M^*$ with respect to $\C{k}$ for any $k\in [n]$.
\end{theorem}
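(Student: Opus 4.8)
The plan is to establish optimality of the greedy ``leftmost-to-leftmost'' matching $M^*$ via an exchange argument, and to do so in a way that handles all values of $k$ simultaneously. Label the agents $a_1,\dots,a_n$ from left to right on the line, and the items $g_1,\dots,g_n$ from left to right, so that $M^*(a_i)=g_i$. First I would show the following ``non-crossing'' lemma: there is an optimal matching (for the fixed $k$) that is \emph{laminar/non-crossing}, meaning that it never matches $a_i\to g_j$ and $a_{i'}\to g_{j'}$ with $i<i'$ but $j>j'$. This is the standard uncrossing step: given a crossing pair, swapping the two items (i.e., rematching $a_i\to g_{j'}$ and $a_{i'}\to g_j$) does not increase any individual agent's cost by more than\dots\ actually, more carefully, on the line one shows the multiset of the two distances after the swap is dominated, coordinatewise when sorted, by the multiset before the swap. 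Concretely, with four points on a line whose agent-coordinates and item-coordinates interleave in some order, the sorted pair of distances $(d(a_i,g_{j'}),d(a_{i'},g_j))$ is majorized by $(d(a_i,g_j),d(a_{i'},g_{j'}))$; a short case analysis on the relative order of the four reals $x_{a_i}\le x_{a_{i'}}$ and $x_{g_{j'}}\le x_{g_j}$ verifies this. Since all other agents' costs are unchanged, and since $\C{k}$ is a \emph{Schur-convex, symmetric, monotone} function of the cost vector (it is the sum of the $k$ largest entries, hence an Order-statistic sum, which is well known to be Schur-convex), the $k$-centrum cost does not increase under the swap. Iterating, we reach a non-crossing optimal matching.

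Second, I would observe that a non-crossing perfect matching between $n$ left-to-right-ordered agents and $n$ left-to-right-ordered items on the line is \emph{unique}: it must be $a_i\mapsto g_i$ for all $i$. Indeed, if a non-crossing matching sent $a_1$ to some $g_j$ with $j>1$, then $g_1$ is matched to some $a_{i'}$ with $i'>1$, giving a crossing pair $(a_1\to g_j,\ a_{i'}\to g_1)$ with $1<i'$ and $j>1$, a contradiction; then induct on the remaining $n-1$ agents and items. Hence the non-crossing optimal matching produced in the first step equals $M^*$, which proves $M^*$ is optimal for $\C{k}$. Since $k$ was arbitrary, $M^*$ is simultaneously optimal for every $k\in[n]$ — this is the one matching that works for all objectives at once.

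The main obstacle is the uncrossing step, specifically verifying that swapping a crossing pair never increases the sorted cost vector in the majorization order. The subtlety is that a single local swap could in principle \emph{increase} one of the two distances even while decreasing the other; one must check that the \emph{larger} of the two post-swap distances does not exceed the larger of the two pre-swap distances, and that the sum is also non-increasing — together these give majorization, which is exactly what Schur-convexity of $\C{k}$ consumes. On the line this follows from elementary interval arithmetic: writing the four coordinates in sorted order, the ``parallel'' matching $a_i\to g_{j'},\ a_{i'}\to g_j$ (the one that respects the left-to-right order among these two pairs) always has both its max-distance and its total distance bounded by those of the ``crossing'' matching. I would present this as a short self-contained claim with a two- or three-line case split, then feed it into the majorization/Schur-convexity machinery. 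One should also double-check the tie-breaking convention (distinct locations are assumed, with ties among co-located items broken consistently by the ordinal preferences), so that ``leftmost agent'' and ``leftmost item'' are well defined and the induction in the second step goes through cleanly.
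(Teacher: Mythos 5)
Your proposal is correct and follows essentially the same exchange/uncrossing argument as the paper: the key local fact that replacing a crossed pair by the order-respecting pair increases neither the maximum nor the sum of the two affected distances is exactly the paper's case analysis, and your weak-majorization/Schur-convexity step is just an abstract rephrasing of the paper's direct observation that these two facts bound the sum of the $k$ largest costs for every $k$. The only cosmetic differences are that the paper uncrosses at the leftmost misplaced agent (which also settles termination of the iteration, a point worth one line in your version, e.g.\ via a decreasing inversion count) and spells out the interval case analysis you defer.
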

\begin{proof}
Throughout this proof, with slight abuse of notation, we will use $a_i$ and $g_i$ to also denote the position of agent $a_i$ and item $g_i$ on the line for any $i \in [n]$. Without loss of generality, assume that $a_1 \leq  a_2 \leq  \ldots \leq a_n$ and $g_1 \leq g_2 \leq \ldots \leq g_n$. We claim that there exists an optimal matching $M^*$ such that $M^*(a_i) = g_i$ for any $i \in [n]$. Let $M$ be a different matching with strictly smaller $k$-centrum cost than $M^*$, and let $a_i$ be the leftmost agent that is not matched to its claimed optimal item, that is, $M(a_i) = g_\ell \neq g_i$. By the choice of $a_i$, we have that $M(a_j) = g_j$ for any $j < i$, which implies that $g_i \leq g_\ell$ and $M(a_j) = g_i$ for some agent $a_j$ with $j > i$, i.e., $a_i \leq a_j$.

We will now show that swapping the pairs $(a_i,g_\ell)$ and $(a_j,g_i)$ in $M$ leads to a new matching $M'$ where $a_i$ is also matched to its claimed optimal item $g_i$, and $\C{k}(M') \leq \C{k}(M)$ for any $k \in [n]$. Observe that, to show that the $k$-centrum cost does not increase after this swap for every $k \in [n]$, it suffices to show that the sum and the maximum of the distances of the involved agents from their matched items does not increase. Indeed, a non-increasing maximum implies that the $k$-centrum cost does not increase when one of the agents contributes to it in $M$. Similarly, a non-increasing sum implies that the $k$-centrum cost does not increase in case both agents contribute to it in $M$. Of course, if neither agent contributes to the $k$-centrum cost, then it remains the same after the swap. 

First observe that it cannot be the case that $g_\ell \succ_{a_i} g_i$ and $g_i \succ_{a_j} g_\ell$ as otherwise one of the two ordering constraints $g_i \leq g_\ell$ and $a_i \leq a_j$ would not be satisfied. 
Furthermore, if $g_i \succ_{a_i} g_\ell$ and $g_\ell \succ_{a_j} g_i$, then clearly the distance of both agents from their matched items would not increase after the swap. Consequently, it suffices to focus on the case where both agents prefer the same item out of the two, say $g_\ell$; the other case is symmetric. We now switch between the following three cases depending on the possible relative positions of the agents and the items. 
\begin{itemize}
    \item $g_i \leq a_i \leq a_j \leq g_\ell$.
    Since $d(a_i,g_\ell) = d(a_i,a_j) + d(a_j,g_\ell)$ and $d(a_j,g_i) = d(a_i,g_i) + d(a_i,a_j)$, the sum and the maximum of the distances clearly cannot increase after the swap. In particular, we have
    \begin{align*}
        d(a_i,g_\ell) + d(a_j,g_i) = d(a_i,g_i) + d(a_j,g_\ell) + 2 d(a_i,a_j)
    \end{align*}
    and 
    \begin{align*}
        \max\{d(a_i,g_\ell), d(a_j,g_i)\} = \max\{d(a_i,g_i), d(a_j,g_\ell)\} + d(a_i,a_j).
    \end{align*}
    
    \item $g_i \leq a_i \leq g_\ell \leq a_j$. The sum and the maximum of the distances clearly cannot increase after the swap since $d(a_j,g_i) = d(a_i,g_i) + d(a_i,g_\ell) + d(a_j,g_\ell)$. 

    \item $g_i \leq g_\ell \leq a_i \leq a_j$. In this case, we have
    $d(a_i,g_\ell) + d(a_i,a_j) = d(a_j,g_\ell)$, 
    $d(a_j,g_i) = d(a_i,g_i) + d(a_i,a_j)$ and $d(a_j,g_i) = d(g_i,g_\ell) + d(a_j,g_\ell)$. Using the first two equalities, we get that the sum remains the same after the swap: 
    \begin{align*}
        d(a_i,g_\ell) + d(a_j,g_i) 
        &= d(a_i,g_\ell) + d(a_i,g_i) + d(a_i,a_j) \\
        &= d(a_i,g_i) + d(a_j,g_\ell)
    \end{align*}    
    Using the last two equalities above, we get that the maximum cannot increase:
    \begin{equation*}
        \max\{d(a_i,g_\ell),d(a_j,g_i)\} = d(a_j,g_i) > \max\{d(a_i,g_i), d(a_j,g_\ell)\}.\qedhere
    \end{equation*}
\end{itemize}
\end{proof}

\subsection{Structural properties}
Given an ordinal profile $\bsucc$, we can observe some structure regarding the locations of the agents and the items on the line. 

\begin{lemma}\label{lem:range}
If the favorite item of two agents is the same, then all the agents between them on the line also have the same favorite item.
\end{lemma}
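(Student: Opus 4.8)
The plan is to exploit the elementary fact that, on the line, the set of points that are closer to one item than to another is a half-line, hence an interval, hence convex. Following the convention used in the proof of \Cref{lem:opt_structure}, I would abuse notation and write $a,a',a'',g,g'$ for both the agents/items and their positions on the line. Suppose $\fav(a)=\fav(a')=g$, and assume without loss of generality that $a<a'$; let $a''$ be an arbitrary agent with $a<a''<a'$. It suffices to show that $\fav(a'')=g$, and for this it is enough to show that $g$ is strictly closer to $a''$ than every other item is.

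The key step is to fix an arbitrary item $g'\neq g$ and argue that $d(a'',g)<d(a'',g')$. Since all locations are distinct, $g\neq g'$ as points; by symmetry assume $g<g'$ and put $m=(g+g')/2$. Then the set of points $x$ on the line with $d(x,g)\le d(x,g')$ is exactly the half-line $(-\infty,m]$. Because $g=\fav(a)$ we have $g\succ_a g'$, hence $d(a,g)\le d(a,g')$, hence $a\le m$; similarly $g=\fav(a')$ gives $a'\le m$. Since $a<a''<a'$, we get $a''<a'\le m$, so in fact $a''<m$, which yields the strict inequality $d(a'',g)<d(a'',g')$. (If instead $g'<g$, the same reasoning with the half-line $[m,\infty)$ gives $a''>m$ and again $d(a'',g)<d(a'',g')$.)

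As $g'$ was an arbitrary item distinct from $g$, item $g$ is strictly closer to $a''$ than any other item, so $\fav(a'')=g$. Since $a''$ was an arbitrary agent lying between $a$ and $a'$ on the line, every such agent has favorite item $g$, which is the claim.

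I do not expect a genuine obstacle here: the core is just the convexity of the ``preferred to $g'$'' region together with a squeeze between $a$ and $a'$. The only point that needs a little care is extracting a \emph{strict} inequality so that $g$ is unambiguously the favorite (with no tie-breaking ambiguity), and this is precisely where I would use that $a''$ lies \emph{strictly} between $a$ and $a'$ and that all agent and item locations are distinct.
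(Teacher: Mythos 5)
Your proof is correct. It takes a mildly different route from the paper's: the paper fixes the middle agent $a_c$ and does a positional case analysis on whether $g$ lies to its left or right, concluding that $a_c$ is sandwiched between $g$ and whichever endpoint agent ($a_\ell$ or $a_r$) lies on the far side, and hence inherits $g$ as its favorite; you instead fix an arbitrary rival item $g'$ and use the fact that the region of points weakly preferring $g$ to $g'$ is a half-line through the midpoint $m=(g+g')/2$, so the middle agent, being squeezed strictly between two agents in that half-line, lies strictly inside it. The two arguments rest on the same one-dimensional geometry, but yours is organized per rival item rather than per side of $g$, and it has the small advantage of delivering a \emph{strict} inequality $d(a'',g)<d(a'',g')$, which disposes of any tie-breaking concern explicitly; the paper's proof is terser and leaves the step from ``$a_c$ is closer to $g$ than $a_r$ (resp.\ $a_\ell$)'' to ``$g$ is $a_c$'s favorite'' implicit. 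Both are complete for the purposes of \cref{lem:range}.
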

\begin{proof}
Consider any triplet of ordered agents $a_\ell, a_c, a_r$ such that $g$ is the favorite item of $a_\ell$ and $a_r$. 
If $g$ is to the left of $a_c$, then $g$ must be the favorite item of $a_c$ as well since $a_c$ is closer to $g$ than $a_r$. Similarly, if $g$ is to the right of $a_c$, then $a_c$ is closer to $g$ than $a_\ell$, and thus $g$ must also be $a_c$'s favorite item. 
\end{proof}

\begin{lemma}\label{lem:favorite_to_left}
If item $g_x$ is to the left of item $g_y$, then all agents whose favorite item is $g_x$ are to the left of all agents whose favorite item is $g_y$.
\end{lemma}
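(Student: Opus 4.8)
The plan is to argue by contradiction via an elementary midpoint argument on the line. Fix items $g_x, g_y$ with $g_x$ to the left of $g_y$, take any agent $a$ with $\fav(a) = g_x$ and any agent $b$ with $\fav(b) = g_y$, and (abusing notation as before) identify each agent and item with its coordinate on the line; the goal is to establish that $a < b$.

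First I would convert the ordinal information into distance inequalities. Since $g_x$ is the favorite of $a$, we have $g_x \succ_a g_y$, so consistency of the metric with the profile yields $d(a,g_x) \le d(a,g_y)$; symmetrically, $g_y \succ_b g_x$ yields $d(b,g_y) \le d(b,g_x)$. The next step is the geometric observation that, because $g_x < g_y$, the inequality $d(a,g_x) \le d(a,g_y)$ forces $a$ to lie (weakly) to the left of the midpoint $m := (g_x + g_y)/2$, whereas $d(b,g_y) \le d(b,g_x)$ forces $b$ to lie (weakly) to the right of $m$. Chaining these gives $a \le m \le b$, and since the agent and item locations are distinct---equivalently, $a$ and $b$ cannot both sit exactly at $m$ while having different favorite items---we conclude that $a < b$, as required.

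I do not anticipate a genuine obstacle here, since the argument is short and self-contained. The only mild subtlety is the boundary case in which an agent sits exactly at $m$, making $g_x$ and $g_y$ equidistant from it; this is excluded by the standing distinctness assumption on locations (or, failing that, handled by the tie-breaking convention on the ordinal preferences noted in the preliminaries), and so it does not affect the argument.
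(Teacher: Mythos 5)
Your proof is correct, but it follows a different route from the paper. The paper's argument is indirect: it first shows that every agent with favorite $g_x$ must lie to the left of $g_y$ and every agent with favorite $g_y$ to the right of $g_x$ (otherwise the wrong item would be strictly closer), and then invokes \cref{lem:range} (the ``interval of supporters'' lemma) to conclude that the two groups cannot interleave. You instead give a direct, self-contained midpoint argument: $d(a,g_x)\le d(a,g_y)$ with $g_x<g_y$ pins $a$ weakly to the left of the bisector point $m=(g_x+g_y)/2$, and symmetrically $b$ weakly to its right, so $a\le m\le b$ and distinctness of agent locations upgrades this to $a<b$. Your version is more elementary in that it does not rely on \cref{lem:range} at all, and it makes the underlying geometry (the perpendicular-bisector characterization of ordinal comparisons on the line) explicit, which is the fact the paper uses implicitly throughout; the paper's version buys a uniform style, reusing \cref{lem:range} which it needs elsewhere anyway. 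One small quibble: the boundary case you worry about is harmless even without appealing to distinctness of an agent from the midpoint --- if $a=m$ then $b\ge m=a$ and $a\ne b$ already forces $a<b$ --- so the only fact you actually need is that two distinct agents occupy distinct points (or tie-break consistently), exactly as the preliminaries stipulate.
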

\begin{proof}
Let $A_x$ be the set of agents whose favorite item is $g_x$, and let $A_y$ be the set of agents whose favorite item is $g_y$. If an agent in $A_x$ is to the right of $g_y$, then this agent would be closer to $g_y$ than to $g_x$, thus contradicting that $g_x$ is this agent's favorite item. So, all agents in $A_x$ at to the left of $g_y$. Similarly, all agents in $A_y$ are to the right of $g_x$. The statement now follows by \cref{lem:range}, which implies that the agents in $A_x$ are all to the left of the agents of $A_y$.
\end{proof}

\section{One-sided case: Achieving the optimal distortion of $3$}
In this section we first prove that no algorithm can achieve a distortion better than 3 with respect to the $k$-centrum cost for any $k\in[n]$, even if it is randomized. Then, we present an optimal (deterministic) algorithm that simultaneously guarantees a distortion upper bound of $3$ with respect to all $k$-centrum costs. 

\subsection{Distortion lower bound for any algorithm}
Here we prove a lower bound on the distortion of matching algorithms, even randomized ones. 

\begin{theorem} \label{thm:lower}
For any $k \in [n]$, no deterministic algorithm can achieve a distortion better than $3$ with respect to the $k$-centrum cost $\C{k}$, and no randomized algorithm can achieve a distortion better than $3-2/n$.
\end{theorem}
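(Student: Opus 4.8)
The plan is to build a single family of hard instances with very few distinct item locations (essentially two clusters), so that an ordinal algorithm cannot tell which of two symmetric completions of the metric it is facing. Concretely, I would place the items so that many of them coincide (or are within distance $\varepsilon$) at one point, call it $0$, and the remaining items coincide at a second point, call it $1$; similarly cluster the agents. The key is to choose agent positions and item multiplicities so that the induced ordinal profile $\bsucc$ is consistent with (at least) two metrics $d$ and $d'$ that are mirror images of each other, in which the optimal matching has $k$-centrum cost $O(\varepsilon)$ under $d$ but the matching that looks optimal under $d$ has cost $\Theta(1)$ under $d'$, and vice versa. Since the algorithm sees only $\bsucc$, in at least one of the two worlds it pays a $\Theta(1/\varepsilon)$-factor; tuning constants so the near-optimal cost is $1$ and the algorithm's cost is $3$ (minus lower-order terms) gives the deterministic bound. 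For the randomized bound, the algorithm outputs a distribution over matchings; by a Yao-type argument over the two symmetric worlds, its expected cost is at least the average of its costs in the two worlds, and one shows this average is at least $3 - 2/n$ times the optimum, the $2/n$ slack coming from the fact that with $n$ agents the algorithm can ``hedge'' on a $1/n$ fraction of the probability mass without being caught in either world.

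The cleanest concrete construction I would try first: fix $k$. Take $k+1$ agents and $k+1$ items relevant to the $k$-centrum cost (pad the rest trivially at matched-up locations contributing $0$). Put $k$ agents near $0$ and one agent, $a^\dagger$, at a point $p$ strictly between $0$ and $1$ but closer to $1$, so $a^\dagger$ prefers the right cluster. Put one item near $0$ and $k$ items near $1$. Then the ordinal profile forces the $k$ left-cluster agents to each rank the single left item first and then the right items; the issue the algorithm cannot resolve is whether $a^\dagger$ sits just left of the midpoint or just right — equivalently, whether it is one of the left-cluster agents' job or $a^\dagger$'s job to ``take the hit'' of crossing to the far cluster. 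In the world where $a^\dagger$ is on the right side of things, $M^*$ matches $a^\dagger$ to a right item and one left agent crosses; in the mirror world the roles flip. Any fixed matching chosen from $\bsucc$ is optimal in at most one of these worlds and is off by a factor approaching $3$ in the other; I would verify the factor-$3$ computation by writing the two distances explicitly (near $1$ vs near $2\cdot\tfrac12 = 1$ from the wrong side, against an optimum near $\tfrac13$) and letting $\varepsilon \to 0$.

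For the randomized part, I would set up exactly two instances $I_1, I_2$ with the same profile $\bsucc$ and a bijection between matchings such that the optimum in each is $1$, a matching good for $I_1$ has cost $3$ in $I_2$, and the only matchings that are simultaneously ``not terrible'' in both are those that already behave near-optimally in one and cost $\approx 3$ in the other — so no single matching beats ratio $3$ in both. Then for any distribution $D$ output by the algorithm on $\bsucc$, $\max\{\mathbb{E}_{M\sim D}[\C{k}(M|d_1)], \mathbb{E}_{M\sim D}[\C{k}(M|d_2)]\} \ge \tfrac12(\mathbb{E}[\C{k}(M|d_1)] + \mathbb{E}[\C{k}(M|d_2)]) \ge \tfrac12(3+1) \cdot (\text{opt}) \cdot (\text{correction})$; the honest accounting of which matchings contribute what, over the support, yields the $3 - 2/n$ bound rather than a clean $3$. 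The main obstacle I anticipate is precisely this last bookkeeping: getting the randomized constant to be exactly $3 - 2/n$ (and not something weaker) requires choosing $n$, the cluster sizes, and the gap parameter so that the ``hedging'' loophole is worth exactly a $2/n$ fraction — I would expect to need $n$ agents arranged so that one unit of probability can be safely shifted onto one agent's contribution in a way that is invisible to one world, and to check this does not also help in the other world. Everything else (consistency of $d, d'$ with $\bsucc$, the $\varepsilon\to 0$ limits, padding the irrelevant agents) is routine.
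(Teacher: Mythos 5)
Your proposal has two genuine gaps. First, the randomized bound: an averaging (Yao-type) argument over just \emph{two} indistinguishable worlds cannot give $3-2/n$. Your own inequality reveals this: with two worlds you get at best $\tfrac12(3+1)=2$, since the algorithm may flip a fair coin between the matching that is optimal in $I_1$ and the one optimal in $I_2$ and achieve expected ratio about $2$ in both worlds. The $2/n$ slack is not a bookkeeping constant to be tuned; it forces a profile that is symmetric over all $n$ agents, i.e.\ $n$ indistinguishable worlds. The paper does exactly this: every agent reports the identical ranking $g_1\succ\cdots\succ g_n$, so by pigeonhole some agent $a_1$ receives $g_n$ with probability $p_1\le 1/n$, and the adversary then chooses the metric in which $a_1$ is the unique agent lying between the cluster of the remaining agents/items and $g_n$ (at distance $1-\varepsilon$ from the cluster and $1$ from $g_n$). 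In that metric the optimum is $1$, while any other assignment of $g_n$ costs $3-\varepsilon$ (a cluster agent pays about $2$ to cross and the displaced $a_1$ pays about $1$ when $k\ge 2$; for $k=1$ the cluster agents are placed an extra unit to the left so the single crossing already costs about $3$), giving expected cost at least $p_1\cdot 1+(1-p_1)(3-\varepsilon)\ge 3-2/n-\varepsilon$, and $3-\varepsilon$ in the deterministic case where $p_1=0$.

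Second, your concrete deterministic construction does not yield a factor of $3$. With one item near $0$, $k$ items near $1$, $k$ agents near $0$ and one agent $a^\dagger$ in between, at least $k-1$ of the left agents must cross to the right cluster in \emph{every} matching, so the $k$-centrum cost of any matching is at least roughly $k-1$ while the optimum is roughly $k-1$ as well; the achievable ratio is $1+O(1/k)$, not $3$. The multiplicities must be reversed: a single far item, all remaining items co-located with the other agents, and one special agent strictly between the cluster and the far item, so that the only decision is who receives the far item, the optimum equals $1$, and the wrong choice costs about $3$. Relatedly, your opening sketch, in which the optimum is $O(\varepsilon)$ while the algorithm pays $\Theta(1)$ (a $\Theta(1/\varepsilon)$ gap), cannot be realized on the line: the paper's algorithm caps the distortion at $3$, so any valid construction must keep the algorithm's cost within a factor of $3$ of the optimum, and the instance has to be engineered to hit exactly that ratio.
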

\begin{proof}
Consider an instance with $n$ agents $\{a_1, \ldots, a_n\}$ and $n$ items $\{g_1, \ldots, g_n\}$. 
The ordinal profile is such that all agents have the same ranking over the items: $g_1 \succ \ldots \succ g_n$. 
For any $i \in [n]$, let $p_i$ be the probability that agent $a_i$ is matched to item $g_n$ according to an arbitrary randomized matching algorithm. Without loss of generality, suppose that $p_1 \leq 1/n$; note that if the algorithm is deterministic, then $p_1 = 0$.

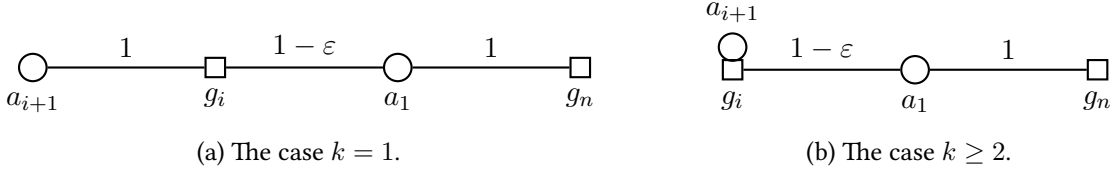
\begin{figure}[t]
\begin{subfigure}[t]{0.5\textwidth}
    \centering
    \begin{tikzpicture}[node distance={24mm}, thick, c/.style = {draw, circle, minimum size=2mm}, s/.style = {draw, rectangle, minimum size=2mm}, t/.style = {draw, regular polygon, regular polygon sides=3, minimum size=2mm]}, invisible/.style = {minimum size=0mm, inner sep=0mm, text width=0mm, outer sep=0mm, draw=none}]
    \node[c] [label=below:$a_{i+1}$] (1) {};
    \node[s] [right of=1, label=below:$g_i$] (2) {};
    \node[c] [right of=2, label=below:$a_1$] (3) {};
    \node[s] [right of=3, label=below:$g_n$] (4) {};
    \draw (1) -- node[midway, above] {$1$} (2);
    \draw (2) -- node[midway, above] {$1-\varepsilon$} (3);
    \draw (3) -- node[midway, above] {$1$} (4);
    \end{tikzpicture}
\caption{The case $k=1$.}
\label{fig:lower:k=1}
\end{subfigure}
\begin{subfigure}[t]{0.5\textwidth}
    \centering 
    \begin{tikzpicture}[node distance={24mm}, thick, c/.style = {draw, circle, minimum size=2mm}, s/.style = {draw, rectangle, minimum size=2mm}, t/.style = {draw, regular polygon, regular polygon sides=3, minimum size=2mm]}, invisible/.style = {minimum size=0mm, inner sep=0mm, text width=0mm, outer sep=0mm, draw=none}]
    \node[s] [label=below:$g_i$] (1) {};
    \node[c] [above of=1, node distance=3mm, label=above: $a_{i+1}$] (2) {};
    \node[c] [right of=1, label=below:$a_1$] (3) {};
    \node[s] [right of=3, label=below:$g_n$] (4) {};
    \draw (1) -- node[midway, above] {$1-\varepsilon$} (3);
    \draw (3) -- node[midway, above] {$1$} (4);
    \end{tikzpicture}
\caption{The case $k \geq 2$.}
\label{fig:lower:kgeq2}
\end{subfigure}
\caption{The metrics used in the proof of \cref{thm:lower} to give a lower bound of $3$ on the distortion of (randomized) algorithms for (a) $k=1$ and (b) $k\geq 2$. Circles correspond to agents, rectangles correspond to items, and $i \in [n-1]$. The weight of each edge is the distance between its endpoints.}
\label{fig:lower}
\end{figure}

Consider the metric spaces that are illustrated in \cref{fig:lower:k=1} for $k=1$ and in \cref{fig:lower:kgeq2} for $k\geq 2$, where $\varepsilon > 0$ is an infinitesimal. For any $k \in [n]$, the optimal matching consists of the pairs $(a_1,g_n)$ and $(a_{i+1},g_i)$ for $i \in [n-1]$, leading to a $k$-centrum cost of $d(a_1,g_n) = 1$. However, since the algorithm matches $a_1$ to $g_n$ with probability $p_1$, it matches agents in $\{a_1, \ldots, a_n\}$ to $g_n$ with the remaining probability $1-p_1$. Whenever an agent in $\{a_1, \ldots, a_n\}$ is matched to $g_n$, the $k$-centrum of this matching is $3-\varepsilon$, thus leading to an expected $k$-centrum cost of
\begin{align*}
    p_1 \cdot 1 + (1-p_1) \cdot (3-\varepsilon) = 3-\varepsilon -(2-\varepsilon) p_1.
\end{align*}
Consequently, the distortion is at least $3-\varepsilon$ for deterministic algorithms (since $p_1=0)$ and $3-2/n - \varepsilon$ for randomized algorithms, for any $\varepsilon > 0$.  
\end{proof}

\begin{remark}
Using an instance similar to the one depicted in \cref{fig:lower:kgeq2} and strategyproofness arguments (rather than orginality), \citet{caragiannis2024augmentation} proved the same lower bound of $3$ for the utilitarian social cost ($k=n$). Our lower bound is more general as it applies to all values of $k$. 
\end{remark}

\subsection{Distortion-optimal matching algorithm} \label{subsec:algorithm}
We now provide a deterministic algorithm that simultaneously achieves the optimal distortion of $3$ with respect to $\C{k}$ for any $k\in [n]$. Our algorithm consists of the following three key steps:
The first one is to appropriately partition the items into two sets $\Gin$ and $\Gout$ using the ordinal preferences of the agents.
The second step determines the true ordering (up to reversal) of the items in $\Gin$, and it also uses this ordering to define an ordering over the set of agents. 
The third step matches the items in $\Gin$ to agents using the aforementioned orderings, and then arbitrarily matches the remaining items to the remaining agents. 
The rest of this section provides more details regarding the implementation of these steps. A description of the algorithm using pseudocode is given as Algorithm~\ref{algname:order_match}. 

\smallskip
\noindent 
{\bf Step 1: Partitioning the items into $\Gin$ and $\Gout$.} \ \\
Let $G_+ := \{g \in G: \plu(g) \geq 1 \}$ be the subset of items in $G$ that have a positive plurality score. By considering each agent's ranking of the items in $G_+$, at most two items in $G_+$ can be ranked last by the agents (these would be the two extreme items in $G_+$, i.e., the leftmost and the rightmost one). We use $g_\ell$ and $g_r$ to denote these items, and without loss of generality\footnote{Note that both the behavior of our algorithm and the analysis is invariant to the reversal of agent and item locations; so, we only fix $g_\ell$ to be the leftmost item to make the exposition more intuitive.}, we assume that the former is the leftmost item in $G_+$ and the latter is the rightmost one. If every agent ranks the same item of $G_+$ last, that means $|G_+| = 1$ and $g_\ell = g_r$. 

Since $g_\ell,g_r\in G_+$, both of these items have at least one agent who ranks them at the top. Given two agents $a_i$ and $a_j$ such that $\fav(a_i) = g_\ell$ and $\fav(a_j) = g_r$, 
we define the following subset of items:
\begin{equation*}
\Gin(a_i, a_j) := \{g \in G: g \succ_{a_i} g_r \text{ or } g \succ_{a_j} g_\ell\}. 
\end{equation*}
We define $(a_\ell, a_r) \in \arg\max_{a_i,a_j}|\Gin(a_i,a_j)|$ to be two agents that maximize the size of $\Gin(a_i, a_j)$, we henceforth partition the set of agents into the following two sets:
\begin{equation}\label{def:GinGout}
    \Gin := \Gin(a_\ell, a_r) ~~~~\text{and}~~~~ \Gout := G \setminus \Gin. 
\end{equation}

In words, $\Gin$ consists of the items that either agent $a_\ell$ prefers over $g_r$ or agent $a_r$ prefers over $g_\ell$, and $\Gout$ consists of all the remaining items. Note that defining $(a_\ell, a_r) \in \arg\max_{a_i,a_j}|\Gin(a_i,a_j)|$ is vital to achieving a distortion of 3. If $a_\ell$ and $a_r$ were to be chosen arbitrarily among the agents whose favorite items are $g_\ell$ and $g_r$, a distortion of 3 would not be achieved with respect to $\C{k}$ for any $k \in [n]$ (see \cref{obs:tiebreak} in the appendix).

Note that if $g_\ell=g_r$, i.e., there is only one item in $G_+$, then $\Gin=\varnothing$ since no other item is preferred over that item.

\begin{lemma}\label{lem:G+SubsetGin}
If $\Gin\neq\varnothing$, then $G_+\subseteq \Gin$.
\end{lemma}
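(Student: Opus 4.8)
We want to show that whenever $\Gin \neq \varnothing$, every item with positive plurality score lies in $\Gin$. The plan is to argue contrapositively at the level of individual items: take an arbitrary $g \in G_+$ and show $g \in \Gin = \Gin(a_\ell,a_r)$, i.e., that either $g \succ_{a_\ell} g_r$ or $g \succ_{a_r} g_\ell$. Recall that $a_\ell$ and $a_r$ are chosen so that $\fav(a_\ell)=g_\ell$, $\fav(a_r)=g_r$, and $(a_\ell,a_r)$ maximizes $|\Gin(a_i,a_j)|$ over all such pairs.

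First I would handle the boundary items. Since $\fav(a_\ell)=g_\ell$, the item $g_\ell$ is ranked first by $a_\ell$, so in particular $g_\ell \succ_{a_r} g_\ell$ fails but $g_\ell \succeq_{a_\ell} g_r$; more carefully, if $g_\ell = g_r$ then $\Gin = \varnothing$ by the remark preceding the lemma, contradicting the hypothesis, so we may assume $g_\ell \neq g_r$, hence $g_\ell \succ_{a_\ell} g_r$ and $g_r \succ_{a_r} g_\ell$, placing both $g_\ell$ and $g_r$ in $\Gin$. Now fix $g \in G_+$ with $g \notin \{g_\ell, g_r\}$. Let $a$ be an agent with $\fav(a)=g$ (one exists since $\plu(g)\geq 1$). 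By the definition of $g_\ell$ and $g_r$ as the two items of $G_+$ that can be ranked last within $G_+$, the agent $a$ does not rank $g$ last among $G_+$ — indeed $g$ is ranked \emph{first} by $a$ among all of $G$. Using Lemma \ref{lem:favorite_to_left}, $g$ lies strictly between $g_\ell$ and $g_r$ on the line (all agents favoring $g_\ell$ are left of all favoring $g$, who are left of all favoring $g_r$, so the items must be ordered $g_\ell < g < g_r$, up to our fixed orientation).

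The heart of the argument is then an exchange/optimality step: suppose for contradiction that $g \notin \Gin(a_\ell,a_r)$, i.e., $g_r \succ_{a_\ell} g$ and $g_\ell \succ_{a_r} g$. Since $a$ ranks $g$ first, $g \succ_a g_r$ and $g \succ_a g_\ell$, so $g \in \Gin(a,a_r)$ and $g \in \Gin(a_\ell,a)$ — that is, swapping in $a$ on either side captures the new item $g$. I would then show this swap does not lose any previously captured item, contradicting the maximality of $|\Gin(a_\ell,a_r)|$. Concretely, I expect to argue: because $g$ lies between $g_\ell$ and $g_r$ and $\fav(a)=g$, the agent $a$ is positioned between $a_\ell$ and (the relevant boundary), so that any item $h$ with $h \succ_{a_\ell} g_r$ also satisfies... — and here is where the care is needed: it is cleaner to pick whichever of the two pairs $(a,a_r)$ or $(a_\ell,a)$ to compare against, using the position of $g$ relative to the midpoints. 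The key geometric fact is that $g \succ_{a_\ell} g_r$ would already put $g$ in $\Gin(a_\ell,a_r)$; since it does not, $a_\ell$ must be closer to $g_r$ than to $g$, which (with $g_\ell < g < g_r$) pins down $a_\ell$'s position enough to show $\Gin(a,a_r) \supseteq \Gin(a_\ell,a_r)$, while $g \in \Gin(a,a_r) \setminus \Gin(a_\ell,a_r)$ — strict containment, contradicting maximality.

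The main obstacle I anticipate is this last monotonicity claim: verifying that replacing $a_\ell$ by $a$ (when $a$ sits between $a_\ell$ and $g$) cannot drop any item $h$ that $a_\ell$ preferred to $g_r$. One must check that $h \succ_{a_\ell} g_r$ together with the derived position of $a$ implies $h \succ_a g_r$ — this requires knowing $a$ is on the correct side, which follows from $\fav(a)=g$, Lemma \ref{lem:favorite_to_left}, and the failure of $g \succ_{a_\ell} g_r$. Symmetrically for the $a_r$ side. I would organize the write-up by first establishing $g_\ell < g < g_r$ and the corresponding agent order, then doing the single exchange argument in the one nontrivial case, invoking symmetry for the mirror case.
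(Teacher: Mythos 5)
There is a genuine gap at the heart of your plan: the exchange/maximality step is not available. The pair $(a_\ell,a_r)$ is chosen as an $\arg\max$ of $|\Gin(a_i,a_j)|$ only over pairs of agents with $\fav(a_i)=g_\ell$ and $\fav(a_j)=g_r$; the set $\Gin(a_i,a_j)$ is not even defined for other pairs. Your agent $a$ has $\fav(a)=g\notin\{g_\ell,g_r\}$, so $(a,a_r)$ and $(a_\ell,a)$ are not admissible candidates, and showing $|\Gin(a,a_r)|>|\Gin(a_\ell,a_r)|$ would yield no contradiction with the tie-breaking rule. Moreover, the monotonicity claim you flag as the ``care needed'' step, $\Gin(a,a_r)\supseteq\Gin(a_\ell,a_r)$, is not true in general: an item $h$ far to the left that $a_\ell$ barely prefers to $g_r$ can fail to be preferred to $g_r$ by an agent located further right (e.g.\ $a_\ell$ at $0$, $h$ at $-4.9$, $g_r$ at $5$, $a$ at $1$). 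So as written the contradiction never materializes.

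The fix is one line away from a fact you already derive. If $g\in G_+\setminus\{g_\ell,g_r\}$ and $g_r\succ_{a_\ell}g$, then (since $g_\ell<g<g_r$) $a_\ell$ lies to the right of the midpoint of $g$ and $g_r$, hence strictly to the right of $g$; but then $d(a_\ell,g)<d(a_\ell,g_\ell)$, contradicting $\fav(a_\ell)=g_\ell$. So every $g\in G_+\setminus\{g_\ell\}$ satisfies $g\succ_{a_\ell}g_r$ and lies in $\Gin$, and $g_\ell,g_r\in\Gin$ via their fans, exactly your boundary argument. This is the paper's proof: it is a direct geometric argument valid for \emph{any} agents with favorites $g_\ell$ and $g_r$, and it never invokes the maximality of $|\Gin(a_\ell,a_r)|$ (that maximality is only needed elsewhere, e.g.\ in \cref{lem:star_max_gin}).
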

\begin{proof}
Note that all items in $G_+$ are between $g_\ell$ and $g_r$, $g_r$ is to the right of $g_\ell$, and $\fav(a_\ell)=g_\ell$, so all items in $G_+$ except $g_\ell$ must be to the right of $a_\ell$ because if any were to the left of $a_\ell$ they would either be to the left of $g_\ell$ or between $a_\ell$ and $g_\ell$ which are contradictions, therefore all items in $G_+$ except $g_\ell$ must be between $a_\ell$ and $g_r$, so $a_\ell$ prefers them to $g_r$; $g_\ell$ and $g_r$ must be in $\Gin$ because they are $a_\ell$'s and $a_r$'s favorite item respectively, and $a_\ell$ and $a_r$ prefer their favorite item to $g_r$ and $g_\ell$ respectively.   
\end{proof}

\paragraph{Step 2(a): Ordering $\pi_g$ of the items in $\Gin$.} \ \\ 
Although we cannot fully uncover the true ordering of all items in $G$ given the ordinal preferences of the agents, we can extract the true ordering $\pi_g$ of the items in $\Gin\subseteq G$ using the algorithm of \cite{EF14} which do so for a superset of $\Gin$. 

\begin{lemma}\label{lem:order}
    The true ordering $\pi_g$ of the items in $\Gin$ can be computed.
\end{lemma}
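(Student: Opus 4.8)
The plan is to exhibit an explicit procedure that recovers a linear order on $\Gin$ which coincides with the true left-to-right order of these items on the line, up to an overall reversal (which is harmless, since the algorithm and its analysis are invariant under reversal). The natural tool, as the excerpt hints, is the order-recovery algorithm of \cite{EF14}, which from the agents' ordinal preferences reconstructs the true order of any set of ``candidates'' (here, items) that can be linearly arranged consistently with the reported rankings. So the first step is to identify a superset $\widehat G \supseteq \Gin$ of items on which that algorithm is guaranteed to succeed, apply it to obtain the true order of $\widehat G$, and then restrict this order to $\Gin$.

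First I would argue that $\Gin$ is an \emph{interval} of items on the line, i.e., if two items lie in $\Gin$ then every item between them does too. This follows from the definition $\Gin = \{g : g \succ_{a_\ell} g_r \text{ or } g \succ_{a_r} g_\ell\}$: the set $\{g : g \succ_{a_\ell} g_r\}$ is exactly the set of items strictly closer to $a_\ell$ than $g_r$ is, which is a sub-interval of the line around $a_\ell$; similarly $\{g : g \succ_{a_r} g_\ell\}$ is a sub-interval around $a_r$; and since (by \cref{lem:G+SubsetGin} and the fact that $g_\ell, g_r \in G_+ \subseteq \Gin$ with $g_\ell$ leftmost and $g_r$ rightmost in $G_+$) these two intervals overlap — both contain $g_\ell$ and $g_r$ — their union is again an interval. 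Being an interval of items is what makes the order well-defined and is also the structural property that lets the \cite{EF14} reconstruction go through.

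Second, I would observe that membership in $\Gin$ is \emph{decidable} from the ordinal profile alone: we know $G_+$ (from plurality scores), we know $g_\ell$ and $g_r$ (the two items of $G_+$ ranked last by someone), we can identify the maximizing pair $(a_\ell, a_r)$ by brute force over the $O(n^2)$ candidate pairs, and then $\Gin(a_\ell,a_r)$ is read directly off the rankings of $a_\ell$ and $a_r$. Hence the set $\Gin$ — but a priori not its internal order — is in hand. To get the order, feed $\Gin$ (or, if needed for the \cite{EF14} guarantee, the slightly larger set $\widehat G$ on which that algorithm provably recovers the true order) into the reconstruction algorithm; since every agent's restriction of $\succ$ to an interval of the line is single-peaked / single-crossing in the induced order, the algorithm returns precisely $\pi_g$, up to reversal.

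\textbf{Main obstacle.} The crux is verifying the hypotheses under which \cite{EF14}'s order-recovery procedure is guaranteed to output the \emph{true} order rather than merely \emph{some} consistent order — in particular, ruling out the degenerate small cases ($|\Gin| \le 2$, or $\Gin = \varnothing$ when $g_\ell = g_r$, where the statement is trivial or vacuous), and checking that the relevant agents' rankings, restricted to $\Gin$, contain enough information (e.g., that the items of $\Gin$ are the favorites of \emph{enough} agents, or that $a_\ell$ and $a_r$ together ``see through'' the whole interval) to pin the order down uniquely. I expect the bulk of the work to be this bookkeeping about which items' relative order is forced by the preferences of $a_\ell$, $a_r$, and the agents whose favorites lie in $G_+$, rather than any deep new idea.
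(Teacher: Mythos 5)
Your high-level plan is the same as the paper's (apply the order-recovery algorithm of \cite{EF14} to a superset of $\Gin$ and restrict), but the step you defer as ``the main obstacle'' is in fact the entire substantive content of the lemma, and the justification you sketch for it does not work. The guarantee in \cite{EF14} is not that the true order is recoverable on any interval of items on which preferences are single-peaked; it is that the order is uniquely determined (up to reversal) on the set $C^+$ of items on which the \emph{true leftmost and rightmost agents disagree}, i.e., items $g$ for which there is some $g'$ with the leftmost agent preferring $g$ to $g'$ and the rightmost agent preferring $g'$ to $g$. Single-peaked consistency alone does not pin down relative positions: if every agent ranks $g$ above $g'$ (e.g., both items lie outside the agents' span, or $g'$ is far away on an unknown side), the profile is consistent with either relative order of $g$ and $g'$, even though such items can form an interval. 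So ``$\Gin$ is an interval and the restricted preferences are single-peaked'' is not a sufficient hypothesis, and one genuinely has to prove the containment $\Gin \subseteq C^+$.

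That containment is what the paper proves, and it requires a concrete geometric argument you do not supply: the set $\Gin$ is defined via the algorithm's chosen agents $a_\ell, a_r$ (top-ranking $g_\ell$ and $g_r$ and maximizing $|\Gin|$), which need not be the true extreme agents, so one must transfer the relevant preference to the true leftmost agent $a_\ell^*$ (and symmetrically $a_r^*$). Specifically, for $g \in \Gin$ with $g \succ_{a_\ell} g_r$ one shows $g$ lies to the left of $g_r$ and then, by a case analysis on whether $g$ is left or right of $a_\ell^*$, that $d(a_\ell^*, g) \le d(a_\ell^*, g_r)$, so $a_\ell^*$ prefers $g$ to $g_r$ while $a_r^*$ (whose favorite is $g_r$) prefers $g_r$ to $g$; hence they disagree and $g \in C^+$, after which the uniqueness statements of \cite{EF14} give the true order of $\Gin$ up to reversal. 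Without this argument (or an equivalent one), your proposal identifies the right tool and correctly handles the degenerate case $g_\ell = g_r$, but leaves the lemma unproven; the intervalhood observation, while true (modulo the small slip that neither defining set contains both $g_\ell$ and $g_r$, only their union covers the segment between them), is not the property the reconstruction guarantee needs.
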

\begin{proof}
   In the proof of their Theorem 1, \citet{EF14} define a set $C^+$ consisting of the items colored by their algorithm.
   Formally, $C^+$ is defined as the set of items on which the leftmost and the rightmost agents do not agree. This means that, for a pair of items $g$ and $g'$, if the leftmost agent prefers $g$ to $g'$ and the rightmost agent prefers $g'$ to $g$, then both $g$ and $g'$ are included in $C^+$. Their algorithm computes a complete ordering of this set (up to reversal). We argue that $\Gin \subseteq C^+$ by showing that for each $g \in \Gin$ there exists $g' \in G$ that the leftmost and the rightmost agents disagree on $g$ and $g'$. Hence, the algorithm of \citet{EF14} outputs a complete ordering of the items in $\Gin$. 
   
   Let $a_\ell^*$ be the leftmost agent and $a_r^*$ be the rightmost agent. If $a_\ell^*$ is to the right of $g_r$ or $a_r^*$ is to the left of $g_\ell$, then $g_\ell=g_r$ and $\Gin = \varnothing \subseteq C^+$. Otherwise, observe that $\fav(a_\ell)=\fav(a_\ell^*)=g_\ell$ and $\fav(a_r)=\fav(a_r^*)=g_r$. Consider $g \in \Gin$ and assume that $g \succ_{a_\ell} g_r$. Because $a_\ell$ is to the left of $g_r$ and prefers $g$ to $g_r$, $g$ must be to the left of $g_r$. If $g$ is to the right of $a_\ell^*$ then $a_\ell^*$ must prefer $g$ to $g_r$ because $g$ is between $a_\ell^*$ and $g_r$. If $g$ is to the left of $a_\ell^*$ then $d(a_\ell^*,g) \leq d(a_\ell,g)$ because $a_\ell$ is to the right of $a_\ell^*$. $a_\ell$ preferring $g$ to $g_r$ implies $d(a_\ell,g) \leq d(a_\ell,g_r)$, and $a_\ell$ being between $a_\ell^*$ and $g_r$ implies $d(a_\ell,g_r) \leq d(a_\ell^*,g_r)$. Thus, $d(a_\ell^*,g) \leq d(a_\ell^*,g_r)$, which implies $a_\ell^*$ prefers $g$ to $g_r$. Since $g_r$ is $a_r^*$'s favorite item, $a_r^*$ prefers $g_r$ to $g$. Hence $a_\ell^*$ and $a_r^*$ disagree which gives us the desired condition.

Now, in Proposition 1 of their paper, \citet{EF14} claim that the order that can be implied from single-peaked preferences on the voters is unique up to reversal. In addition, in the proof of Theorem 1, they show that any mapping of the voters to $\mathbb R$ that is consistent with their ordering, leads to the same ordering on $C^+$, which means the ordering of $C^+$ computed by their algorithm is unique (up to reversal) and gives us the true relative ordering of $\Gin$.
\end{proof}

\smallskip
\noindent 
{\bf Step 2(b): Ordering $\pi_a$ of the agents in $A$.} \ \\
Given the ordering $\pi_g$ of the items in $\Gin$, we also define an ordering of all the agents in $A$ based on what their top item is. Note that every agent's top item is in $G_+$ which, by \cref{lem:G+SubsetGin} is a subset of $\Gin$, so by \Cref{lem:favorite_to_left}, $\pi_g$ implies a partial order over $A$. First, we let $\tilde \pi_a$ be the partial ordering of the agents in $A$ such that $\tilde \pi_a(a_i)  = \tilde \pi_a(a_j)$ if $\fav(a_i) = \fav(a_j)$ and $\tilde \pi_a(a_i)  > \tilde \pi_a(a_j) $ if $ \pi_g(\fav(a_i)) > \pi_g(\fav(a_j))$. Then, we define a total order $\pi_a$ by breaking the ties of the partial order in some arbitrary way.

\smallskip
\noindent 
{\bf Step 3: Matching agents to items using $\pi_a$ and $\pi_g$.} \ \\
Our algorithm then determines how to match the items to agents starting with the items in $\Gin$. It considers them based on the ordering $\pi_g$ and assigns them to the agents using the ordering of $\pi_a$. Once all the items in $\Gin$ have been assigned, our algorithm arbitrarily matches the items in $\Gout$ to the agents that remain unmatched.

\begin{algorithm}[t]
    \caption{OrderMatch}
    \crefformat{mech1}{#2OrderMatch#3}

    \label{algname:order_match}
    \label{algnum:order_match}
    \label[mech1]{algname:order_match}
    \label[mechanism]{algnum:order_match}

    \KwIn{Sets $A$ and $G$ and ordinal profile $\bsucc$
    }
    \KwOut{A matching $M$ between $A$ and $G$}
    
    Identify the extreme items $g_\ell$ and $g_r$ using $\bsucc$.

    Partition $G$ into $\Gin$ and $\Gout$ according to \cref{def:GinGout}.
    Compute ordering $\pi_g$ over $\Gin$ and $\pi_a$ over $A$.

    Match the $i$th item in $\pi_g$ to the $i$th agent in $\pi_a$.

    Arbitrarily match items in $\Gout$ to unmatched agents.
\end{algorithm}

\subsection{Analysis of \cref{algname:order_match}}\label{sec:analysis}
We now prove the main result of this section. 
\begin{theorem}\label{thm:distortion(UB}
Algorithm~\cref{algname:order_match} achieves a distortion of 3 with respect to the $k$-centrum cost $\SC_k$ for any $k\in [n]$.
\end{theorem}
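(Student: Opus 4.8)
The plan is to compare the matching $M$ returned by \cref{algname:order_match} against the well-structured optimum $M^*$ of \cref{lem:opt_structure} and to exhibit a \emph{bijection} $\phi:A\to A$ with
\[
d\big(a,M(a)\big)\;\le\;3\,d\big(\phi(a),M^*(\phi(a))\big)\qquad\text{for every }a\in A .
\]
A single such inequality already gives the theorem for \emph{all} $k$ simultaneously: if $x_a\le y_{\phi(a)}$ for a bijection $\phi$, then the $q$-th largest of $(x_a)_{a}$ is at most the $q$-th largest of $(y_b)_b$ for every $q$, so
\[
\C{k}(M)=\sum_{q=1}^{k}{\max_{a\in A}}^{q}\,d\big(a,M(a)\big)\;\le\;3\sum_{q=1}^{k}{\max_{b\in A}}^{q}\,d\big(b,M^*(b)\big)=3\,\C{k}(M^*),
\]
and $M^*$ minimizes $\C{k}$. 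Thus the whole proof reduces to constructing $\phi$.

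Next I would pin down the geometry of the partition from Step~1. By \cref{lem:G+SubsetGin} we have $G_+\subseteq\Gin$ whenever $\Gin\neq\varnothing$, and since ``$g\succ_{a_i}g_r$'' and ``$g\succ_{a_j}g_\ell$'' each describe an interval of positions on the line, \cref{lem:range,lem:favorite_to_left} imply that $\Gin$ is a \emph{contiguous block} of items in the true left-to-right order, while $\Gout$ is a left tail $L$ together with a right tail $R$, all of whose items have plurality score $0$. The crucial point is that the $\arg\max$ choice of $(a_\ell,a_r)$ makes $\Gin$ equal to $\Gin(a_\ell^{*},a_r^{*})$ for the true extreme agents $a_\ell^{*},a_r^{*}$: one has $\fav(a_\ell^{*})=g_\ell$ and $\fav(a_r^{*})=g_r$, and moving $a_i$ leftward only enlarges $\{g:g\succ_{a_i}g_r\}$, so $\Gin(a_\ell^{*},a_r^{*})$ contains every $\Gin(a_i,a_j)$ and hence is the set selected. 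Consequently every $g\in L$ satisfies $d(a_\ell^{*},g)\ge d(a_\ell^{*},g_r)$ (symmetrically for $R$), i.e.\ the items \cref{algname:order_match} cannot order are genuinely far from \emph{every} agent — which forces $\C{k}(M^*)$ to be large whenever $\Gout\neq\varnothing$, since $M^*$ must match the extreme agents to these far items. In the easy case $\Gout=\varnothing$, Step~2 recovers the true order of all of $G$, $\pi_a$ is a true left-to-right order of $A$, and \cref{algname:order_match} reproduces $M^*$ up to transpositions of agents sharing a favorite item; such transpositions do not change $\C{k}$ (by the swap argument in the proof of \cref{lem:opt_structure}), so $M$ and $M^*$ have equal cost and $\phi$ can be taken to be the identity.

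Finally, when $\Gout\neq\varnothing$, on the agents that \cref{algname:order_match} assigns to $\Gin$ the produced matching is exactly the greedy (non-crossing) pairing of the $|\Gin|$ leftmost agents with the $|\Gin|$ items of $\Gin$ in true order, i.e.\ $M^*$ ``shifted by $|L|$''; here I would let $\phi$ implement this shift, so the cost of each such agent changes only by a distance between consecutive optimal items, which telescopes against a single $M^*$-cost. For a pair $(a,g)$ with $g\in\Gout$ created arbitrarily by the algorithm I would use the triangle inequality through $a$'s optimal item and through the optimal partner of $g$,
\[
d\big(a,M(a)\big)\le d\big(a,M^*(a)\big)+d\big(M^*(a),\,M^*(M(a))\big)+d\big(M^*(M(a)),\,M(a)\big),
\]
and charge $a$ to an agent realizing a large $M^*$-cost, exploiting the ``genuinely far'' property of $\Gout$. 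The main obstacle — and the precise place where the factor $3$ (rather than something larger) is won — is controlling the middle term $d(M^*(a),M^*(M(a)))$: one must show, using that $M^*$ is the non-crossing greedy matching on the line and that $\Gin$ is an interval containing $G_+$, that this distance is dominated by one of the $M^*$-costs not already spoken for, so that $\phi$ stays injective (hence bijective). This is exactly where maximality of $|\Gin|$ is indispensable — \cref{obs:tiebreak} shows that with an arbitrary tie-break the middle term can blow up and the bound fails — and where the bookkeeping (the $|L|$-shift, the within-group ties of $\pi_a$, and pairing the $|L|+|R|$ outer agents with the $|L|+|R|$ outer items) must be carried out carefully to make $\phi$ a genuine bijection.
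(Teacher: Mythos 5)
Your opening reduction asks for strictly more than the theorem: a bijection $\phi$ with $d(a,M(a))\le 3\,d(\phi(a),M^*(\phi(a)))$ for every agent is a sorted pointwise domination statement, which is stronger than the family of inequalities $\C{k}(M)\le 3\,\C{k}(M^*)$ for all $k$, and it is in fact \emph{false} for \cref{algname:order_match}. Concretely, put items $g_1,g_2,g_3$ at $0$, $2$, $100$ and agents $a_1,a_2,a_3$ at $\eta$, $1-\delta$, $99$ with $\eta,\delta>0$ tiny. Then $\fav(a_1)=\fav(a_2)=g_1$, $\fav(a_3)=g_3$, so $\Gin=G$, $\Gout=\varnothing$, and in Step 2(b) the tie between $a_1$ and $a_2$ may be broken by placing $a_2$ first; the algorithm then outputs $a_2\mapsto g_1$, $a_1\mapsto g_2$, $a_3\mapsto g_3$ with costs $1-\delta$, $2-\eta$, $1$, while $M^*$ has costs $\eta$, $1+\delta$, $1$. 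Every algorithm cost exceeds $3\eta$, yet any bijection must send some agent to the agent whose optimal cost is $\eta$, so no such $\phi$ exists (even though $\C{k}(M)\le 3\,\C{k}(M^*)$ holds for all $k$ here). The same example refutes your ``easy case'' $\Gout=\varnothing$: the algorithm's output is $M^*$ only up to permutations inside a plurality group, and these are \emph{crossings} of $M^*$, which by the swap argument of \cref{lem:opt_structure} can only be worse, not cost-preserving; taking $\phi$ to be the identity fails on $a_1$ above. (A degenerate two-agent version with $\Gin=\varnothing$, where Step 3 matches arbitrarily, breaks the bijection claim as well.) Because the target of your reduction is unattainable, the later bookkeeping (the $|L|$-shift, charging $\Gout$-pairs while keeping $\phi$ injective) cannot be carried out, and the vague step ``$\Gout\neq\varnothing$ forces $\C{k}(M^*)$ to be large'' is not a substitute for it.

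What does work — and is the route the paper takes — is a non-bijective charging with bounded multiplicity. One encodes the output as a permutation graph with an edge $(a_i,a_j)$ whenever the algorithm gives $a_i$ the item $M^*(a_j)$, shows it has no backward edges (\cref{lem:back-edges}), passes to a no-forward-edge graph at weakly larger $k$-centrum cost (\cref{lem:forward-edges}), and proves the per-edge bound $d(a_i,M^*(a_j))\le d(a_i,M^*(a_i))+2\,d(a_j,M^*(a_j))$ (\cref{lem:upper-bound-edges}), using exactly the geometric facts you identified (maximality of $|\Gin|$ via the true extreme agents, cf.\ \cref{lem:star_max_gin}). Summing over the $k$ costliest edges, each agent is charged at most once with weight $1$ (as a tail) and once with weight $2$ (as a head), which yields $3\,\C{k}(M^*)$. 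If you wish to salvage your plan, you must replace the bijection by such a two-sided charge; the factor $3$ cannot be attributed to a single optimal cost per agent.
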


While our algorithm only has access to the ordinal preferences of the agents through $\bsucc$, our analysis in this section uses the exact locations of the agents and items (e.g., to further partition the agents and items using this information, as well as to define a graph induced by each matching).
For example, we further partition the items in $\Gout$ into $\Gout^\ell$ and $\Gout^r$ such that 
$\Gout^\ell$ consists of the items in $\Gout$ that are to the left of the items in $\Gin$; similarly, $\Gout^r$ consists of the items in $\Gout$ that are to the right of those in $\Gin$. 
Also, let $\Ain=\{a\in A : M^*(a)\in \Gin\}$ be the set of agents whose optimal items (according to the optimal matching $M^*$) are in $\Gin$ and $\Aout=A\setminus \Ain$ be the set of all the remaining agents, i.e., the agents whose optimal items are in $\Gout$. $\Aout$ is further partitioned into $\Aout^\ell$ and $\Aout^r$, which are the sets of agents whose optimal items are in $\Gout^\ell$ and $\Gout^r$, respectively.

\smallskip
\noindent 
{\bf Permutation graph $\Sg_M$ induced by matching $M$.} \ \\
Given an instance with optimal matching $M^*$, we define the \emph{permutation-graph} of any matching $M$ as $\Sg_M=(A,E)$, where each vertex corresponds to an agent in $A$, and each directed edge $(a_i,a_j)$ exists in $E$ if and only if $M(a_i)=M^*(a_j)$ (i.e. there exists an edge from agent $a_i$ to the agent $a_j$ if and only if $M$ assigns $a_j$'s optimal item to $a_i$; note that it may be the case that $a_i=a_j$). In other words, this graph captures the permutation of items from $M^*$ to $M$. Note that for any given instance (i.e., for a fixed $M^*$) every permutation graph $\Sg_M$ corresponds to a unique matching $M$. Also, note that every vertex of a permutation-graph has in-degree and out-degree equal to $1$ and $\Sg_M$ is a collection of disjoint cycles. 

To categorize some of the edges of this graph, we further refine the set of agents $\Ain$ as follows: Let $A_x \subseteq \Ain$ be the subset of agents in $\Ain$ whose favorite item is the $x$th-leftmost item in $\Gin$. 
Due to \cref{lem:favorite_to_left}, every agent in $A_x$ is to the left of every agent in $A_y$ for any $x < y$. We refer to an edge $(a_i,a_j) \in E$ as
\begin{itemize}
\item \textit{forward} if $a_i \in A_x$ and $a_j \in A_y$ for some $x<y$.
\item \textit{backward} if $a_i \in A_x$ and $a_j \in A_y$ for some $x>y$.
\item \textit{internal} if $a_i \in A_x$ and $a_j \in A_x$. 
\item \textit{inward} if $a_i \in \Aout$ and $a_j \in \Ain$. 
\end{itemize}
See the example in \Cref{fig:S-graph}. 

For any fixed instance, each graph $\Sg_M$ corresponds to a distinct matching $M$, so we can also express the $k$-centrum of a graph $\Sg_M$ and its corresponding matching $M$ as: 
$$\SC_k(\Sg_M) = \SC_k(M) = \sum_{q=1}^k \max\limits_{(a_i,a_j) \in E}^{\qquad q} d(a_i,M^*(a_j))$$ 

\begin{figure*}
    \centering
    \scalebox{0.85}{
    \begin{tikzpicture}[node distance={10mm}, thick, c/.style = {draw, circle, minimum size=2mm}, s/.style = {draw, rectangle, minimum size=2mm}, t/.style = {draw, regular polygon, regular polygon sides=3, minimum size=2mm]}, invisible/.style = {minimum size=0mm, inner sep=0mm, text width=0mm, outer sep=0mm, draw=none}]
        \node[invisible] (left) {};
        \node[c] [right of=left] (aol1) {};
        \node[c] [right of=aol1] (aol2) {};
        \node[c] [right of=aol2] (a1_1) {};
        \node[c] [right of=a1_1] (a1_2) {};
        \node[c] [right of=a1_2] (a1_3) {};
        \node[c] [right of=a1_3] (a1_4) {};
        \node[c] [right of=a1_4] (a1_5) {};
        \node[c] [right of=a1_5] (a2_1) {};
        \node[c] [right of=a2_1] (a2_2) {};
        \node[c] [right of=a2_2] (a2_3) {};
        \node[c] [right of=a2_3] (aor1) {};
        \node[c] [right of=aor1] (aor2) {};
        \node[invisible] [right of=aol2, node distance=3mm] (raol2) {};
        \node[invisible] [left of=a1_1, node distance=3mm] (la1_1) {};
        \node[invisible] [right of=a1_5, node distance=3mm] (ra1_5) {};
        \node[invisible] [left of=a2_1, node distance=3mm] (la2_1) {};
        \node[invisible] [right of=a2_3, node distance=3mm] (ra2_3) {};
        \node[invisible] [left of=aor1, node distance=3mm] (laor1) {};
        \node[invisible] [right of=aor2] (right) {};
        \node[invisible] [below of=left, node distance=5mm] (b_left) {};
        \node[invisible] [below of=raol2, node distance=5mm] (b_raol2) {};
        \node[invisible] [below of=la1_1, node distance=5mm] (b_la1_1) {};
        \node[invisible] [below of=ra1_5, node distance=5mm] (b_ra1_5) {};
        \node[invisible] [below of=la2_1, node distance=5mm] (b_la2_1) {};
        \node[invisible] [below of=ra2_3, node distance=5mm] (b_ra2_3) {};
        \node[invisible] [below of=right, node distance=5mm] (b_right) {};
        \node[invisible] [below of=laor1, node distance=5mm] (b_laor1) {};
        \draw[dashed, RedOrange] (raol2) -- (b_raol2);
        \draw[->, dashed, RedOrange] (b_raol2) -- node[midway, below, black] {$\Aout^\ell$} (b_left);
        \draw[dashed, RedOrange] (la1_1) -- (b_la1_1);
        \draw[dashed, RedOrange] (b_la1_1) -- node[midway, below, black] {$A_1$} (b_ra1_5);
        \draw[dashed, RedOrange] (b_ra1_5) -- (ra1_5);
        \draw[dashed, RedOrange] (la2_1) -- (b_la2_1);
        \draw[dashed, RedOrange] (b_la2_1) -- node[midway, below, black] {$A_2$} (b_ra2_3);
        \draw[dashed, RedOrange] (b_ra2_3) -- (ra2_3);
        \draw[dashed, RedOrange] (laor1) -- (b_laor1);
        \draw[->, dashed, RedOrange] (b_laor1) -- node[midway, below, black] {$\Aout^r$} (b_right);
        \draw[NavyBlue, ->] (aol2) to [in=90, out=90] node[midway, above] {Inward} (a1_3);
        \draw[YellowOrange, ->] (a1_2) to [in=90, out=90] node[midway, above] {Internal} (a1_1);
        \draw[RawSienna, ->] (a1_5) to [in=90, out=90] node[midway, above] {Forward} (a2_1);
        \draw[OliveGreen, ->] (a2_2) to [in=90, out=90] node[midway, above] {Backward} (a1_4);
        \draw[NavyBlue, ->] (aor2) to [in=90, out=90] node[midway, above] {Inward} (a2_3);
    \end{tikzpicture}
    }
    \caption{A snapshot of a graph $\Sg_M$ induced by some matching $M$, with its nodes appearing in the order of the corresponding agents' true locations on the line. The dashed lines below the nodes exhibit how these agents are partitioned into $\Aout^\ell$, $A_1$, $A_2$, and $\Aout^r$. A subset of the graph's edges appear above the nodes, labeled by their type (forward, backward, internal, and inward).}
    \label{fig:S-graph}
\end{figure*}
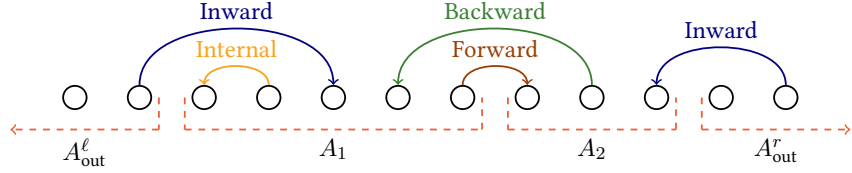

We now prove some properties about the permutation-graph $\Sg_M$ induced by our algorithm, which will be useful in proving the distortion bound. 

\begin{lemma}\label{lem:back-edges}
If $M$ is the matching computed by \cref{algname:order_match}, then the induced graph $\Sg_M$ does not include backward edges.
\end{lemma}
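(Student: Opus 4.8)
The plan is to argue by contradiction: suppose the matching $M$ returned by \cref{algname:order_match} induces a permutation graph $\Sg_M$ containing a backward edge $(a_i,a_j)$, i.e.\ $a_i\in A_x$, $a_j\in A_y$ with $x>y$ and $M(a_i)=M^*(a_j)$. The first step is to pin down, using the definition of $M^*$ (greedy left-to-right, \cref{lem:opt_structure}), how the optimal items are distributed among the sets $A_1,A_2,\dots$: since \cref{lem:favorite_to_left} orders the $A_x$'s left-to-right on the line and each agent in $A_x$ has the $x$th-leftmost item of $\Gin$ as its favorite, the greedy matching assigns to the agents of $A_1\cup\cdots\cup A_x$ exactly the $|A_1|+\cdots+|A_x|$ leftmost items. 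In particular $M^*(a_j)$, the optimal item of an agent in $A_y$ with $y<x$, lies strictly to the left of $M^*(a_i)$ for $a_i\in A_x$.

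The second, and main, step is to contradict this using how Step 3 of the algorithm assigns items in $\Gin$. The algorithm walks the items of $\Gin$ in the true order $\pi_g$ and hands them to agents in the order $\pi_a$; and $\pi_a$ was built (Step 2(b)) so that an agent whose favorite is a $\pi_g$-earlier item comes earlier in $\pi_a$ — that is, every agent of $A_1$ precedes every agent of $A_2$, which precedes every agent of $A_3$, etc., in $\pi_a$. Hence the $(|A_1|+\cdots+|A_{y}|)$ earliest agents in $\pi_a$ are exactly $A_1\cup\cdots\cup A_y$, and they receive exactly the $(|A_1|+\cdots+|A_y|)$ $\pi_g$-earliest items of $\Gin$. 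But I need to know that these $\pi_g$-earliest items of $\Gin$ are precisely the corresponding leftmost items assigned by $M^*$ to $A_1\cup\cdots\cup A_y$. This follows because $\pi_g$ is the \emph{true} order of $\Gin$ (\cref{lem:order}), and — by \cref{lem:G+SubsetGin} and the fact that $\Gout^\ell$ lies entirely to the left of $\Gin$ while $\Gout^r$ lies entirely to its right — the agents of $\Ain$ are matched by $M^*$ exactly to the items of $\Gin$, in left-to-right order. Therefore $M$ assigns to $a_i\in A_x$ (which is not among the first $|A_1|+\cdots+|A_y|$ agents of $\pi_a$, since $x>y$) an item that is \emph{not} among the first $|A_1|+\cdots+|A_y|$ items of $\pi_g$, hence an item strictly to the right of $M^*(a_j)$. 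This contradicts $M(a_i)=M^*(a_j)$, completing the argument.

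I expect the main obstacle to be the bookkeeping that certifies the exact correspondence ``first $|A_1|+\cdots+|A_y|$ agents of $\pi_a$ $\leftrightarrow$ first $|A_1|+\cdots+|A_y|$ items of $\pi_g$ $\leftrightarrow$ the $M^*$-images of $A_1\cup\cdots\cup A_y$.'' The subtlety is that $\pi_a$ is a \emph{total} order refining a partial order with ties inside each $A_x$, so I must be careful to only use the part of the order that is forced (the block structure $A_1\prec A_2\prec\cdots$), not the arbitrary tie-breaking; and I must be sure that no item of $\Gout$ ever interferes with a $\Gin$-agent in $M^*$, which is where the geometric placement of $\Gout^\ell$ and $\Gout^r$ relative to $\Gin$ is used. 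Everything else is a direct unwinding of the definitions of $M^*$, $\pi_g$, $\pi_a$, and Step 3.
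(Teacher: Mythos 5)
Your overall route is the same as the paper's (use that $\pi_g$ is the true left-to-right order of $\Gin$, that $\pi_a$ respects the block order of the $A_x$'s, and that $M^*$ is order-preserving, then compare positions), but the central bookkeeping claim you rely on is false as stated: it is not true that ``the $|A_1|+\cdots+|A_y|$ earliest agents in $\pi_a$ are exactly $A_1\cup\cdots\cup A_y$, and they receive exactly the $|A_1|+\cdots+|A_y|$ $\pi_g$-earliest items of $\Gin$.'' The order $\pi_a$ ranks \emph{all} of $A$, and by \cref{lem:G+SubsetGin} every agent's favorite item lies in $\Gin$, so agents of $\Aout$ (for instance agents of $\Aout^\ell$, whose favorite item is typically $g_\ell$) are interleaved into $\pi_a$, tied with --- and possibly tie-broken ahead of --- the block of $\Ain$-agents sharing that favorite; the sets $A_x$ are subsets of $\Ain$ only. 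Such $\Aout$-agents absorb some of the earliest items of $\pi_g$, which is precisely what creates inward and forward edges. Indeed, if your ``exactly'' claim held for every $y$, each block $A_y$ would receive precisely its own $M^*$-items, so $\Sg_M$ would contain no forward (or inward) edges at all --- contradicting the fact that the paper needs the entire \cref{algname:remove_forward_edges} reduction to remove forward edges. So, as written, there is a genuine gap: the step asserting the exact correspondence would fail.

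The lemma (and your final conclusion) survives with one-sided inequalities only, which is essentially the paper's terser argument. Suppose $(a_i,a_j)$ were a backward edge with $a_i\in A_x$, $a_j\in A_y$, $y<x$, and let $s$ be the position of $M^*(a_j)$ in $\pi_g$. Since $M^*$ is order-preserving (\cref{lem:opt_structure}), the $M^*$-owners of the first $s$ items of $\pi_g$ are the $s$ leftmost agents of $\Ain$, all weakly to the left of $a_j$; by (the contrapositive of) \cref{lem:favorite_to_left} their favorites sit at $\pi_g$-position at most $y$, so these agents lie in $A_1\cup\cdots\cup A_y$ and hence $s\le |A_1|+\cdots+|A_y|$. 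On the other hand, every agent of $A_1\cup\cdots\cup A_{x-1}\supseteq A_1\cup\cdots\cup A_y$ precedes $a_i$ in $\pi_a$, so $a_i$'s position in $\pi_a$ is strictly larger than $|A_1|+\cdots+|A_y|\ge s$; since Step 3 hands the $p$-th agent of $\pi_a$ the $p$-th item of $\pi_g$ (or, past $|\Gin|$, an item of $\Gout$), we get $M(a_i)\neq M^*(a_j)$, a contradiction. Replacing your ``exactly'' bookkeeping with this counting (which tolerates interleaved $\Aout$-agents, since they only push $a_i$ further back) makes the plan go through.
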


\begin{proof}
    If $\Gin$ is empty (i.e., $g_\ell = g_r$), then $\Sg_M$ cannot include any backward edges. If $\Gin$ is non-empty, then \cref{lem:order} shows that $\pi_g$ correctly orders the items in $\Gin$ from left to right. In the ordering $\pi_a$, agents are ordered earlier than other agents if their favorite item is to the left of the favorite item of those other agents according to $\pi_g$. For any $x < y$, \cref{lem:favorite_to_left} shows that all agents in $A_x$ are to the left of all agents in $A_y$. Thus, the optimal item of every agent in $A_x$ is to the left of the optimal item of every agent in $A_y$ due to \cref{lem:opt_structure}. Consequently, since all agents in $A_x$ are matched to items before the agents in $A_y$, no agent in $A_y$ can be matched to the optimal item of an agent in $A_x$, and thus there are no backward edges in $\Sg_M$.
\end{proof}

\paragraph{Reduction to graphs without forward edges.} \ \\
If $\Sg_M$ is the permutation-graph induced by the matching $M$ computed by the \cref{algname:order_match} algorithm, \cref{lem:back-edges} shows that this graph will not contain any backward edges. We now prove that, even though this graph can, in general, contain forward edges, we can without loss of generality assume that it does not. Specifically, we provide a reduction (\cref{algname:remove_forward_edges}) that takes as input any permutation-graph $\Sg_M$ without backward edges and transforms it into a graph $\Sg_{M'}$ that contains neither forward or backward edges, while ensuring that $\C{k}(\Sg_M) \leq \C{k}(\Sg_{M'})$ for any $k \in [n]$. This reduction iteratively removes inward or forward edges from the graph and replaces them with other inward or internal edges, until no forward edges remain.

\begin{theorem}\label{lem:forward-edges}
Let $M$ be the matching computed by \cref{algname:order_match} when given an ordinal profile $\bsucc$. 
If its permutation-graph $\Sg_M$ contains a forward edge, then there exists another matching $M'$ whose permutation-graph $\Sg_{M'}$ does not contain any forward or backward edges, and has weakly larger $k$-centrum cost, i.e., 
$\C{k}(M')\geq \C{k}(M)$ for any $k \in [n]$. 
\end{theorem}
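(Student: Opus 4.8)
The statement asserts that forward edges in the permutation graph $\Sg_M$ can be eliminated without decreasing the $k$-centrum cost, for every $k$ simultaneously. The natural strategy is to design the reduction $\textsc{RemoveForwardEdges}$ explicitly and argue correctness by an iterative/potential argument. I would proceed as follows. First, I would fix a forward edge $(a_i, a_j)$ with $a_i \in A_x$, $a_j \in A_y$, $x < y$; so $M$ assigns $a_j$'s optimal item $M^*(a_j)$ (which lies in the $y$th block of $\Gin$, i.e.\ roughly ``to the right'') to agent $a_i$ (who lives in the $x$th, more-left block). The key structural observation to exploit is that $a_i$'s own optimal item $M^*(a_i)$ is in a left block, and currently some \emph{other} edge $(a_{i'}, a_i)$ delivers $M^*(a_i)$ to some agent $a_{i'}$; since there are no backward edges (by \cref{lem:back-edges}), $a_{i'} \in A_{x'} \cup \Aout$ with $x' \le x$ — in particular $a_{i'}$ is weakly to the left of $a_i$. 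I would then perform a local ``rerouting'': reassign $M^*(a_i)$ back to $a_i$ (turning that pair into an internal or inward edge), and correspondingly reassign $M^*(a_j)$ to whichever agent previously received $M^*(a_i)$. Concretely, along the cycle of $\Sg_M$ containing $a_i$, I would splice out the forward edge and reconnect the cycle so that $a_i$ keeps its optimal item. This strictly decreases the number of forward edges (or at least a suitable potential, e.g.\ $\sum$ of the ``displacement'' $|\pi_a\text{-index gap}|$ over forward edges), so iterating terminates with a graph free of forward edges; combined with \cref{lem:back-edges} it is free of backward edges too.

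**The cost-monotonicity step.** The heart of the proof — and the step I expect to be the main obstacle — is showing that each rerouting step does \emph{not decrease} $\C{k}(\Sg_M)$ for any $k$. As in the proof of \cref{lem:opt_structure}, it suffices to show that the \emph{maximum} and the \emph{sum} of the distances of the affected agents to their (new) items both \emph{weakly increase} after the swap (so that every $k$-centrum value, being a sum of top-$k$ terms, does not decrease). The swap changes the match of exactly two agents, say $a_i$ and the agent $a_{i'}$ that loses $M^*(a_i)$: before, $a_i \mapsto M^*(a_j)$ and $a_{i'} \mapsto M^*(a_i)$; after, $a_i \mapsto M^*(a_i)$ and $a_{i'} \mapsto M^*(a_j)$. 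I need
\[
\max\{d(a_i, M^*(a_i)),\ d(a_{i'}, M^*(a_j))\} \ \ge\ \max\{d(a_i, M^*(a_j)),\ d(a_{i'}, M^*(a_i))\}
\]
and the analogous $\ge$ for the sum. This should follow from the linear order: we have the chain of positions (up to the usual left/right symmetry) $M^*(a_i) \le a_{i'} \le a_i$ on the left side and $M^*(a_j)$ far to the right, so the ``crossing'' assignment $a_i \mapsto M^*(a_j)$, $a_{i'}\mapsto M^*(a_i)$ is the one with the larger spread — exactly the crossing-versus-nesting interchange inequality used in \cref{lem:opt_structure}, now run \emph{in reverse} to go from few forward edges toward optimality. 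I would handle the cases according to the relative positions of $a_{i'}$, $a_i$, $M^*(a_i)$, $M^*(a_j)$ — but crucially I'd want to phrase the rerouting so that the relevant positional inequalities ($M^*(a_i)$ leftmost among the four, $M^*(a_j)$ rightmost) are guaranteed by the no-backward-edge property and by \cref{lem:favorite_to_left}/\cref{lem:opt_structure}; this is where care is needed, since an ill-chosen rerouting could produce incomparable configurations.

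**Bookkeeping and termination.** Finally I would verify that the output is indeed a valid matching (the cycle-splice keeps in-degree and out-degree one at every vertex, so $\Sg_{M'}$ is again a disjoint union of cycles) and that the process terminates: each step removes one forward edge and creates only internal or inward edges (never a new forward or backward edge), so the count of forward edges is strictly decreasing and we stop after finitely many steps with $M'$ satisfying $\C{k}(M') \ge \C{k}(M)$ for all $k \in [n]$. I'd present the reduction as pseudocode (the paper's \cref{algname:remove_forward_edges}), then prove a one-step invariant lemma (``one iteration removes a forward edge, introduces no backward edge, and weakly increases every $\C{k}$''), and conclude the theorem by induction on the number of forward edges. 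The genuinely delicate point remains the one-step cost inequality; everything else is structural bookkeeping on cycles and orderings already set up in the excerpt.
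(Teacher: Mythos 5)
Your overall architecture (an iterative edge-swap reduction, a one-step lemma showing every $\C{k}$ weakly increases, plus bookkeeping that no backward edges appear and the process terminates) matches the paper's, but the specific swap you propose breaks at exactly the step you flag as delicate. You pair the forward edge $(a_i,a_j)$ with the edge $(a_{i'},a_i)$ \emph{into $a_i$ itself}, and your cost inequality rests on the claim that ``$a_{i'}$ is weakly to the left of $a_i$.'' That claim does not follow from the absence of backward edges: $a_{i'}$ may lie in the \emph{same} block $A_x$ as $a_i$ (an internal edge), in which case its position relative to $a_i$ is governed only by the arbitrary tie-breaking inside $\pi_a$ and it can sit strictly to the right of $a_i$; it could also a priori lie in $\Aout^r$, to the right of all of $\Ain$ (ruling this out requires \cref{lem:inward-right}, a nontrivial statement about \cref{algname:order_match}'s tie-breaking that you never establish). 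In the same-block case your swap goes the wrong way: with $a_i < a_{i'}$ and $M^*(a_i) < M^*(a_j)$, the \emph{old} assignment ($a_i\mapsto M^*(a_j)$, $a_{i'}\mapsto M^*(a_i)$) is the crossing one and the \emph{new} one is aligned, so by the very interchange argument of \cref{lem:opt_structure} the sum and max weakly \emph{decrease}, and the max can decrease strictly. Concretely, take agents $u_1=0$, $u_2=1$, $u_3=1.5$, $w=10$, $z=-40$ and items $p=0.5$, $q_1=3.5$, $q_2=4$, $r=11$, $s=-100$, with tie-break order $(z,u_3,u_1,u_2,w)$: \cref{algname:order_match} produces the forward edge $(u_2,w)$ while $M^*(u_2)=q_1$ is held by $u_3$, which lies to the right of $u_2$; your swap replaces costs $\{10,2\}$ by $\{2.5,9.5\}$ and $\C{3}$ strictly drops. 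So your one-step monotonicity lemma is false as stated, and the induction you describe does not go through (in this example later iterations happen to compensate, but your proof gives no reason why they must).

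The paper avoids this by \emph{not} insisting on the edge entering $a_i$ itself: it pairs the rightmost forward edge $(a_3,a_4)$ with a \emph{forward or inward} edge $(a_1,a_2)$ whose head merely satisfies $\fav(a_2)=\fav(a_3)$ (same block). Such an edge is shown to exist (\cref{lem:rfe_agents}, using that some agent outside $A_x$ must hold an $A_x$-item and that right-to-left processing keeps that original edge intact), and its tail $a_1$ is guaranteed to lie strictly to the left of $a_2,a_3$ — forward tails by \cref{lem:favorite_to_left}, and inward tails because $\Aout^r$ is excluded via \cref{lem:inward-right} — while $a_4$ lies to the right; this is precisely what makes the new assignment the crossing one and every $\C{k}$ weakly larger. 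Two further, more minor, points: your termination argument (``the number of forward edges strictly decreases'') also fails in the same-block case (the forward edge is merely handed from $a_i$ to $a_{i'}$); a correct potential for your variant would be the number of agents matched to their own optimal item, which your swap does increase. And even in your scheme, the analysis of later iterations must justify that the structural facts used (e.g., exclusion of $\Aout^r$ tails) persist for edges created by earlier swaps — another place where the paper's careful choice of which edges to touch is doing real work.
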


\begin{algorithm}
    \SetAlgorithmName{Algorithm}{}{}
    \caption{RemoveForwardEdges reduction}
    
    \label{algname:remove_forward_edges}
    \label[alg2]{algname:remove_forward_edges}
    \label{algnum:remove_forward_edges}
    
    \KwIn{Graph $\Sg_M=(A,E)$ with no backward edges.}
    \KwOut{Graph $\Sg_{M'}$.}
    \While{there exists a forward edge in $E$}{
        Let $(a_3,a_4)\in E$ be a forward edge such that $\fav(a_3)$ is weakly to the right of $\fav(a_i)$ for any other forward edge $(a_i, a_j)\in E$.

        Find a forward or inward edge $(a_1,a_2) \in E$ such that $\fav(a_2)=\fav(a_3)$ \label{line:conditions}.
            
        $E \gets E \setminus \{(a_1,a_2), (a_3,a_4)\}$.
        
        $E \gets E \cup \{(a_1,a_4), (a_3,a_2)\}$.
    }
    \Return $\Sg_{M'} \gets (A, E)$.
\end{algorithm}

\begin{lemma}\label{lem:rfe_agents}
    While a forward edge $(a_3, a_4)$ exists in $E$, there must also exist a forward or inward edge $(a_1, a_2)\in E$ with $\fav(a_2)=\fav(a_3)$, i.e., an edge satisfying the  \cref{line:conditions} conditions of \cref{algname:remove_forward_edges}.
\end{lemma}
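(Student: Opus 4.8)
The plan is to prove the claim by a short degree-conservation argument on the current permutation-graph, applied to the block $A_x$ that contains $a_3$. I would first record two invariants maintained by \cref{algname:remove_forward_edges} that are used below. (i) The current graph $(A,E)$ is always a permutation-graph: a reduction step deletes $(a_1,a_2),(a_3,a_4)$ and inserts $(a_1,a_4),(a_3,a_2)$, so every vertex keeps exactly one in-edge and one out-edge, i.e.\ all in- and out-degrees stay equal to $1$. (ii) $E$ never contains a backward edge: this holds for the input graph $\Sg_M$, and of the two edges inserted by a reduction step, $(a_3,a_2)$ is internal (as $a_2,a_3\in A_x$), while $(a_1,a_4)$ is inward if $a_1\in\Aout$ (then $a_4\in\Ain$) and forward if $a_1\in\Ain$ (then the deleted edge $(a_1,a_2)$ was forward, so $a_1\in A_{x'}$ with $x'<x$, and $a_4\in A_y$ with $y>x$, so $x'<y$).

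Next I would fix notation. We may assume $\Gin\ne\varnothing$, as otherwise there are no blocks $A_x$, hence no forward edges, and the claim is vacuous; then by \cref{lem:G+SubsetGin} we have $G_+\subseteq\Gin$, and since every agent's favorite lies in $G_+$, every favorite lies in $\Gin$, so the blocks $A_1,A_2,\dots$ partition $\Ain$. Since $(a_3,a_4)$ is forward, write $a_3\in A_x$ and $a_4\in A_y$ with $x<y$; then $A_x\ne\varnothing$, and $\fav(a_3)$ is the $x$th-leftmost item of $\Gin$.

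The core step is a degree-conservation argument applied to $S:=A_x$. In a permutation-graph every vertex has in-degree and out-degree equal to $1$, so for any vertex subset the number of edges entering it from outside equals the number of edges leaving it. The out-edge of $a_3$ is $(a_3,a_4)$ with $a_4\notin S$ (since $a_4\in A_y$, $y\ne x$), so at least one edge leaves $S$; hence at least one edge $(a_1,a_2)\in E$ enters $S$ from outside, that is, $a_2\in S=A_x$ and $a_1\notin A_x$. Since $a_2,a_3\in A_x$, both have the $x$th-leftmost item of $\Gin$ as their favorite, so $\fav(a_2)=\fav(a_3)$. Finally I would classify $(a_1,a_2)$: if $a_1\in\Aout$ it is inward because $a_2\in A_x\subseteq\Ain$; if $a_1\in\Ain$ then $a_1\in A_{x'}$ for some $x'\ne x$, and $x'>x$ would make $(a_1,a_2)$ backward, contradicting invariant (ii), so $x'<x$ and $(a_1,a_2)$ is forward. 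Either way $(a_1,a_2)$ is a forward or inward edge with $\fav(a_2)=\fav(a_3)$, which is exactly the edge that \cref{line:conditions} of \cref{algname:remove_forward_edges} requires.

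There is no real obstacle here; the only care needed is the bookkeeping for invariants (i) and (ii). I would also note that the argument does not use the maximality of $(a_3,a_4)$ (that $\fav(a_3)$ is weakly rightmost among the tails of forward edges) — that choice is needed elsewhere, for termination of the while loop and for the cost comparison in \cref{lem:forward-edges}, but not for this existence statement.
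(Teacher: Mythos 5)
Your proof is correct, and it takes a genuinely different route from the paper's. The paper establishes existence by tracing the history of the forward edge out of $a_3$: it argues that this edge ultimately stems from \cref{algname:order_match} having exhausted the optimal items of the block $A_x$ containing $a_3$, so some agent outside $A_x$ must have received an $A_x$-agent's optimal item, and it then invokes the right-to-left processing order of forward edges (the maximality condition on $\fav(a_3)$ in the while loop) together with \cref{lem:back-edges} to argue that this particular edge is still present in $E$ at the current iteration. You replace all of this bookkeeping by a degree-conservation count on $A_x$: in a permutation-graph the number of edges leaving a vertex subset equals the number entering it, the forward edge $(a_3,a_4)$ leaves $A_x$, so some edge $(a_1,a_2)$ with $a_2\in A_x$ and $a_1\notin A_x$ enters it, giving $\fav(a_2)=\fav(a_3)$ at once; the no-backward-edge invariant (initially \cref{lem:back-edges}, preserved as in \cref{lem:rfe_backward}) excludes $a_1\in A_{x'}$ with $x'>x$, leaving exactly the forward or inward cases. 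The supporting points you need are handled correctly: every agent of $\Ain$ lies in some block because $G_+\subseteq\Gin$ by \cref{lem:G+SubsetGin}, and the claim is vacuous when $\Gin=\varnothing$. What your approach buys is a cleaner and strictly more general statement -- the required edge exists for any forward edge in any permutation-graph without backward edges, independently of the order in which \cref{algname:remove_forward_edges} processes forward edges -- whereas the paper's argument is tied to the execution history of \cref{algname:order_match} and to the maximality rule. One small quibble with your closing aside: termination does not in fact need the maximality rule either (each swap reduces the number of forward edges by exactly one in every case); that rule is what the paper's own history-tracing proof of this lemma relies on. This remark sits outside your proof proper and does not affect its validity.
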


\begin{proof}
    Since $(a_3, a_4)$ is a forward edge, this implies that both $a_3$ and $a_4$ are in $\Ain$ and their favorite items are in $\Gin$, with $\fav(a_3)$ being to the left of $\fav(a_4)$. The existence of the forward edge $(a_3, a_4)$ implies that the item $a_3$ was matched to by \cref{algname:order_match} resulted in a forward edge in $E$. Either $a_3$ was matched to the optimal item of agent $a_4$ by \cref{algname:order_match}, or some previous iteration of \cref{algname:remove_forward_edges} added edge $(a_3,a_4)$ to $E$. For the latter to be the case, in some previous iteration of \cref{algname:remove_forward_edges} that added edge $(a_3,a_4)$ to $E$, edges $(a_1',a_2')$ and $(a_3',a_4')$ were replaced by edges $(a_1',a_4')$ and $(a_3',a_2')$. $(a_3',a_2')$ is guaranteed to be an internal edge, so $a_3$ must have been agent $a_1'$. Because $a_3 \in \Ain$, $(a_1',a_2')=(a_3,a_2')$ must have been a forward edge (rather than an inward edge). $a_3$ must have been matched to $a_2'$ by \cref{algname:order_match} because the only other way for $(a_3,a_2')$ to have been in $E$ would be if it were added to $E$ in an even earlier iteration of \cref{algname:remove_forward_edges} which would be in contradiction with \cref{algname:remove_forward_edges} iterating over forward edges right to left because it would require some other forward edge $(a'',a_2')$ to have been removed where $\fav(a'')$ is to the left of $\fav(a_2')=\fav(a_3')$ which \cref{lem:favorite_to_left} implies $a''$ is to the left of $a_3'$.

    Let $a_3 \in A_x$. As argued above, the item $a_3$ was originally matched to by \cref{algname:order_match} resulted in a forward edge. Let the agent whose optimal item $a_3$ was matched to be in $A_y$. \cref{lem:favorite_to_left} shows that every agent in $A_x$ is to the left of every agent in $A_y$ (which implies every agent in $A_x$'s optimal item is to the left of every agent in $A_y$'s optimal item). Thus, for $a_3$ to have been matched with an agent in $A_y$'s optimal item, all agents in $A_x$'s optimal items must have already been matched by the time $a_3$ was matched in \cref{algname:order_match}, which implies some agent not in $A_x$ was matched to an agent in $A_x$'s optimal item. \cref{lem:back-edges} shows that $\Sg_M$ induced by the matching $M$ returned from \cref{algname:order_match} contains no backward edges. Therefore, the agent who was matched to an agent in $A_x$'s optimal item must be in some $A_w$, such that $w<x$, or $\Aout$. Because \cref{algname:remove_forward_edges} iterates over forward edges from right to left, the edge induced from this matching must still be in $E$, let this edge be $(a_1,a_2)$. $a_1 \in A_w \cup \Aout$ and $a_2 \in A_x$, so both properties, that the edge is forward or inward, and $\fav(a_2)=\fav(a_3)$, are held.
\end{proof}

\begin{lemma}\label{lem:rfe_terminate}
    The \cref{algname:remove_forward_edges} reduction always terminates and returns a permutation-graph.
\end{lemma}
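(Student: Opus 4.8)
The plan is to establish the two assertions of the lemma separately: (a) every execution of the \textbf{while} loop body turns a permutation-graph with no backward edges into a permutation-graph with no backward edges, and (b) every such execution strictly decreases the number of forward edges in $E$. Together with \cref{lem:rfe_agents}, which guarantees that the forward-or-inward edge $(a_1,a_2)$ required in the loop body always exists, these two facts imply that the loop runs for exactly as many iterations as there are forward edges in the input graph (at most $n$) and that it returns a permutation-graph.

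First I would set up the bookkeeping for a single iteration. Fix an iteration, let $(a_3,a_4)$ be the forward edge it selects, and let $(a_1,a_2)$ be the forward-or-inward edge it finds. Since $(a_3,a_4)$ is forward, $a_3\in A_x$ and $a_4\in A_y$ for some $x<y$; since $(a_1,a_2)$ is forward or inward we have $a_2\in\Ain$, and since $\fav(a_2)=\fav(a_3)$ it lies in the same part, so $a_2\in A_x$. If $(a_1,a_2)$ is forward then $a_1\in A_w$ for some $w<x$, and if it is inward then $a_1\in\Aout$; in either case $a_1\notin A_x$, hence $a_1\neq a_3$, and since $x\neq y$ we also get $a_2\neq a_4$. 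Finally, neither $(a_1,a_4)$ nor $(a_3,a_2)$ can already be present in $E$ before the swap, as otherwise $a_1$ or $a_3$ would have out-degree at least $2$.

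Next I would verify (a). Entering the iteration, $|E|=n$ and every vertex has in- and out-degree $1$. Because $a_1\ne a_3$ the two deleted edges are distinct (and the two inserted edges are distinct); deleting $(a_1,a_2)$ and $(a_3,a_4)$ drops the out-degrees of $a_1,a_3$ and the in-degrees of $a_2,a_4$ to $0$ (using also $a_2\ne a_4$), and inserting $(a_1,a_4)$ and $(a_3,a_2)$ restores each of these to $1$. So the new $E$ encodes a permutation of $A$, hence a matching $M'$ with $\Sg_{M'}=(A,E)$. For the invariant, $(a_3,a_2)$ is an internal edge (both endpoints in $A_x$; possibly a self-loop, which the definition of a permutation-graph permits), and $(a_1,a_4)$ is forward if $a_1\in A_w$ (since $w<x<y$) or inward if $a_1\in\Aout$ (since $a_4\in A_y\subseteq\Ain$); neither inserted edge is backward, and deleting edges cannot create a backward edge, so no backward edge ever appears.

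Finally I would handle (b) and conclude. The type of an edge depends only on which of the parts $A_x$ or $\Aout$ contain its two endpoints, and this partition of $A$ is fixed, so deleting and inserting edges never changes the type of any other edge. The iteration deletes the forward edge $(a_3,a_4)$, deletes $(a_1,a_2)$, inserts the internal (hence non-forward) edge $(a_3,a_2)$, and inserts $(a_1,a_4)$, which by the previous paragraph is forward exactly when $(a_1,a_2)$ is forward. Hence the number of forward edges decreases by exactly $1$ per iteration; being a nonnegative integer bounded by $n$, it reaches $0$ after finitely many iterations, at which point the loop exits and, by (a), returns a permutation-graph (in fact one with neither forward nor backward edges). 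I expect the only delicate point to be the edge-type bookkeeping for the two inserted edges --- in particular confirming that $(a_1,a_4)$ is forward precisely when $(a_1,a_2)$ was --- for which the ordering of the parts $A_x$ furnished by \cref{lem:favorite_to_left} is exactly what is needed; the rest is routine degree counting.
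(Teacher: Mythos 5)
Your proof is correct and follows essentially the same route as the paper's: termination via a strictly decreasing count of forward edges (bounded by $n$), and the permutation-graph property via the fact that each swap preserves all in- and out-degrees equal to $1$. Your version is somewhat more careful than the paper's one-line argument---in particular, you correctly note that the inserted edge $(a_1,a_4)$ may itself be forward (when $(a_1,a_2)$ is), so it is the \emph{net} count of forward edges that drops by one each iteration---but the underlying idea is the same.
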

\begin{proof}
    Note that after each iteration of the while-loop, a forward edge is removed and no forward edge is added. Therefore, since the total number of edges is $n$ (one outgoing edge per agent), the algorithm will have to terminate after at most $n$ iterations. Also, after each iteration of the while loop in the reduction, the in-degree and the out-degree of every node remains equal to $1$, so the resulting graph will also be a permutation-graph for some new matching $M'$.
\end{proof}

\begin{lemma}\label{lem:rfe_backward}   
\cref{algname:remove_forward_edges} does not introduce any backward edges.
\end{lemma}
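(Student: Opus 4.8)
The plan is to show that each iteration of \cref{algname:remove_forward_edges} cannot create a backward edge, by carefully tracking which new edges are added. In a single iteration we remove a forward edge $(a_3,a_4)$ and another edge $(a_1,a_2)$ with $\fav(a_2)=\fav(a_3)$ (which by \cref{lem:rfe_agents} is either forward or inward), and we add the two edges $(a_1,a_4)$ and $(a_3,a_2)$. I would argue about these two new edges separately. First consider $(a_3,a_2)$: since $\fav(a_2)=\fav(a_3)$, agents $a_2$ and $a_3$ lie in the same set $A_x$, so this is an \emph{internal} edge and in particular not backward. So the only edge that could possibly be backward is $(a_1,a_4)$.

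Next I would locate $a_1$ and $a_4$ in the partition. Because $(a_3,a_4)$ is forward with $a_3\in A_x$, we have $a_4\in A_y$ for some $y>x$; in particular $\fav(a_4)$ is strictly to the right of $\fav(a_3)=\fav(a_2)$. For $a_1$: the removed edge $(a_1,a_2)$ is forward or inward and ends at $a_2\in A_x$, so $a_1$ is either in $\Aout$ or in some $A_w$ with $w<x$ (it cannot be in $A_{x'}$ with $x'>x$, as that would make $(a_1,a_2)$ backward, and the invariant maintained throughout the reduction — guaranteed by \cref{lem:back-edges} together with \cref{lem:rfe_terminate} and an inductive hypothesis that no earlier iteration created a backward edge — is that no backward edges are present). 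Now I distinguish the two cases for the new edge $(a_1,a_4)$. If $a_1\in\Aout$, then $(a_1,a_4)$ has its tail in $\Aout$ and its head $a_4\in A_y\subseteq\Ain$, so it is an \emph{inward} edge, hence not backward. If instead $a_1\in A_w$ with $w<x$, then since $a_4\in A_y$ with $y>x>w$, we have $w<y$, so $(a_1,a_4)$ goes from $A_w$ to $A_y$ with $w<y$, which is a \emph{forward} edge, hence not backward.

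This exhausts all cases, so no iteration introduces a backward edge. I would phrase the argument inductively: assuming the graph entering an iteration has no backward edges (which holds for the input graph $\Sg_M$ by \cref{lem:back-edges}, and is preserved by the argument above), the graph after the iteration also has none; by \cref{lem:rfe_terminate} there are finitely many iterations, so the output graph $\Sg_{M'}$ has no backward edges. The main subtlety — and the step I would be most careful about — is pinning down exactly where $a_1$ lives, i.e., ruling out $a_1\in A_{x'}$ with $x'>x$; this relies on the removed edge $(a_1,a_2)$ being forward or inward (from \cref{line:conditions} of the reduction and \cref{lem:rfe_agents}), not backward, which in turn uses the inductive no-backward-edge invariant together with \cref{lem:favorite_to_left} to translate the ``$\fav$'' ordering into the $A_x$-indexing. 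Everything else is a direct case check using the definitions of forward, backward, internal, and inward edges.
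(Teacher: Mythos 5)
Your proposal is correct and follows essentially the same argument as the paper: the new edge $(a_3,a_2)$ is internal since $\fav(a_2)=\fav(a_3)$, and the new edge $(a_1,a_4)$ is forward or inward according to whether the removed edge $(a_1,a_2)$ was forward or inward, so no backward edge is ever created. The only difference is that your appeal to an inductive no-backward-edge invariant to place $a_1$ is not really needed, since \cref{line:conditions} of \cref{algname:remove_forward_edges} already guarantees that $(a_1,a_2)$ is forward or inward, which directly pins down where $a_1$ lies.
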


\begin{proof}
     In each iteration of its while loop, \cref{algname:remove_forward_edges} replaces edges $(a_1,a_2)$ and $(a_3,a_4)$ from $E$ with edges $(a_1,a_4)$ and $(a_3,a_2)$. The properties of these edges are that $(a_1,a_2)$ is either a forward or inward edge, $(a_3,a_4)$ is a forward edge, and $\fav(a_2)=\fav(a_3)$. Because $(a_3,a_4)$ is a forward edge and $\fav(a_2)=\fav(a_3)$, agents $a_2,a_3$, and $a_4$ must be in $\Ain$. Therefore, $(a_3,a_2)$ is an internal edge. If $(a_1,a_2)$ is a forward edge then $(a_1,a_4)$ must also be a forward edge because $a_1 \in \Ain$, and $\fav(a_1)$ is to the left of $\fav(a_4)$ ($\fav(a_1)$ is to the left of $\fav(a_2)$, $\fav(a_3)$ it to the left of $\fav(a_4)$, and $\fav(a_2) = \fav(a_3)$). If $(a_1,a_2)$ is an inward edge then $a_1 \in \Aout$, and because $a_4 \in \Ain$, $(a_1,a_4)$ is an inward edge. Neither of the edges, $(a_1,a_4)$ or $(a_3,a_2)$, that \cref{algname:remove_forward_edges} adds to $E$ are backward edges, so \cref{algname:remove_forward_edges} does not introduce any backwards edges.
\end{proof}

Our next step is to also prove that the $k$-centrum cost weakly increases during the execution of the \cref{algname:remove_forward_edges} reduction. To prove that this is the case, we first prove a helper lemma regarding any inward edges from $\Aout^r$ that the graph may contain.

\begin{lemma}\label{lem:inward-right}
    If $(a_i,a_j) \in E$ is an inward edge s.t.\ $a_i \in \Aout^r$, then $\fav(a_j)$ is weakly to the right of $\fav(a)$ for any agent $a\in \Ain$.
\end{lemma}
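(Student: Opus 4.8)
The plan is to pin down $\fav(a_i)$ exactly, show it coincides with $\fav(a_j)$, and then read off the conclusion from \cref{lem:favorite_to_left}. Since $(a_i,a_j)$ is inward we have $a_j\in\Ain$ and $M(a_i)=M^*(a_j)\in\Gin$, so $\Gin\neq\varnothing$ and hence, by \cref{lem:G+SubsetGin}, every agent's favorite item lies in $\Gin$. Write the items of $\Gin$ in left‑to‑right order as $h_1<\dots<h_m$ (so $\pi_g=(h_1,\dots,h_m)$), let $A_w\subseteq\Ain$ be the agents whose favorite is $h_w$, and note $|A_w|\le\plu(h_w)$. By \cref{lem:opt_structure} the matching $M^*$ pairs agents and items in their common left‑to‑right order, and since $\Gout^\ell$ lies entirely to the left of $\Gin$ and $\Gout^r$ entirely to its right, the agents split in order into $\Aout^\ell$, then $\Ain$, then $\Aout^r$. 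Thus $a_i\in\Aout^r$ is to the right of every agent of $\Ain$, and because $\fav(a_i)$ and all the $\fav(a)$ lie in $\Gin$, \cref{lem:favorite_to_left} gives that $\fav(a_i)$ is weakly to the right of $\fav(a)$ for every $a\in\Ain$; in particular $\fav(a_i)$ is weakly to the right of $\fav(a_j)$.

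Next I would prove the reverse inequality, namely that $\fav(a_i)$ is weakly to the \emph{left} of $\fav(a_j)$. Say $a_j\in A_x$, i.e.\ $\fav(a_j)=h_x$. Restricting \cref{lem:opt_structure} to $\Ain\to\Gin$ (a sorted‑to‑sorted bijection, as these are matched, consecutive blocks), $M^*$ sends the $t$‑th leftmost agent of $\Ain$ to $h_t$; and by \cref{lem:favorite_to_left} the sorted list of $\Ain$ is the concatenation of the blocks $A_1,\dots,A_m$, so the rank of $a_j$ within $\Ain$ is at most $|A_1|+\dots+|A_x|$. Now $M(a_i)=M^*(a_j)$, and \cref{algname:order_match} assigns the $p$‑th item of $\pi_g$, namely $h_p$, to the $p$‑th agent of $\pi_a$; hence $a_i$ sits in the position of $\pi_a$ equal to the rank of $a_j$ in $\Ain$, which is at most $|A_1|+\dots+|A_x|\le\plu(h_1)+\dots+\plu(h_x)$. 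Since $\pi_a$ first lists the $\plu(h_1)$ agents with favorite $h_1$, then those with favorite $h_2$, and so on, an agent in one of the first $\plu(h_1)+\dots+\plu(h_x)$ positions has favorite in $\{h_1,\dots,h_x\}$. Therefore $\fav(a_i)\in\{h_1,\dots,h_x\}$, i.e.\ $\fav(a_i)$ is weakly to the left of $h_x=\fav(a_j)$.

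Combining the two inequalities yields $\fav(a_i)=\fav(a_j)$, and then the statement from the first paragraph — that $\fav(a_i)$ is weakly to the right of $\fav(a)$ for every $a\in\Ain$ — is exactly the claim. The only delicate point is the chain of identifications in the second paragraph: translating ``$a_i$ received item $M^*(a_j)$'' into a bound on $a_i$'s position in $\pi_a$, using that $M^*$ restricted to $\Ain\to\Gin$ is rank‑preserving, that the $A_w$'s appear in $\Ain$ as consecutive blocks (\cref{lem:favorite_to_left}), and the count $|A_w|\le\plu(h_w)$; everything else is immediate from the cited lemmas. (Equivalently one can argue by contradiction: if some $a\in\Ain$ had $\fav(a)$ strictly to the right of $\fav(a_j)$, then a block $A_y$ with $y>x$ would be nonempty, forcing $a_j$'s rank below $m$, and the position bound would place $\fav(a_i)$ strictly to the left of $\fav(a)$, contradicting the first paragraph.)
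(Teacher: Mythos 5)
Your proof is correct, and its first half coincides with the paper's opening step: using \cref{lem:opt_structure} to place $a_i\in\Aout^r$ to the right of every agent of $\Ain$, and then \cref{lem:favorite_to_left} to conclude that $\fav(a_i)$ is weakly to the right of $\fav(a)$ for every $a\in\Ain$. Where you genuinely diverge is in the second half. The paper first rules out $\fav(a_i)$ being strictly to the right of all favorites in $\Ain$ (otherwise all of $\Gin$ would be assigned before $a_i$'s turn), so $\fav(a_i)$ equals the rightmost favorite in $\Ain$, and then pins down $\fav(a_j)=\fav(a_i)$ via \cref{lem:back-edges} together with a hypothetical re-ordering $\pi_a'$ obtained by swapping $a_i$ with an $\Ain$-agent sharing its favorite item. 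You instead obtain the reverse bound ($\fav(a_i)$ weakly to the left of $\fav(a_j)$) by direct counting: since \cref{algname:order_match} gives the $p$-th item of $\pi_g$ to the $p$-th agent of $\pi_a$ and $M^*$ restricted to $\Ain\to\Gin$ is rank-preserving, $a_i$'s position in $\pi_a$ equals the rank of $a_j$ in $\Ain$, which is at most $\sum_{w\le x}|A_w|\le\sum_{w\le x}\plu(h_w)$ when $\fav(a_j)$ is the $x$-th leftmost item of $\Gin$; combined with \cref{lem:G+SubsetGin} (so that these plurality counts exhaust the corresponding prefix of $\pi_a$), this forces $\fav(a_i)$ into $\{h_1,\dots,h_x\}$, hence $\fav(a_i)=\fav(a_j)$ and the claim follows from the first half. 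Your pigeonhole computation replaces the paper's backward-edge/tie-breaking exchange argument, avoiding any appeal to \cref{lem:back-edges} at the cost of some explicit rank bookkeeping; both versions rest on the same conventions the paper itself uses (that $\pi_g$ is read left to right, and that the inward edge in question records the item \cref{algname:order_match} assigned to $a_i$), so there is no gap on that front.
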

\begin{proof}
    First, note that the true location of $a_i$ must be to the right of all agents in $\Ain$ because $M^*(a_i) \in \Gout^r$ is to the right of all items in $\Gin$, and \cref{lem:opt_structure} shows that if an agent's optimal item is to the right of another agent's optimal item then that agent is to the right of the other agent. Therefore, \cref{lem:favorite_to_left} implies that $\fav(a_i)$ cannot be to the left of the favorite item of any agent in $\Ain$. If $\fav(a_i)$ is to the right of all agents in $\Ain$'s favorite items, then at least $|\Ain|$ agents were ordered before $a_i$ in $\pi_a$, and thus all items in $\Gin$ would have been matched before $a_i$ could have been matched. Therefore, for $a_i$ to have been matched to an item in $\Gin$, $\fav(a_i)$ must be the same item as the rightmost agent in $\Ain$'s favorite item. When creating the ordering $\pi_a$, \cref{algname:order_match} arbitrarily orders agents who share the same favorite item, so an alternate ordering, $\pi_a'$, can be created by swapping the ranking of $a_i$ and any agent in $\Ain$ who shares the same favorite item, and $\pi_a'$ would be a possible ordering created by \cref{algname:order_match}.  \cref{lem:back-edges} implies that no agent, $a_x$, in $\Ain$ can be matched to another agent, $a_y$, in $\Ain$'s optimal item, if $\fav(a_x)$ is to the right of $\fav(a_y)$. Thus, if $\pi_g$ were to be matched to $\pi_a'$, the agent swapped with $a_i$ would not be matched to the optimal item of any agent in $\Ain$ whose favorite item is to the left of their favorite item. Therefore, when matching $\pi_g$ to $\pi_a$, $a_i$ will not match to the optimal item of any agent in $\Ain$ whose favorite item is to the left of their favorite item. Thus, for $a_i$ to be matched with $M^*(a_j)$, $\fav(a_j)$ must be $\fav(a_i)$, which has been shown to be weakly to the right of $\fav(a)$ for all $a \in \Ain$.
\end{proof}

\begin{lemma}\label{lem:rfe_cost}
    After each iteration of the while loop in \cref{algname:remove_forward_edges} the $k$-centrum cost of the new matching weakly increases for any $k \in [n]$.
\end{lemma}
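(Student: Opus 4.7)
The plan is to reduce the lemma to two one-dimensional inequalities about the four edge weights that change during the swap, and then obtain the statement about every top-$k$ sum via a short combinatorial argument. Throughout, set $r := M^*(a_2)$ and $s := M^*(a_4)$: the iteration replaces the weights $d(a_1, r), d(a_3, s)$ by $d(a_1, s), d(a_3, r)$, and leaves every other edge weight in $\Sg_M$ unchanged.

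The first step is to establish the line orderings $a_1 \leq a_3$ and $r \leq s$. Since $\fav(a_2) = \fav(a_3)$, agent $a_2$ lies in the same group $A_x$ as $a_3$, while the forward edge $(a_3, a_4)$ forces $a_4 \in A_y$ with $y > x$, so $r \leq s$ by \cref{lem:favorite_to_left} and \cref{lem:opt_structure}. For $a_1 \leq a_3$ I split on whether $(a_1, a_2)$ is forward or inward. In the forward case, $a_1 \in A_w$ with $w < x$, so $a_1 \leq a_3$ by \cref{lem:favorite_to_left}. In the inward case, I use \cref{lem:inward-right} to exclude $a_1 \in \Aout^r$: otherwise $\fav(a_2)$ would have to be the favorite item of the rightmost agent in $\Ain$, making $a_3 \in A_x$ sit in the rightmost group and contradicting the existence of the forward edge $(a_3, a_4)$. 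Hence $a_1 \in \Aout^\ell$, and then $a_1 \leq a_3$ follows from the greedy structure of $M^*$.

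With these orderings I would verify two elementary line-metric facts: (i) $d(a_1, s) + d(a_3, r) \geq d(a_1, r) + d(a_3, s)$, and (ii) $\max\{d(a_1, s), d(a_3, r)\} \geq \max\{d(a_1, r), d(a_3, s)\}$. Fact (i) is the standard ``crossing exceeds nesting'' identity: for $a_1 \leq a_3$ and $r \leq s$, the difference equals twice the length of the overlap of the intervals $[a_1, a_3]$ and $[r, s]$. Fact (ii) I would derive from the stronger observation that $\max\{d(a_1, s), d(a_3, r)\}$ equals the maximum over all four agent-item distances $d(a, g)$ with $a \in \{a_1, a_3\}$, $g \in \{r, s\}$: the pair realizing the farthest span among these four points is always either $(a_1, s)$ or $(a_3, r)$, regardless of which of the four is leftmost and which is rightmost. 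A brief case split on the relative order of $a_1, a_3, r, s$ confirms both facts.

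Finally, I would close with a short combinatorial swap argument. Let $S$ and $S' = (S \setminus \{\alpha_1, \alpha_2\}) \cup \{\beta_1, \beta_2\}$ be the multisets of edge weights of $\Sg_M$ and of the new graph, with $\{\alpha_1, \alpha_2\} = \{d(a_1, r), d(a_3, s)\}$ and $\{\beta_1, \beta_2\} = \{d(a_1, s), d(a_3, r)\}$. Given any size-$k$ subset $I^* \subseteq S$ achieving $\C{k}(M)$, I construct a size-$k$ subset $I' \subseteq S'$ with $\sum I' \geq \sum I^*$ as follows: if $I^* \cap \{\alpha_1, \alpha_2\} = \emptyset$, take $I' = I^*$; if exactly one $\alpha_i \in I^*$, replace it by $\max\{\beta_1, \beta_2\}$ (using (ii) together with $\alpha_i \leq \max\{\alpha_1, \alpha_2\}$); if both lie in $I^*$, replace them by $\beta_1, \beta_2$ (using (i)). In every case $\sum I' \geq \sum I^*$, hence $\C{k}(M') \geq \C{k}(M)$ for all $k \in [n]$. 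The main technical obstacle is really the inward-case argument for $a_1 \leq a_3$, which combines the algorithm's rightmost-first choice of $(a_3, a_4)$ with the somewhat delicate \cref{lem:inward-right}; the rest is routine line-metric arithmetic and elementary subset-swap combinatorics.
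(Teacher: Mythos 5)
Your proposal is correct and follows essentially the same route as the paper's proof: it establishes that $a_1$ lies to the left of $a_3$ (via the same forward/inward case split, using \cref{lem:inward-right} to rule out $a_1 \in \Aout^r$) and that $M^*(a_2)$ lies to the left of $M^*(a_4)$, derives the same sum and max inequalities for the swapped pair of edge weights, and then concludes for every $k$ by accounting for how many of the two replaced weights appear among the top $k$. The only differences are cosmetic: you verify the two line-metric inequalities directly ("crossing exceeds nesting" and the four-distance max fact) rather than invoking the local swap argument of \cref{lem:opt_structure}, and your subset-exchange formulation of the final step is a slightly cleaner packaging of the paper's three-case analysis.
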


\begin{proof}
    Let $\Sg_M$ be the graph at the beginning of some iteration of the while loop of \cref{algname:remove_forward_edges} and let $\Sg_{M'}$ be the updated graph after having its edges changed. We will show that $\C{k}(\Sg_M) \leq \C{k}(\Sg_{M'})$ for any $k\in [n]$. 
    
    To do this, we will first show that the maximum cost between the two edges, $(a_1, a_4)$ and $(a_3, a_2)$, added to $\Sg_{M'}$ is weakly greater than the maximum between the two removed edges,  $(a_1, a_2)$ and $(a_3, a_4)$, i.e.,

    \begin{equation}\label{ineq:max}
        \max(d(a_1,M^*(a_2)),d(a_3,M^*(a_4))) \leq \max(d(a_1,M^*(a_4)),d(a_3,M^*(a_2))).
    \end{equation}
    We will then also show the same for the sum of their costs, i.e.
    \begin{equation}\label{ineq:sum}
        d(a_1,M^*(a_2))+d(a_3,M^*(a_4)) \leq d(a_1,M^*(a_4))+d(a_3,M^*(a_2)).
    \end{equation}
    We will show these by arguing that agents $a_1$, $a_2$, $a_3$, and $a_4$ are in a specific order on the line (and thus so are their optimal items given the optimal structure of \cref{lem:opt_structure}). Then, the ordering shown will imply which matching of $a_1$ and $a_3$ to $M^*(a_2)$ and $M^*(a_4)$ is optimal for both the sum and max (again using \cref{lem:opt_structure}).

    \emph{Agent $a_1$ must be to the left of both $a_2$ and $a_3$}. This is true because $(a_1,a_2)$ is either a forward edge or an inward edge. If it is a forward edge, this implies $\fav(a_1)$ is to the left of $\fav(a_2)$. Combined with the fact that $\fav(a_2)=\fav(a_3)$, and with \cref{lem:favorite_to_left} (which shows that if an agent's favorite item is to the left of another agent's favorite item then that agent is also to the left of the other agent) we conclude that $a_1$ must be to the left of both $a_2$ and $a_3$. If, on the other hand, $(a_1,a_2)$ is an inward edge, given that $(a_3,a_4)$ is a forward edge and $\fav(a_2)=\fav(a_3)$, \cref{lem:inward-right} implies $a_1 \in \Aout^\ell$ (rather than $\Aout^r$) and thus $M^*(a_1) \in \Aout^\ell$. Given that $\{a_2,a_3\} \in \Ain$, which implies $\{M^*(a_2),M^*(a_3)\} \in \Gin$, all items in $\Gout^\ell$ are to the left of $\Gin$ by definition, and $M^*$ is defined using the optimal structure from \cref{lem:opt_structure} which implies if an agent's optimal item is to the left of another agent's optimal item then that agent is also to the left of the other agent, then $a_1$ must be to the left of $a_2$ and $a_3$.

    \emph{Agent $a_4$ must be to the right of both $a_2$ and $a_3$}. This is true because $(a_3,a_4)$ is a forward edge which implies $\fav(a_4)$ is to the right of $\fav(a_3)$, $\fav(a_2)=\fav(a_3)$, and \cref{lem:favorite_to_left} shows that if an agent's favorite item is to the right of another agent's favorite item then that agent is also to the right of the other agent.

    \cref{lem:opt_structure} shows that matching the leftmost agent to the leftmost item is optimal for both the egalitarian and utilitarian social cost. $a_1$ is to the left of $a_3$, and $M^*(a_2)$ is to the left of $M^*(a_4)$, so if we were to look locally at $a_1$, $a_3$, $M^*(a_2)$, and $M^*(a_4)$, matching $a_1$ to $M^*(a_2)$ and $a_3$ to $M^*(a_4)$ must be optimal compared to matching $a_1$ to $M^*(a_4)$, and $a_3$ to $M^*(a_2)$. Thus, both Inequalities \eqref{ineq:max} and \eqref{ineq:sum} hold. 

    Let $m_q = \max\limits_{(a_i,a_j) \in E}^{\qquad q}(d(a_i,M^*(a_j)))$. $\C{k}(\Sg_M)$ can be written as
    \begin{equation*}
        \C{k}(\Sg_M) = \sum_{q=1}^km_q
    \end{equation*}

    Let $m_x=d(a_1,M^*(a_2))$ and $m_y=d(a_3,M^*(a_4))$. Let $m_x'=d(a_1,M^*(a_4))$ and $m_y'=d(a_3,M^*(a_2))$. We will do a case analysis on wether neither $m_x$ or $m_y$ are one of the $k$ max costs of $\Sg_M$, one of the two are in the $k$ max costs of $\Sg_M$, and both are in the $k$ max costs of $\Sg_M$.
    
    \medskip
    \noindent 
    \textbf{Case 1 [Neither $m_x$ or $m_y$ are in the $k$ max costs costs of $\Sg_M$]:} If neither $m_x'$ or $m_y'$ are in the $k$ max costs of $\Sg_{M'}$ then $\C{k}(\Sg_M)=\C{k}(\Sg_{M'})$. If just one of the two are in the $k$ max costs of $\Sg_{M'}$, WLOG let it be $m_x'$, then $\C{k}(\Sg_{M'})=\C{k}(\Sg_M)-m_k+m_x'$, and because $m_x'$ is in the $k$ max costs, $m_k$ cannot be anymore, so $m_k \leq m_x'$ and thus $\C{k}(\Sg_M) \leq \C{k}(\Sg{M'})$. A similar argument shows that if both are in the $k$ max costs then $\C{k}(\Sg_M) \leq \C{k}(\Sg_{M'})$, with the difference being both $m_k$ and $m_{k-1}$ are no longer in the $k$ max costs of $\Sg_{M'}$.

     \medskip
    \noindent 
    \textbf{Case 2 [One of $m_x$ or $m_y$ are in the $k$ max costs of $\Sg_M$]:} Without loss of generality, let $m_x$ be the max of the two. Because $\max(m_x,m_y) = m_x \leq \max(m_x',m_y')$, then either $m_x'$ or $m_y'$ is larger than $m_x$ and so at least one of the two are in the $k$ max costs of $\Sg_{M'}$. If just one, then $\C{k}(\Sg_{M'}) = \C{k}(\Sg_M) - m_x + \max(m_x',m_y')$ and thus $\C{k}(\Sg_M) \leq \C{k}(\Sg_{M'})$. If both, then  $\C{k}(\Sg_{M'}) = \C{k}(\Sg_M) - m_x - m_k + m_x' + m_y'$, at least one of the two are guaranteed to be greater than $m_x$, and both are greater than $m_k$ (for the same reason as in case 1), so $\C{k}(\Sg_M) \leq \C{k}(\Sg_{M'})$.

     \medskip
    \noindent 
     \textbf{Case 3 [Both $m_x$ and $m_y$ are in the $k$ max costs of $\Sg_M$]:} Again, because $\max(m_x,m_y) \leq \max(m_x',m_y')$ at least one of the two are in the $k$ max costs of $\Sg_{M'}$. If just one of the two are then $\C{k}(\Sg_{M'}) = \C{k}(\Sg_M) - m_x - m_y + \max(m_x',m_y') + m_{k+1}$. As shown above, $m_x + m_y \leq m_x' + m_y'$, and $\min(m_x',m_y') \leq m_{k+1}$ (because $m_{k+1}$ is one of the $k$ max costs of $\Sg_{M'}$ while $\min(m_x',m_y')$ is not) so $m_x' + m_y' \leq \max(m_x',m_y') + m_{k+1}$, thus $m_x + m_y \leq \max(m_x',m_y') + m_{k+1}$ which implies $\C{k}(\Sg_M) \leq \C{k}(\Sg_{M'})$. If both are in the $k$ max costs of $\Sg_{M'}$ then $\C{k}(\Sg_{M'}) = \C{k}(\Sg_M) - m_x - m_y + m_x' + m_y'$, and thus $m_x + m_y \leq m_x' + m_y'$ implies that $\C{k}(\Sg_M) \leq \C{k}(\Sg_{M'})$. 
\end{proof}

\begin{proof}[Proof of \cref{lem:forward-edges}]
Applying \cref{algname:remove_forward_edges} to $\Sg_M$, where $M$ is a matching returned by \cref{algname:order_match}, gives $\Sg_{M'}$. \cref{lem:rfe_agents} shows that while there exists a forward edge $(a_3,a_4) \in E$, there also exists an edge $(a_1,a_2) \in E$ that satisfies the conditions of \cref{line:conditions} of \cref{algname:remove_forward_edges}, so the while loop of \cref{algname:remove_forward_edges} is always able to execute. \cref{lem:rfe_terminate} shows that \cref{algname:remove_forward_edges} terminates and will return permutation graph $\Sg_{M'}$. Given that there is a while loop in \cref{algname:remove_forward_edges} that continues to iterate while there is a forward edge in $E$, \cref{algname:remove_forward_edges} terminating and returning $\Sg_{M'}$ implies that $\Sg_{M'}$ contains no forward edges. \cref{lem:back-edges} shows that $\Sg_M$ contains no backward edges, so \cref{lem:rfe_backward} implies that $\Sg_{M'}$ contains no backward edges. Finally, \cref{lem:rfe_cost} implies that $\C{k}(M) \leq  \C{k}(M')$ for any $k \in [n]$.
\end{proof}

\paragraph{Bounding the cost of all remaining edges.} \ \\
Having addressed forward and backward edges, we now provide an upper bound for the cost of all remaining edges.

In order to bound the cost of each of the remaining edges, we first prove the following lemma regarding the leftmost and the rightmost agents, $a_\ell^*$ and $a_r^*$, which we subsequently use in the proof of \cref{lem:upper-bound-edges}.

\begin{lemma}\label{lem:star_max_gin}
Let $a_\ell^*$ and $a_r^*$ be the true leftmost and rightmost agents. For all $g \in \Gout$, we have $g_r \succ_{a_\ell^*} g$ and $g_\ell \succ_{a_r^*} g$.
\end{lemma}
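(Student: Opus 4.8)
The plan is to prove the two claims symmetrically; I focus on $g_r \succ_{a_\ell^*} g$ for every $g \in \Gout$, and the other follows by the left–right symmetry built into the construction. Recall that $\Gout = G \setminus \Gin(a_\ell,a_r)$, where $(a_\ell,a_r)$ was chosen to \emph{maximize} $|\Gin(a_i,a_j)|$ over all pairs with $\fav(a_i)=g_\ell$ and $\fav(a_j)=g_r$. The key observation I would exploit is that $a_\ell^*$, being the globally leftmost agent, satisfies $\fav(a_\ell^*)=g_\ell$: indeed $g_\ell$ is the leftmost item of $G_+$ and by \cref{lem:favorite_to_left} the agents whose favorite is $g_\ell$ form the leftmost block of agents, so the leftmost agent of all must be in that block (the degenerate case $g_\ell=g_r$, i.e.\ $\Gin=\varnothing$, makes the statement vacuous for the ``$\succ_{a_\ell^*}$'' part once we note $g_r=g_\ell$ is ranked last among $G_+$, and $\Gout$ then contains only items ranked below $g_\ell=g_r$). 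So $a_\ell^*$ is itself a legitimate candidate to play the role of ``$a_i$'' in the definition of $\Gin(\cdot,\cdot)$.

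Now I would argue by contradiction. Suppose some $g \in \Gout$ has $g \succ_{a_\ell^*} g_r$. Consider the pair $(a_\ell^*, a_r)$: it is an admissible pair since $\fav(a_\ell^*)=g_\ell$ and $\fav(a_r)=g_r$. By definition,
\[
\Gin(a_\ell^*, a_r) = \{g' \in G : g' \succ_{a_\ell^*} g_r \ \text{or}\ g' \succ_{a_r} g_\ell\}.
\]
I claim $\Gin(a_\ell,a_r) \subseteq \Gin(a_\ell^*,a_r)$ and that $g$ belongs to the latter but not the former, which contradicts the maximality of $|\Gin(a_\ell,a_r)|$. The containment requires showing: if $g' \succ_{a_\ell} g_r$ then $g' \succ_{a_\ell^*} g_r$. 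This is exactly the kind of monotonicity already proved inside \cref{lem:order}: $a_\ell$ lies weakly to the right of $a_\ell^*$, both have favorite $g_\ell$ (so both lie weakly left of $g_r$), $g'$ lies weakly left of $g_r$, and a short case split on whether $g'$ is left or right of $a_\ell^*$ gives $d(a_\ell^*,g') \le d(a_\ell^*,g_r)$ — precisely the argument reproduced in the proof of \cref{lem:order}. The ``$g' \succ_{a_r} g_\ell$'' disjunct is shared by both sets, so it carries over trivially. Finally $g \in \Gin(a_\ell^*,a_r)$ by the assumption $g \succ_{a_\ell^*} g_r$, whereas $g \notin \Gin(a_\ell,a_r)$ since $g \in \Gout$. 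Hence $|\Gin(a_\ell^*,a_r)| > |\Gin(a_\ell,a_r)|$, contradicting the choice of $(a_\ell,a_r)$. Therefore $g_r \succ_{a_\ell^*} g$ for all $g \in \Gout$, and symmetrically $g_\ell \succ_{a_r^*} g$.

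I expect the main obstacle to be the monotonicity step — transferring ``$a_\ell$ prefers $g'$ to $g_r$'' to ``$a_\ell^*$ prefers $g'$ to $g_r$'' — because it is the one place a geometric case analysis on the line is genuinely needed rather than a direct appeal to a prior lemma; but since essentially the same inequality chain ($d(a_\ell^*,g') \le d(a_\ell,g') \le d(a_\ell,g_r) \le d(a_\ell^*,g_r)$ when $g'$ is left of $a_\ell^*$, and the even easier ``betweenness'' case when $g'$ is right of $a_\ell^*$) already appears verbatim in the proof of \cref{lem:order}, I would simply cite that argument rather than redo it. One small point to be careful about: one must confirm $a_\ell^*$ actually has $\fav(a_\ell^*)=g_\ell$ and not merely that it lies left of everything — this uses \cref{lem:favorite_to_left} together with the fact that $g_\ell$ is defined as the leftmost item of $G_+$, so that the leftmost agent's favorite cannot be any item to the right of $g_\ell$.
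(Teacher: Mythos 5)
Your proposal is correct and takes essentially the same route as the paper's own proof: both hinge on observing that $a_\ell^*$ has favorite $g_\ell$ and that replacing $a_\ell$ by $a_\ell^*$ can only enlarge $\Gin(\cdot,\cdot)$, via the same case split and chain $d(a_\ell^*,g') \le d(a_\ell,g') \le d(a_\ell,g_r) \le d(a_\ell^*,g_r)$, followed by an appeal to the maximality of the chosen pair — you package this as a contradiction while the paper argues directly that the leftmost/rightmost pair is itself a maximizer, which is a cosmetic difference. (Both treatments handle the degenerate case $g_\ell=g_r$, where $\Gin=\varnothing$, with the same level of informality.)
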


\begin{proof}
    We will prove this statement by proving that $(a_\ell^*, a_r^*) \in \arg\max_{a_i, a_j} |\Gin(a_i, a_j)|$ which implies that the set $\Gin$ defined in the first step of \cref{algname:order_match} is the set $\Gin(a_\ell^*,a_r^*)$. The definition of the set $\Gin(a_\ell^*,a_r^*)$ implies that for all $g \in \Gout$, $g_r \succ_{a_\ell^*} g$ and $g_\ell \succ_{a_r^*} g$

    If $a_\ell^*$ is to the right of $g_r$, then $g_r$ would be the favorite item of all agents because $g_r$ is the rightmost item with positive plurality and $a_\ell^*$ would be in between all the other agents and $g_r$, which would result in \cref{algname:order_match} defining $g_\ell = g_r$ and $\Gin$ would be empty no matter which agents are used to define it and the proof is complete. For the case that $a_\ell^*$ is to the left of $g_r$, we will show that for any $a_\ell'$ such that $\fav(a_\ell') = g_\ell$, $|\Gin(a_\ell^*, a_r^*)| \geq |\Gin(a_\ell', a_r^*)|$. We will do this by showing that all items that are in $\Gin(a_\ell', a_r^*)$ are also in $\Gin(a_\ell^*, a_r^*)$. $a_\ell'$ must be to the left of $g_r$ because $\fav(a_\ell')=g_\ell$, $g_\ell \neq g_r$, and if $a_\ell'$ were to the right of $g_r$, they would have $g_r$ as their favorite item because $g_r$ is the rightmost item with positive plurality. $a_\ell'$ is to the right of $a_\ell^*$ because $a_\ell^*$ is the leftmost agent. All items in between $a_\ell^*$ and $g_r$ are included in $\Gin(a_\ell^*, a_r^*)$, and similarly all items in between $a_\ell'$ and $g_r$ are included in $\Gin(a_\ell', a_r^*)$. Because $a_\ell'$ is to the right of $a_\ell^*$, and both are to the left of $g_r$, any item in between $a_\ell'$ and $g_r$ is also in between $a_\ell^*$ and $g_r$ and thus also in $\Gin(a_\ell^*, a_r^*)$. 
    
    As for any item that is to the left of $a_\ell'$ that is included in $\Gin(a_\ell', a_r^*)$, let this item be $g_\ell'$, if $g_\ell'$ is to the right of $a_\ell^*$ then it is in between $a_\ell^*$ and $g_r$ and thus also in $\Gin(a_\ell^*, a_r^*)$. If $g_\ell'$ is to the left $a_\ell^*$ then it is closer to $a_\ell^*$ than $a_\ell'$, i.e., $d(a_\ell^*, g_\ell') \leq d(a_\ell', g_\ell')$. Because $a_\ell'$ is to the right of $a_\ell^*$, and both are to the left of $g_r$, then $a_\ell'$ is closer to to $g_r$ than $a_\ell^*$ is, i.e., $d(a_\ell',g_r) \leq d(a_\ell^*,g_r)$. Because $g_\ell' \in \Gin(a_\ell', a_r^*)$ then $d(a_\ell', g_\ell') \leq d(a_\ell',g_r)$. Thus $d(a_\ell^*, g_\ell') \leq d(a_\ell', g_\ell') \leq d(a_\ell',g_r) \leq d(a_\ell^*,g_r)$, which implies $d(a_\ell^*,g_\ell') \leq d(a_\ell^*,g_r)$, which implies $g_\ell' \in \Gin(a_\ell^*, a_r^*)$. It has been shown that every item in $\Gin(a_\ell', a_r^*)$ is also in $\Gin(a_\ell^*, a_r^*)$, thus $|\Gin(a_\ell^*, a_r^*)| \geq |\Gin(a_\ell', a_r^*)|$. A symmetric argument can show that $a_r^*$ maximizes the size of $\Gin$ for a fixed $a_\ell^*$.
\end{proof}

\begin{lemma}\label{lem:upper-bound-edges}
    Consider any permutation-graph $\Sg_M = (A,E)$ without forward or backward edges.
    For any $(a_i,a_j) \in E$,
    
    $$d(a_i,M^*(a_j)) \leq d(a_i,M^*(a_i)) + 2 \cdot d(a_j,M^*(a_j)).$$
\end{lemma}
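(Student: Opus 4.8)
The plan is to exploit the structure of the graph $\Sg_M$: since it has no forward and no backward edges, every edge $(a_i,a_j)\in E$ is either internal (both $a_i,a_j\in A_x$ for the same $x$), inward (with $a_i\in\Aout$ and $a_j\in\Ain$), or an edge with $a_j\in\Aout$ (so $M^*(a_j)\in\Gout$). I would treat these cases separately. In all cases, the right-hand side $d(a_i,M^*(a_i)) + 2d(a_j,M^*(a_j))$ should be bounded below using the triangle inequality together with the ordering facts from \cref{lem:opt_structure}, \cref{lem:favorite_to_left}, and the structural lemmas \cref{lem:inward-right} and \cref{lem:star_max_gin}.

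For the \emph{internal} case, where $a_i,a_j\in A_x$ share the same favorite item $g$, the key observation is that $g$ lies between the optimal items $M^*(a_i)$ and $M^*(a_j)$ is not quite right in general, but $a_i$, $a_j$, $M^*(a_i)$, $M^*(a_j)$ are all relatively close: $a_i$ and $a_j$ both lie close to $g$, and by \cref{lem:opt_structure} their optimal items respect the same ordering. I would argue $d(a_i, M^*(a_j)) \leq d(a_i, a_j) + d(a_j, M^*(a_j))$ and then bound $d(a_i,a_j)$ in terms of $d(a_i,M^*(a_i)) + d(a_j,M^*(a_j))$ by routing both agents through their common favorite item $g$ (since $d(a_i,g),d(a_j,g)$ are each at most the distance to \emph{any} item, in particular at most $d(a_i,M^*(a_i))$ and $d(a_j,M^*(a_j))$ respectively, as $g=\fav(a_i)=\fav(a_j)$). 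Combining these inequalities gives the claimed bound, possibly with the factor $2$ on the $d(a_j,M^*(a_j))$ term absorbing the slack.

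For the \emph{inward} case, $a_i\in\Aout$, so $M^*(a_i)\in\Gout$, and $a_j\in\Ain$, so $M^*(a_j)\in\Gin$. Here I would split on whether $a_i\in\Aout^\ell$ or $a_i\in\Aout^r$, use \cref{lem:inward-right} to pin down the position of $\fav(a_j)$ relative to $\Ain$, and use \cref{lem:star_max_gin} to control $d(a_i, \cdot)$ via the extreme agents $a_\ell^*,a_r^*$. Concretely, if $a_i\in\Aout^r$ then $M^*(a_i)$ is to the right of all of $\Gin$, and \cref{lem:inward-right} forces $M^*(a_j)$ to be the rightmost item of $\Gin$, so $M^*(a_j)$ sits between $\fav(a_i)$-region and $M^*(a_i)$; the distance $d(a_i, M^*(a_j))$ is then at most $d(a_i, M^*(a_i))$ plus a term bounded by $d(a_j, M^*(a_j))$ via the fact that $a_j$ is close to its favorite item which is close to $M^*(a_j)$. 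The remaining case where $a_j\in\Aout$ (so $M^*(a_j)\in\Gout$) and $(a_i,a_j)$ is not forward/backward/internal/inward can only arise when $a_i\in\Aout$ as well, i.e.\ it is an edge within $\Aout$; here both optimal items are in $\Gout$ on the same side, and a direct triangle-inequality routing through $M^*(a_i)$ suffices since the $\Gout$ items on one side are "monotone" and $a_i,a_j$ are ordered consistently with them by \cref{lem:opt_structure}.

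The main obstacle I expect is the inward case: getting the constant exactly $2$ (rather than $3$ or worse) requires carefully using that the matched item $M^*(a_j)$ is forced to be the \emph{extreme} item of $\Gin$ on $a_i$'s side — this is exactly what \cref{lem:inward-right} and its symmetric counterpart buy us — and then bounding $d(a_j, \text{that extreme item})$ by $2d(a_j, M^*(a_j))$ using $\fav(a_j)$ as a waypoint, since $\fav(a_j)$ is itself the extreme item of $\Gin$ (by \cref{lem:inward-right}) and $d(a_j,\fav(a_j)) \leq d(a_j, M^*(a_j))$. Ensuring these geometric inequalities compose correctly, and handling the degenerate sub-case $\Gin=\varnothing$ (where there are no internal or inward edges at all, and the bound is immediate from \cref{lem:opt_structure}), will be the bulk of the careful work; the internal and within-$\Aout$ cases should be comparatively routine.
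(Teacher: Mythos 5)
There is a genuine gap in your case enumeration. You assert that an edge with $a_j \in \Aout$ which is not forward, backward, internal, or inward ``can only arise when $a_i \in \Aout$ as well.'' That is false: forward, backward and internal edges require both endpoints in $\Ain$, and an inward edge requires $a_j \in \Ain$, so an edge from $a_i \in \Ain$ to $a_j \in \Aout$ belongs to none of the four named types and is entirely possible in a permutation-graph with no forward or backward edges (e.g., the algorithm's last step arbitrarily hands $\Gout$ items to unmatched agents, which may well lie in $\Ain$). The paper treats this $a_i \in \Ain$, $a_j \in \Aout$ situation as a separate case, and it is not routine: one splits on whether $d(a_i,M^*(a_i)) < d(a_i,g_r)$ and, in each subcase, invokes \cref{lem:star_max_gin} (which itself rests on the tie-breaking choice of $a_\ell,a_r$ maximizing $|\Gin|$) to get either $d(a_j,M^*(a_i)) \le d(a_j,M^*(a_j))$ or $d(a_i,g_r)\le d(a_i,M^*(a_i))$ together with $d(a_j,g_r)\le d(a_j,M^*(a_j))$. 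Relatedly, your ``within $\Aout$'' case assumes both optimal items sit in $\Gout$ on the same side; an edge from $\Aout^\ell$ to $\Aout^r$ (or vice versa) can occur, and the paper handles it by routing through $g_\ell$ and again appealing to \cref{lem:star_max_gin}, not by a one-sided monotonicity argument.

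A secondary problem is your reliance on \cref{lem:inward-right} in the inward case. The lemma you are proving concerns an \emph{arbitrary} permutation-graph without forward or backward edges, whereas \cref{lem:inward-right} is proved for edges arising from the matching computed by the algorithm (its proof reasons about which items were already taken when $a_i$ was matched), so importing it here would either restrict the statement or require an extra argument that the property persists through the reduction. Moreover, it only says that $\fav(a_j)$ is weakly rightmost among the favorites of agents in $\Ain$; it does not say $M^*(a_j)$ is the rightmost item of $\Gin$ (indeed $\Gin$ may contain items to the right of $g_r$), so the step placing $M^*(a_j)$ between $a_i$'s region and $M^*(a_i)$ does not follow as stated. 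The paper's inward case avoids this entirely: it routes through $g_r$ (or observes $d(a_i,M^*(a_j))\le d(a_i,M^*(a_i))$ directly when $M^*(a_j)$ lies left of $g_r$), using only \cref{lem:star_max_gin}, \cref{lem:favorite_to_left}, and \cref{lem:opt_structure}. Your internal case coincides with the paper's and is fine.
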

\begin{proof}
    Given that $(a_i,a_j)$ is neither a forward or backward edge, then $(a_i,a_j)$ is either an internal or inward edge. It could also be the case that $a_i \in \Ain$ and $a_j \in \Aout$, or both $a_i,a_j \in \Aout$. We prove that the claimed inequality works for each case separately.

    \medskip
    \noindent
    \textbf{Case 1 [$(a_i,a_j)$ is an internal edge]:} Because $(a_i,a_j)$ is an internal edge, both agents share the same favorite item, say $g^*$. Using the triangle inequality and the fact that $d(a,g^*) \leq d(a,g)$ for any $a \in \{a_i,a_j\}$ and $g \in G$, we have:
    \begin{align*}
        d(a_i,M^*(a_j)) &\leq d(a_i,g^*) + d(a_j,g^*) + d(a_j,M^*(a_j))\\
        &\leq d(a_i,M^*(a_i)) + d(a_j,M^*(a_j)) + d(a_j, M^*(a_j))\\
        &= d(a_i,M^*(a_i)) + 2 \cdot d(a_j,M^*(a_j)).      
    \end{align*}
    
    \medskip
    \noindent
    \textbf{Case 2 [$(a_i,a_j)$ is an inward edge]:} Without loss of generality, let $a_i \in \Aout^\ell$, and thus $M^*(a_i) \in \Gout^\ell$.
    If $a_j$ is to the right of $g_r$, then we have:
    \begin{align*}
        d(a_i,M^*(a_j)) &\leq d(a_i,g_r) + d(a_j, g_r) + d(a_j,M^*(a_j))\\
        &\leq d(a_i,M^*(a_i)) + d(a_j,M^*(a_j)) + d(a_j, M^*(a_j))\\
        &= d(a_i,M^*(a_i)) + 2 \cdot d(a_j,M^*(a_j)).
    \end{align*}
    The first inequality holds due to the triangle inequality. For the second inequality, since $a_j$ is to the right of $g_r$, $g_r$ must be the favorite item of $a_i$ since it is the rightmost item with positive plurality, and thus $d(a_j,g_r) \leq d(a_j,g)$ for any $g \in G$. Similarly, if $a_i$ is to the right of $g_r$, $d(a_i,g_r) \leq d(a_i,g)$ for any $g \in G$. Otherwise, since $a_\ell^*$ (the leftmost agent) prefers $g_r$ over all items in $\Gout$ (by \cref{lem:star_max_gin}) and $a_i$ is to the right of $a_\ell^*$, $a_i$ 
    must be between $a_\ell^*$ and $g_r$; thus, $d(a_i,g_r) \leq d(a_i,M^*(a_i))$.
    
   Otherwise, $a_j$ is to the left of $g_r$. 
   If $M^*(a_j)$ is to the right of $g_r$, then $a_j$ prefers $g_r$ to $M^*(a_j)$, so the above inequalities still hold. If $M^*(a_j)$ is to the left of $g_r$, then $a_i$ must prefer $M^*(a_j)$ to $M^*(a_i)$ because either $M^*(a_j)$ is in between $a_i$ and $g_r$ and $a_i$ prefers $g_r$ to $M^*(a_i)$ (as shown above), or $M^*(a_j)$ is to the left of $a_i$, but must still be to the right of $M^*(a_i)$ because all items in $\Gin$ are to the right of items in $\Gout^\ell$. Hence, we have:
    \begin{align*}
        d(a_i,M^*(a_j)) &\leq d(a_i,M^*(a_i))\\
        &\leq d(a_i,M^*(a_i)) + d(a_j,M^*(a_j)) + d(a_j, M^*(a_j))\\
        &= d(a_i,M^*(a_i)) + 2 \cdot d(a_j,M^*(a_j)).
    \end{align*}

    \medskip
    \noindent
    \textbf{Case 3 [$a_i \in \Ain$ and $a_j \in \Aout$]:} 
    Without loss of generality, let $a_j \in \Aout^\ell$, and thus $M^*(a_j) \in \Gout^\ell$. First, let $d(a_i,M^*(a_i)) < d(a_i,g_r)$. For this to be the case, $M^*(a_i)$ must be to the left of $g_r$ because if it were to the right of $g_r$ then either $a_i$ is to the left of $g_r$ (and thus prefers $g_r$ because $g_r$ is between $a_i$ and $M^*(a_i)$), or $a_i$ is to the right of $g_r$ (and thus $g_r$ must be their favorite item because $g_r$ is the rightmost item with positive plurality). We have:
    \begin{align*}
        d(a_i,M^*(a_j)) &\leq d(a_i,M^*(a_i)) + d(a_j, M^*(a_i)) + d(a_j, M^*(a_j))\\
        &\leq d(a_i,M^*(a_i)) + d(a_j,M^*(a_j)) + d(a_j, M^*(a_j))\\
        &= d(a_i,M^*(a_i)) + 2 \cdot d(a_j,M^*(a_j)).   
    \end{align*}
    The first inequality holds due to the triangle inequality. For the second inequality, if $M^*(a_i)$ is to the left of $a_j$, observe that $M^*(a_i)$ is to the right of $M^*(a_j)$ because all items in $\Gin$ are to the right of all items in $\Gout^\ell$, so $a_j$ must prefer $M^*(a_i)$ over $M^*(a_j)$ because it is between $a_j$ and $M^*(a_j)$. If $a_j$ is to the right of $g_r$, it must be the case that $M^*(a_i)$ is to the left of $a_j$. If $a_j$ is to the left of $g_r$, observe that $a_\ell^*$ (the leftmost agent) prefers $g_r$ to all items in $\Gout$ (by \cref{lem:star_max_gin}). Since $a_j$ is to the right of $a_\ell^*$, $a_j$ is between $a_\ell^*$ and $g_r$, and thus prefers $g_r$ to any item in $\Gout^\ell$. Therefore, if $M^*(a_i)$ is to the right of $a_j$, $a_j$ must prefer it to $M^*(a_j)$, because $a_j$ prefers $g_r$ (which is even farther to the right as shown above) to $M^*(a_j)$. Thus $d(a_j,M^*(a_i)) \leq d(a_j,M^*(a_j))$.

    Second, let $d(a_i,g_r) \leq d(a_i,M^*(a_i))$. We have:
    \begin{align*}
        d(a_i,M^*(a_j)) &\leq d(a_i,g_r) + d(a_j, g_r) + d(a_j, M^*(a_j))\\
        &\leq d(a_i,M^*(a_i)) + d(a_j,M^*(a_j)) + d(a_j, M^*(a_j))\\
        &= d(a_i,M^*(a_i)) + 2 \cdot d(a_j,M^*(a_j)).     
    \end{align*}
    The first inequality holds due to the triangle inequality. For the second inequality, if $a_j$ is to the right of $g_r$, $g_r$ must be $a_j$'s favorite item because $g_r$ is the rightmost item with positive plurality. If $a_j$ is to the left of $g_r$, $a_\ell^*$ (the leftmost agent) preferring $g_r$ to all items in $\Gout$ implies $a_j$ is closer to $g_r$ then any item in $\Gout^\ell$ because $a_j$ is to the right of $a_\ell^*$, and thus in between $a_\ell^*$ and $g_r$. Thus, overall, $d(a_j,g_r) \leq d(a_j,g)$ for any $g \in \Gout^\ell$.

    \medskip
    \noindent
    \textbf{Case 4 [$a_i,a_j \in \Aout$]:} Note that $M^*(a_i),M^*(a_j) \in \Gout$ because both $a_i,a_j \in \Aout$. We have:
    \begin{align*}
        d(a_i,M^*(a_j)) &\leq d(a_i,g_\ell) + d(a_j,g_\ell) + d(a_j,M^*(a_j))\\
        &\leq d(a_i,M^*(a_i)) + d(a_j,M^*(a_j)) + d(a_j, M^*(a_j))\\
        &= d(a_i,M^*(a_i)) + 2 \cdot d(a_j,M^*(a_j)).      
    \end{align*}
    The first inequality holds due to the triangle inequality. For the second inequality, first observe that $a_r^*$ (the rightmost agent) prefers $g_\ell$ to all items in $\Gout$ (by \cref{lem:star_max_gin}). Since $a_i$ and $a_j$ are to the left of $a_r^*$, they are closer to $g_\ell$ than any item
    in $\Gout^r$. If these agents are to the left of $g_\ell$, then $g_\ell$ must be their favorite item since $g_\ell$ is the leftmost item with positive plurality. So, overall, $d(a,g_\ell) \leq d(a,g)$ for any $a \in \{a_i,a_j\}$ and $g \in \Gout$.
\end{proof}

We can now prove the main result of this section.

\begin{proof}[Proof of \cref{thm:distortion(UB}]
By \cref{lem:back-edges}, the permutation-graph of the matching returned by \cref{algname:order_match} has no backward edges, and by \cref{lem:forward-edges}, the $k$-centrum cost of this matching can be upper-bounded by the $k$-centrum cost of another matching whose permutation-graph has no forward or backward edges. 
We now show that for any matching $M$ whose permutation-graph $\Sg_M = (A,E)$ contains no forward or backward edges, we have $\C{k}(M) = \C{k}(\Sg_M) \leq 3 \cdot \C{k}(M^*)$ for all $k \in [n]$. Specifically, if we let $E_k\subseteq E$ be the $k$ edges $(a_i, a_j)$ of $E$ with the largest $d(a_i, M^*(a_j))$ cost, then 
    \begin{align*}
        \C{k}(M)
        &= \sum_{(a_i, a_j)\in E_k} d(a_i,M^*(a_j))\\
        &\leq \sum_{(a_i, a_j)\in E_k} \bigg( d(a_i,M^*(a_i))+ 2 \cdot d(a_j,M^*(a_j)) \bigg)\\
        &= \sum_{a_i : (a_i, a_j)\in E_k}  d(a_i,M^*(a_i))+  \sum_{a_j : (a_i, a_j)\in E_k} 2 \cdot d(a_j,M^*(a_j))\\
        &\leq 3 \cdot \sum_{q=1}^k{\max\limits_{a \in A}}^q d(a,M^*(a)) \\
        &= 3 \cdot \C{k}(M^*),
    \end{align*}
where the first inequality uses \cref{lem:upper-bound-edges} for every edge in $ E_k$ and the subsequent equality breaks down the contribution of each edge $(a_i, a_j)\in E_k$ to the contribution of $a_i$ and the contribution of $a_j$. The final inequality uses the fact that: \emph{(i)} all nodes have in- and out-degree $1$ in any permutation-graph, so each agent $a$ participates in at most two edges in $E_k$, $(a, a_j)$ and $(a_i, a)$, and \emph{(ii)} the total cost of the $k$ agents $a_i: (a_i, a_j)\in E_k$  (and that of the $k$ agents $a_j: (a_i, a_j)\in E_k$) is at most the total cost of the $k$ maximum-cost agents.
\end{proof}

\section{Two-sided case: Achieving the optimal matching}
In this section we consider a two-sided matching, in which instead of $n$ agents that have preferences of $n$ items, we have a set of $n$ {\em takers} denoted by $A$ and a set of $n$ {\em givers} denoted by $B$. We have access to the ranking of each taker over the givers, and of each giver over the takers. We will refer to each of these sets as a side of the matching problem, and will use the term {\em agent} to refer to a taker or a giver.

We will show that using the ordinal rankings of the takers and the givers, we can compute an optimal matching. In particular, to do so, it suffices to show that we can recover the true ordering of the takers and the giver on the line. Then, we can greedily compute an optimal matching using \Cref{lem:opt_structure} by considering one side as the items.

\begin{theorem}
\label{thm:two-sided}
    For the two-sided matching problem on a line metric, we can compute an optimal matching with respect to the $k$-centrum cost $\C{k}$ for any $k\in [n]$.
\end{theorem}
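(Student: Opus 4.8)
The plan is to show that the ordinal profile $\bsucc$ lets us reconstruct the matching $M^*$ of \cref{lem:opt_structure}, which is simultaneously optimal for $\C{k}$ for every $k\in[n]$. First note that \cref{lem:opt_structure} extends verbatim to the two-sided setting: its proof only manipulates the positions of two point sets on the line via swaps, with no use of which set is ``agents'' and which is ``items''. Hence, letting the givers play the role of the items, the matching that pairs the $i$-th taker (in left-to-right order) with the $i$-th giver (in left-to-right order) is optimal for all $\C{k}$, and it is exactly $M^*$. Computing $M^*$ therefore requires recovering only three pieces of information from $\bsucc$: the left-to-right order of the takers, the left-to-right order of the givers (each up to a single global reversal), and the \emph{alignment} between the two orientations (which end of the givers' order sits at the ``left'' end of the takers'). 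It is worth noting that the full interleaved order of all $2n$ agents need not be recoverable --- one can exhibit profiles admitting two consistent metrics whose interleavings are not reversals of each other --- so the first step of the argument is to isolate this weaker, and sufficient, target.

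I would then recover this information by exploiting that each giver's ranking of the takers is single-peaked with respect to the takers' true axis, and symmetrically for the takers' rankings of the givers. In the easy case where one side lies entirely to one side of the other (say every giver is to the right of every taker), all givers rank the takers identically in right-to-left order, revealing the takers' axis up to reversal; all takers likewise rank the givers identically, revealing the givers' axis; and the alignment is immediate. In the general ``interleaved'' case, a single side's rankings need not determine its axis (the voters may be clustered into a small interval), and here the extra leverage of the two-sided structure is essential: the takers' rankings of the givers localize each giver relative to the takers and vice versa, and by cross-referencing the two families of rankings one can stitch together both per-side orders. To fix the alignment, I would anchor on a \emph{mutual-favorite} pair, i.e., a taker $a$ and a giver $b$ with $\fav(a)=b$ and $\fav(b)=a$; such a pair always exists (the closest taker-giver pair is one), and on a line such a pair must be consecutive in the combined order, which --- together with the recovered per-side ranks of $a$ and $b$ --- pins down how the two orientations line up. Given all of this, the algorithm outputs the matching pairing the $i$-th taker with the $i$-th giver.

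The main obstacle is the recovery step in the interleaved case: proving that, whenever neither side lies entirely on one side of the other, the combination of the two single-peaked families of rankings always suffices to determine each per-side order up to reversal and to resolve the alignment. A closely related subtlety that must be checked is the invariance used in the first paragraph: although distinct metrics consistent with $\bsucc$ can induce genuinely different interleavings of the $2n$ points, they must all induce the same sorted order on each side and the same alignment, so that the computed matching is unambiguously $M^*$ and the distortion is exactly $1$. Once these are established, simultaneous optimality of the output for every $\C{k}$ is immediate from \cref{lem:opt_structure}.
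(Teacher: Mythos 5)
Your overall plan coincides with the paper's: reduce to recovering each side's left-to-right order up to reversal plus the relative orientation of the two orders, then output the greedy matching, whose optimality for every $\C{k}$ follows from \cref{lem:opt_structure} (and you are right that this theorem applies verbatim with givers playing the role of items). However, the heart of the theorem is precisely the recovery step, and your proposal does not prove it: you explicitly defer ``the main obstacle'' of showing that, in the interleaved case, cross-referencing the two families of single-peaked rankings determines each per-side order up to reversal. As written, this is an acknowledged gap rather than an argument. For reference, the paper closes it with a short concrete construction that you could have carried out with the tools you already set up: consider which givers appear at the \emph{bottom} of the takers' rankings. Either all takers agree on a single least-preferred giver $b$, in which case all takers lie on one side of $b$ and $b$'s ranking $\succ_b$ of the takers is exactly their true order; or exactly two givers $b_\ell, b_r$ appear at the bottom, and these are necessarily the extreme (leftmost and rightmost) givers. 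In the latter case, the takers whose favorite giver is $b_\ell$ form a prefix of the takers' order (they include every taker left of $b_\ell$), and they are ordered correctly by $\succ_{b_r}$, while the remaining takers all lie to the right of $b_\ell$ and are ordered correctly by $\succ_{b_\ell}$; concatenating gives the takers' true order, and the givers' order is obtained symmetrically. Note that this resolves your ``interleaved'' case without any delicate stitching argument: one only needs the rankings of the two extremal givers, not the whole family.

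A second, smaller issue is your alignment step. Your mutual-favorite anchor $(a,b)$ with $\fav(a)=b$ and $\fav(b)=a$ does exist (the closest taker--giver pair works) and is indeed consecutive on the line, but the claim that its adjacency ``together with the recovered per-side ranks of $a$ and $b$ pins down'' the relative orientation is asserted, not proved: you would still need to rule out that the reversed relative orientation also admits a consistent metric in which $a$ and $b$ remain adjacent (equivalently, you need a uniqueness statement for the orientation, which is also what your ``invariance'' remark requires). The paper's alignment argument avoids this by using an extreme taker $a$ directly: if the takers have two distinct bottom choices, then $a$'s least-preferred giver must be the extreme giver on the opposite end; otherwise all givers lie on one side of the takers (and vice versa), so $\fav(a)$ must be the extreme giver on $a$'s end. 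Either way the two orientations are synchronized by a one-line observation. So: right strategy and correct reduction, but the two lemmas that make the theorem true (per-side order recovery in the general case, and orientation uniqueness/alignment) are missing from your write-up.
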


\begin{proof}
    We first show that we can recover the true relative ordering of the takers as well as the true relative ordering of the givers. We do so for the takers; the argument for the givers is symmetric. 
    We consider two cases: Either all takers agree on their least-preferred giver, or there are two givers that appear at the bottom of the rankings of the takers.

    In the first case, let $b$ be the giver that appears at the bottom of every taker's ranking. We claim that all takers are on the same side of $b$. If not, then there is a taker $i$ that is to the left of $b$ and a taker $j$ that is to the right of $b$. This implies that $i$ weakly prefers $b$ to $\fav(j)$, which contradicts the fact that $b$ is the least-preferred giver of all takers. Now, since all takers are on the same side of $b$, $\succ_b$ is the true ranking of the takers.

    In the second case, let $b_r$ and $b_\ell$ be the two extreme givers, and for the ease of exposition assume that $b_r$ is to the right of $b_\ell$. First, we claim that there is no giver to the left of $b_\ell$, and similarly to the right of $b_r$. For the sake of contradiction, assume that there is a giver $b$ to the left of $b_\ell$, and let $i$ be a taker with $b_r \succ_i b_\ell$. This means that this taker is to the right of $b_\ell$, and hence $b_r \succ_i b_\ell \succ_i b$, which contradicts the fact that only $b_r$ and $b_\ell$ appear at the bottom of the takers' rankings.

Let $A^{b_\ell}$ be the set of takers that have $b_\ell$ as their top choice.
Since $b_\ell$ is the leftmost giver, all the takers to the left of it are in $A^{b_\ell}$. Hence, all the takers in $A \setminus A^{b_\ell}$ are to the right of $b_\ell$ and to the right of all the takers in $A^{b_\ell}$. Consequently, the ordering $\succ_{b_\ell}$ over $A \setminus A^{b_\ell}$ is the true ordering of these takers. Similarly, $\succ_{b_r}$ recovers the true ordering of $A^{b_\ell}$. By putting these two together, we have the true ordering of the whole set of takers.

Now that we have the true relative ordering of the takers and the givers, we can greedily match them to get the optimal matching $M^*$. The only issue is that these orderings are correct up to reversal and can be oriented in opposite directions. In other words, one ordering can be from left to right and the other from right to left. So, we have to make them consistent. We consider two cases depending on if at least one of the two sides have two different agents at the bottom of their lists (case 1), or both sides have the same (case 2).

\smallskip
\noindent 
{\bf Case 1.} Assume that the takers have two different givers as their least preferred. Let $a$ be an extreme taker (given the ordering of the takers, which we now know), and assume that $a$ is the rightmost one. Then, $a$'s least preferred giver must be the leftmost one. We use this information to give the same direction to the relative orderings of the takers and the givers. 

\smallskip
\noindent 
{\bf Case 2.} Here, each side has the same least-preferred agent in the other side.  
Let $a$ be an extreme taker (according to the ordering of the takers), and assume that $a$ is to the rightmost one. Then, $a$ is to the same side of all the givers, and thus $\fav(a)$ has to be one of the extreme takers. Hence, we order the givers so that $\fav(a)$ is the rightmost one. 

Now that the orderings are in the same direction we can greedily match them from left to right, thus computing the optimal matching $M^*$.
\end{proof}

\subsection{Query Model}
We now consider a setting where we are given some prior information about the preferences of the agents, and we can obtain further information by making specific types of {\em queries}. In particular, we consider two types of prior information: 
\begin{itemize}
    \item {\em One-sided setting}: We are given the complete ordinal preferences of the takers in $A$ over the givers in $B$. 
    \item {\em Zero-knowledge setting}: No initial information about the preferences of the agents is given.
\end{itemize} 
Similarly, we consider two types of queries that we can make to gain more information:
\begin{itemize}
    \item {\em Full-preference queries}: We ask an agent to reveal its ordinal preference over all the agents on the other side. 
    \item {\em Rank queries}: We ask an agent to reveal its $t$-th preferred agent on the other side. 
\end{itemize}
For each combination of prior information and type of queries as defined above, we are interested in pinpointing the number of queries that are sufficient to compute an optimal matching. Given that an optimal matching can be computed for the two-sided matching problem as we showed above, we are essentially aiming to find out how many queries we need to make to reveal the true ordering of the agents of each side on the line. 

\begin{figure*}[t]
\begin{subfigure}[t]{\textwidth}
    \centering
    \scalebox{0.85}{
    \begin{tikzpicture}[node distance={13mm}, thick, c/.style = {draw, circle, minimum size=2mm}, s/.style = {draw, rectangle, minimum size=2mm}, t/.style = {draw, regular polygon, regular polygon sides=3, minimum size=2mm]}, invisible/.style = {minimum size=0mm, inner sep=0mm, text width=0mm, outer sep=0mm, draw=none}]
    \node[c] [label=below:$b_n$] (1) {};
    \node[c] [right of=1, label=below:$b_{n-1}$] (2) {};
    \node[c] [right of=2, label=below:$b_{i-1}$] (i) {};
    \node[invisible] [right of=i] (ii) {};
    \node[c] [right of=ii, label=below:$b_{i+1}$] (3) {};
    \node[c] [right of=3, label=below:$b_2$] (4) {};
    \node[c] [right of=4, label=below:$b_1$] (5) {};
    \node[c] [right of=5, label=below:$a_1$] (6) {};
    \node[c] [right of=6, label=below:$a_2$] (7) {};
    \node[c] [right of=7, label=below:$a_{n-1}$] (8) {};
    \node[c] [right of=8, label=below:$a_{n}$] (9) {};
    \node[invisible] [right of=9] (10) {};
    \node[c] [right of=10, label=below:$b_{i}$] (11) {};
    \draw (1) -- node[midway, above] {$n$} (2);
    \draw (4) -- node[midway, above] {$n$} (5);
    \draw[draw=none] (2) -- node[midway] {$\cdots$} (i);
    \draw (i) -- node[midway, above] {$2n$} (3);
    \draw[draw=none] (3) -- node[midway] {$\cdots$} (4);
    \draw (5) -- node[midway, above] {$n$} (6);
    \draw (6) -- node[midway, above] {$1$} (7);
    \draw[draw=none] (7) -- node[midway] {$\cdots$} (8);
    \draw (8) -- node[midway, above] {$1$} (9);
    \draw (9) -- node[midway, above] {$ni$} (11);
    \end{tikzpicture}
    }
\caption{Top metric.}
\label{fig:topmetric}
\end{subfigure}

\begin{subfigure}[t]{\textwidth}
    \centering
    \scalebox{0.85}{
    \begin{tikzpicture}[node distance={13mm}, thick, c/.style = {draw, circle, minimum size=2mm}, s/.style = {draw, rectangle, minimum size=2mm}, t/.style = {draw, regular polygon, regular polygon sides=3, minimum size=2mm]}, invisible/.style = {minimum size=0mm, inner sep=0mm, text width=0mm, outer sep=0mm, draw=none}]
    \node[c] [label=below:$b_n$] (1) {};
    \node[c] [right of=1, label=below:$b_{n-1}$] (2) {};
    \node[c] [right of=2, label=below:$b_{j-1}$] (i) {};
    \node[invisible] [right of=i] (ii) {};
    \node[c] [right of=ii, label=below:$b_{j+1}$] (3) {};
    \node[c] [right of=3, label=below:$b_2$] (4) {};
    \node[c] [right of=4, label=below:$b_1$] (5) {};
    \node[c] [right of=5, label=below:$a_1$] (6) {};
    \node[c] [right of=6, label=below:$a_2$] (7) {};
    \node[c] [right of=7, label=below:$a_{n-1}$] (8) {};
    \node[c] [right of=8, label=below:$a_{n}$] (9) {};
    \node[invisible] [right of=9] (10) {};
    \node[c] [right of=10, label=below:$b_{j}$] (11) {};
    \draw (1) -- node[midway, above] {$n$} (2);
    \draw (4) -- node[midway, above] {$n$} (5);
    \draw[draw=none] (2) -- node[midway] {$\cdots$} (i);
    \draw (i) -- node[midway, above] {$2n$} (3);
    \draw[draw=none] (3) -- node[midway] {$\cdots$} (4);
    \draw (5) -- node[midway, above] {$n$} (6);
    \draw (6) -- node[midway, above] {$1$} (7);
    \draw[draw=none] (7) -- node[midway] {$\cdots$} (8);
    \draw (8) -- node[midway, above] {$1$} (9);
    \draw (9) -- node[midway, above] {$nj$} (11);
    \end{tikzpicture}
    }
\caption{Bottom metric.}
\label{fig:bottommetric}
\end{subfigure}
\caption{The two metrics that establish the lower bound shown in \cref{thm:fullprefLB}.}
\label{fig:ijmetrics}
\end{figure*}
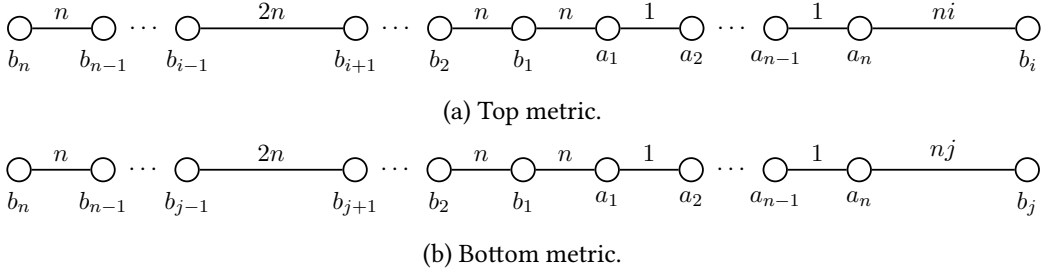

In the one-sided setting, $n$ full-preference queries are clearly sufficient to compute the optimal matching as we would then have the ordinal preference of all agents, leading to the two-sided matching problem. We further show that this amount of such queries is necessary. 

\begin{theorem}
\label{thm:fullprefLB}
    In the one-sided setting, at least $n-1$ full-preference queries are required to compute an optimal matching.
\end{theorem}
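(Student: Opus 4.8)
The plan is a standard adversary argument built on a family of $n$ mutually indistinguishable instances, one for each possible identity of a single ``special'' giver. For each $k\in[n]$ I would take $I_k$ to be the instance drawn in \cref{fig:ijmetrics}, in which giver $b_k$ is moved far to the right of all the takers while every other giver $b_m$ ($m\neq k$) stays to the left of all the takers, occupying the position it would have if the width-$2n$ gap (left where $b_k$ was removed) were collapsed. The first thing to establish is the structural invariant that, in \emph{every} $I_k$, each taker $a_t$ ranks the givers in exactly the same order $b_1\succ b_2\succ\cdots\succ b_n$. This is where the specific distances matter: the width-$2n$ gap is chosen so that the non-special givers sit at positions independent of $k$, and $b_k$ is then placed far enough to the right that for every taker it still falls ordinally strictly between $b_{k-1}$ and $b_{k+1}$. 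Consequently all $I_k$ hand the algorithm the identical input (the takers' preference lists), so a full-preference query to a taker is useless.

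Second, I would record what a full-preference query to a giver reveals. In $I_k$ a non-special giver $b_m$ ($m\neq k$) lies to the left of all takers, so its ranking of the takers is $a_1\succ a_2\succ\cdots\succ a_n$, and crucially this answer is the same in every $I_k$ with $k\neq m$ --- such a query only certifies ``$b_m$ is not the special giver.'' Only querying $b_k$ itself is informative (its ranking is the reverse $a_n\succ\cdots\succ a_1$). Hence after an algorithm has queried a set $Q$ of givers and been told ``non-special'' each time, the instances consistent with its entire view are precisely $\{I_k : k\notin Q\}$.

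Third, I would show that distinct instances force distinct target matchings. By \cref{lem:opt_structure} the greedy matching $M^*_k$ of $I_k$ is simultaneously optimal for $\C{q}$ for all $q\in[n]$, and it assigns the rightmost taker $a_n$ to the rightmost giver $b_k$; I would then argue that it is the \emph{unique} matching optimal for all $q$, by scanning the agent--item distances of $M^*_k$ from largest to smallest and checking that each such distance is realized by a unique taker--giver pair, so any matching attaining the same cost profile must coincide with $M^*_k$. Since $M^*_k(a_n)=b_k$, the matchings $M^*_1,\dots,M^*_n$ are pairwise distinct. Now the argument closes: if an algorithm makes at most $n-2$ giver queries, the adversary answers ``non-special'' to all of them, leaving at least two consistent instances $I_k, I_{k'}$ with $M^*_k\neq M^*_{k'}$; the single matching the algorithm outputs equals at most one of them, so the adversary names the other as the true instance, on which the algorithm's output fails to be optimal. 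Therefore at least $n-1$ full-preference queries are required.

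I expect the main obstacle to be the structural invariant of the first step: one must verify, from the explicit distances of \cref{fig:ijmetrics}, that displacing $b_k$ to the far right (together with the width-$2n$ gap) leaves every taker's ordinal preference equal to $b_1\succ\cdots\succ b_n$ for all $k$ simultaneously, and likewise that every non-special giver's preference over the takers does not depend on $k$. The uniqueness of $M^*_k$ is a secondary technical point, resting on the fact that the distances along the greedy matching are all distinct and ``nested'' on the line; some care is also needed at the boundary values $k=1$ and $k=n$, where the width-$2n$ gap sits at an extreme of the giver block.
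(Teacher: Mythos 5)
Your proposal is correct and follows essentially the same adversary argument as the paper: the same displaced-giver construction of \cref{fig:ijmetrics}, with identical taker reports and identical answers from all non-displaced givers, so that fewer than $n-1$ giver queries leave two consistent instances whose optimal matchings assign $a_n$ to different givers. The only difference is your detour through uniqueness of the all-$q$-optimal matching, which is stronger than needed (and would fail if read as uniqueness for a single fixed $\C{q}$, e.g.\ the utilitarian optimum is not unique since all non-displaced givers lie on one side of all takers); the paper instead uses only the weaker fact that every optimal matching must match $a_n$ to the displaced giver, which already makes the two target matchings incompatible.
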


\begin{proof}
    Consider the instance where all the takers have the same ranking $b_1 \succ b_2 \succ \cdots \succ b_n$ over the givers. Assume that an algorithm makes strictly less than $n-1$ queries to the givers, and all queried givers return the same ranking $a_1 \succ a_2 \succ \cdots \succ a_n$. Let $b_i$ and $b_j$ be two givers that have not been queries, and consider the two metrics illustrated in \Cref{fig:ijmetrics}. 
    We first argue that these metrics are to reported preferences. Since they are symmetric, we only check the top metric. On one hand since all queried givers are to the left of all the takers, the preferences of the givers are consistent with this metric. On the other hand, the distance of giver $\ell \neq i$ to taker $k$ is exactly $\ell n+k-1$, and the distance of $b_i$ to $a_k$ is $ni+(n-k)$, which is at least $ni$ and at most $ni+n-1$; the latter is less than the distance of all the agents to $b_{i-1}$ and more than their distance to $b_{i+1}$, and hence the preferences of the takers are consistent with the metric.
    
    Now observe that the optimal matching in the top metric is such that $a_n$ is matched to $b_i$. In contrast, any optimal matching in the bottom metric must be such that $a_n$ should be matched to $b_j$. Hence, it is not possible to find the optimal matching with less than $n-1$ queries.
\end{proof}

Due to \cref{thm:fullprefLB}, since $n-1$ full-preference queries are required even when the whole ordering of one side is given, by applying the argument to each side independently, we obtain a lower bound of $2n-2$ on the number of required full-preference queries for the zero-knowledge setting, which is again tight as $2n$ full-preference queries reveal the orderings of both sides of agents. 

\begin{corollary}
    In the zero-knowledge setting, we need at least $2n-2$ full-preference queries to find the optimal matching.
\end{corollary}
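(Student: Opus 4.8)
The plan is to reduce to \cref{thm:fullprefLB} by a pigeonhole argument over the two sides of the market. Suppose, for contradiction, that some algorithm computes an optimal matching using at most $2n-3$ full-preference queries in the zero-knowledge setting. Each query is posed to either a taker or a giver, so if both sides received at least $n-1$ queries the total would be at least $2n-2>2n-3$; hence at least one side receives at most $n-2<n-1$ queries. Without loss of generality, assume the givers do.

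First I would fix the adversary's responses \emph{before} the algorithm runs: every queried taker reports $b_1\succ b_2\succ\cdots\succ b_n$, and every queried giver reports $a_1\succ a_2\succ\cdots\succ a_n$. Because these answers are committed in advance, they are trivially mutually consistent regardless of the order in which queries are made, so the algorithm's transcript is exactly the one arising in the proof of \cref{thm:fullprefLB}, \emph{except} that there the takers' full profile was handed over for free whereas here it is only revealed on (some) queries. Since fewer than $n-1$ givers are queried, there are two givers $b_i$ and $b_j$ that the algorithm never queries. I would then consider the two metrics of \cref{fig:ijmetrics}: in both of them every taker ranks the givers as $b_1\succ\cdots\succ b_n$, and every giver other than $b_i$ (resp.\ $b_j$) ranks the takers as $a_1\succ\cdots\succ a_n$ — this is precisely the consistency check already carried out in the proof of \cref{thm:fullprefLB}. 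Hence both metrics are consistent with the entire transcript (taker answers match, and every queried giver is $\ne b_i,b_j$, so its answer matches too), yet the optimal matching assigns $a_n$ to $b_i$ in the top metric and to $b_j\ne b_i$ in the bottom metric. The algorithm therefore cannot output a matching that is optimal for both, a contradiction.

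The remaining case, in which the \emph{takers} receive fewer than $n-1$ queries, follows by the symmetry of the two-sided problem under swapping the two sides: one applies the mirror-image construction in which all givers report a common ranking of the takers, two takers stay unqueried, and the reflected pair of metrics is used. This yields the claimed lower bound of $2n-2$ full-preference queries, essentially matching the trivial upper bound of $2n$ queries noted above (query every agent, then apply \cref{thm:two-sided}).

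I do not expect a genuine obstacle here: the whole content is the pigeonhole split together with the observation that the hard instances of \cref{thm:fullprefLB} are robust to the takers' preferences being elicited adaptively rather than given in advance. The one point that warrants a sentence of care is exactly that robustness — that the adversary's pre-committed answers remain consistent with both the top and the bottom metric no matter which agents the algorithm chooses to probe — and this is inherited verbatim from the consistency verification inside the proof of \cref{thm:fullprefLB}.
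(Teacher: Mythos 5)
Your proposal is correct and is essentially the paper's own argument: the paper also obtains the $2n-2$ bound by applying \cref{thm:fullprefLB} (with the same pair of metrics from \cref{fig:ijmetrics}) to each side independently, which is exactly your pigeonhole split showing some side receives at most $n-2$ queries. Your explicit handling of adaptivity via pre-committed adversary answers is a welcome elaboration of a point the paper leaves implicit, but it does not change the approach.
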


Since we can simulate a full-preference query using at most $n-1$ rank queries, we can clearly compute an optimal matching by using at most $n(n-1)$ rank queries in the one-sided setting. In the following theorem, we show that we can do this using only $3n-4$ rank queries.

\begin{theorem}\label{thm:one-sided-rank-queries}
    In the one-sided setting, we can compute an optimal matching by making $3n-4$ rank queries.
\end{theorem}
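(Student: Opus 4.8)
The plan is to reconstruct the full left-to-right order of all $2n$ agents on the line and then read off an optimal matching via the greedy procedure of \cref{thm:two-sided} (equivalently \cref{lem:opt_structure}, treating one side as the ``items''). Since the takers' preferences over the givers are already known, anything that is a function of those preferences alone costs no queries: running the single-peaked recovery algorithm of \citet{EF14} (as in \cref{lem:order}) we obtain the true relative order of the ``inner'' givers --- those spanned by the takers --- and from each taker's top choice we obtain, via \cref{lem:favorite_to_left}, a partition of the takers into consecutive blocks $A^{b_i}$ whose order is consistent with the giver order. What remains unknown, and what the $3n-4$ queries must buy, is (a) the order of the takers, (b) the placement of the ``outer'' givers (those outside the takers' span, on which all takers agree and for which the side --- far left vs.\ far right --- and the position relative to the inner givers are undetermined), and (c) the mutual orientation of the takers' order and the givers' order, each of which is recovered only up to reversal.

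For (a) I would mimic the proof of \cref{thm:two-sided}: once the two extreme givers $b_\ell,b_r$ are identified, it suffices to learn $\succ_{b_\ell}$ restricted to $A\setminus A^{b_\ell}$ and $\succ_{b_r}$ restricted to $A^{b_\ell}$, and each of these orderings can be extracted by rank queries to $b_\ell$ and $b_r$ respectively, asking only for the needed ranks and stopping one short (the last taker on each side being forced by elimination); alternatively, if all takers share a least-preferred giver $b$, then $b$ is extreme and $\succ_b$, obtained with $n-1$ rank queries, is itself the takers' order. For (b), with the takers' order in hand, I would ask each outer giver for its top taker: a far-left (resp.\ far-right) giver's closest taker is necessarily the leftmost (resp.\ rightmost) taker, so one rank query per outer giver reveals its side, after which their mutual order follows from the already-known common ranking of all takers, and their position relative to the inner givers follows from where the extreme takers rank them. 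For (c), exactly as in Cases~1 and~2 of \cref{thm:two-sided}, an extreme taker's favourite giver must be an extreme giver, which fixes the relative orientation. Adding the costs of these phases and pruning redundancies (each full ranking reconstructed from rank queries needs only $n-1$ of them; the takers' favourites and several extreme positions are known a priori; one side's reconstruction can reuse the other's by symmetry) should yield the claimed total of $3n-4$.

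The step I expect to be the main obstacle is not any single argument but the simultaneous control of the query budget and the edge cases. In particular: locating the extreme givers $b_\ell,b_r$ before we are allowed to query them (the takers' preferences pin down only the inner givers, so when outer givers are present some care --- possibly a constant number of extra rank queries --- is needed to find the two extremes); degenerate configurations such as $b_\ell=b_r$, empty blocks $A^{b_i}$, takers lying strictly beyond $b_\ell$ or $b_r$ (which spoils the naive ``$b_r$ ranks $A^{b_\ell}$ last'' heuristic and can inflate the number of bottom-rank queries needed), and the case where no inner giver exists at all; and, finally, squeezing the accounting so that the total lands at exactly $3n-4$ rather than a slightly larger linear bound.
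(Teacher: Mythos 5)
Your high-level route is the same as the paper's: identify the (at most two) extreme givers from the takers' known bottom choices, spend $2n-2$ (or $n-1$) rank queries on them to recover the takers' order as in \cref{thm:two-sided}, then use one top-taker query per giver plus the already-known taker rankings to place the givers, and fix the orientation via an extreme agent. However, the part you leave open --- ``adding the costs \ldots and pruning redundancies should yield $3n-4$'' --- is exactly the content of the theorem, and your plan for the giver side has a genuine problem. First, the implication you use in step (b) is the converse of the one you state: a giver whose top taker is the leftmost taker need \emph{not} be far-left (it can sit just to the right of the leftmost taker), so one top-taker query does not ``reveal its side''. Second, restricting queries to ``outer'' givers and ordering the ``inner'' ones for free via \citet{EF14} does not obviously fit the budget: in the worst case almost all givers are outer (takers clustered, givers spread out), and you have not certified from the takers' preferences alone which givers need queries, nor that at most $n-2$ of them do.

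The paper avoids all of this with a simpler observation that makes the inner/outer distinction and the EF14 machinery unnecessary, and which also shows that knowing a giver's side of an extreme taker is not needed --- only the givers' mutual order matters for the greedy matching. After the $2n-2$ queries to the two extreme givers (whose top takers are then already known, since rank queries start at rank $1$), it queries the top taker of each of the remaining $n-2$ givers. Letting $a$ be an extreme (say rightmost) taker, every giver whose top taker is not $a$ lies weakly to the left of $a$, so the \emph{known} ranking $\succ_a$ sorts all of them; every giver whose top taker is $a$ lies to the right of every other taker (otherwise that taker would separate it from $a$), so the known ranking of any other taker sorts those. This costs $2n-2+(n-2)=3n-4$ queries exactly, and the degenerate case of a single bottom-ranked giver needs only $n-1$ queries for the takers' order plus $n-1$ top-taker queries, i.e.\ $2n-2$. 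If you replace your step (b) with this two-taker sorting argument applied to all givers, your proof closes; as written, the accounting and the side-detection step do not.
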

\begin{proof}
    If two givers appear as the bottom choice of the takers, we can rank the takers who prefer the first one based on the preference of the second one and vice versa. This requires $2n-2$ rank queries. Now, let $a$ be an extreme taker. We query all the givers for their top takers and partition them based on whether they prefer $a$ to other takers or not. The next step is to use the ranking of $a$ to sort the givers who do not have $a$ as their top choice, and another taker to sort the givers who has $a$ as its top choice. This requires $3n-4$ queries in total.

    If only one giver appears as the bottom choice, say $b$, then we know that all the takers are to the same side of $b$. So, we can use the ranking of $b$ as the true ordering of the takers (this required $n-1$ rank queries). After that, similarly to the previous case, we query the givers for their top choice and partition them based on that. Then, we use the rankings of one extreme taker and one more taker to find the true ordering of the givers. This requires $2n-2$ rank queries in total.
\end{proof}

For the zero-knowledge setting, we show that only a constant factor more rank queries are required to compute the optimal matching. 

\begin{theorem}\label{thm:zero-knowledge-rank-queries}
    In the zero-knowledge setting, we can find the optimal matching with $5n-4$ rank queries.
\end{theorem}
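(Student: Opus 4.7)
The plan is to extend the approach of \cref{thm:one-sided-rank-queries} by mirroring the strategy of \cref{thm:two-sided}: recover the relative ordering on each side of the matching (each up to reversal), and then align the two directions. The key insight that makes the bound $5n-4$ rather than $6n-4$ is that once the taker ordering is known, the two extreme takers $a_\ell, a_r$ play the same role on the giver side that the extreme givers $b_\ell, b_r$ played on the taker side, so the second half of the algorithm needs no analog of the initial round of bottom-choice queries.

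First, I would identify the extreme givers by querying each taker for their least-preferred giver, using $n$ rank queries. By the argument in \cref{thm:two-sided}, at most two distinct givers $b_\ell, b_r$ appear as bottom choices, and these are the two extreme givers. This simultaneously partitions the takers into $A_L$ (those with bottom $b_r$, lying to the left of the midpoint of $b_\ell,b_r$) and $A_R$ (those with bottom $b_\ell$, lying to the right). If only one giver appears as a bottom choice, the argument proceeds along the easier branch of \cref{thm:two-sided} with a strictly smaller query count, so we focus on the harder case of two distinct extremes.

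Second, I would query the full rankings of $b_\ell$ and $b_r$ over the takers, at a cost of $2(n-1)$ additional rank queries (the $n$-th position in each ranking is forced by elimination). Since $A_R$ lies entirely to the right of $b_\ell$, restricting $b_\ell$'s ranking to $A_R$ yields the true left-to-right order of $A_R$; symmetrically, $b_r$'s ranking restricted to $A_L$ orders $A_L$ from right to left. Concatenating recovers the full relative ordering of takers, which in particular identifies the two extreme takers $a_\ell, a_r$.

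Third, I would query $a_\ell$ and $a_r$'s full rankings over the givers, using $2(n-1)$ further rank queries. By the symmetric argument of \cref{thm:two-sided}, each giver's bottom taker must be one of $a_\ell, a_r$, so the givers split into those lying to the left and those lying to the right of the midpoint of $a_\ell, a_r$; the same extraction logic as in the second step then yields the relative ordering of the givers, with the already-identified extreme givers $b_\ell, b_r$ serving as anchor points to infer the side partition from the queried rankings. Finally, we align the two directions using information already obtained---for instance, $a_r$'s bottom giver (visible in its own full ranking) must be $b_\ell$---and invoke \cref{lem:opt_structure} to output the optimal matching. The total rank query count is $n + 2(n-1) + 2(n-1) = 5n-4$. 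The main technical obstacle is cleanly deducing the giver partition in the third step without an additional round of $n$ bottom-choice queries on the givers; this is where the anchors $b_\ell, b_r$ and the structural lemmas of \cref{sec:opt_m} must be carefully combined to break the apparent asymmetry.
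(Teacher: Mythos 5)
Your steps 1--2 are fine (partitioning the takers by their \emph{bottom} givers rather than by their top givers as in \cref{thm:two-sided} works, and $n+2(n-1)$ queries do recover the taker ordering), but the third step has a genuine gap, and it is exactly the one you flag yourself: after spending $3n-2$ queries, the remaining $2n-2$ queries to $a_\ell$ and $a_r$ cannot determine the giver ordering, and no amount of cleverness with the ``anchors'' $b_\ell,b_r$ can fix this, because $b_\ell$'s and $b_r$'s rankings are over \emph{takers} and carry no information about where the other givers sit. Two givers lying on the same side of both extreme takers are ranked identically by $a_\ell$ and $a_r$, so their true order, and even which side of the taker block they are on, is undetermined. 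Concretely, take $n=4$ takers at $0,1,2,3$, $b_\ell=-10$, $b_r=13$, and two further givers $g_1,g_2$ placed at $-2,-3$ in metric $d_1$ and at $5,6$ in metric $d_2$. In both metrics $a_1,a_2$ have bottom $b_r$ and $a_3,a_4$ have bottom $b_\ell$; both $a_1$ and $a_4$ rank $g_1\succ g_2$ ahead of $\{b_\ell,b_r\}$ with the same full rankings in $d_1$ and $d_2$; and $b_\ell,b_r$'s rankings over the takers are identical since taker positions are unchanged. So all $5n-4$ answers your algorithm collects coincide, yet the greedy optimal matching of \cref{lem:opt_structure} pairs $(g_2,a_2),(g_1,a_3)$ under $d_1$ and $(g_1,a_2),(g_2,a_3)$ under $d_2$, and each of these is the unique $\C{3}$-optimum in its own metric (cost $24$ versus $25$ for the other). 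Hence no output computable from your query set achieves distortion $1$ for all $k$, so the plan cannot be completed as budgeted: some queries must go to the givers' own top (or bottom) choices to obtain the giver-side partition.

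The paper's proof is structured differently precisely to fit that giver-side information into the budget. It spends one query on an arbitrary taker's bottom giver $b_1$ and $n-1$ queries on all other takers' \emph{top} givers, then case-splits on whether $b_1$ is some taker's favorite. If $\fav(a_1)=b_1$ for some taker $a_1$, then no giver lies beyond $b_1$ or between $a_1$ and $b_1$, so the single ranking of $a_1$ orders \emph{all} givers at once (no partition of the givers is needed), and two givers' rankings then order the takers, split by whether their top is $b_1$. If no taker has $b_1$ on top, then all takers lie on one side of $b_1$, so $b_1$'s ranking orders the takers, and the algorithm queries the top taker of each remaining giver to partition the givers, finishing with two takers' full rankings. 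Either way the count is $5n-4$. Your approach front-loads $3n-2$ queries into recovering the taker order alone, which leaves too little room for the giver side; to salvage it you would either need roughly $n$ additional giver-top queries (giving about $6n$) or need to restructure along the paper's lines.
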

\begin{proof}
Start with an arbitrary taker $a_0$ and query its least-preferred giver; let this be $b_1$.
Without loss of generality, we assume that $b_1$ is to the right of $a_0$. 
We then query the top choice of all other takers to check if $b_1$ is the top choice of any of them (so, we have made $n$ queries so far). 
We now switch between two cases depending on whether there exists a taker $a_1$ such that $\fav(a_1) = b_1$ (Case 1), or not (Case 2).

\medskip
\noindent 
{\bf Case 1:} Since $b_1$ is the least preferred choice of $a_0$, there must be no giver to the right of $b_1$. In addition, since $b_1$ is the favorite of $a_1$, there must be no giver between these two. 
This implies that all the givers other than $b_1$ are located to the left of $a_1$. 
Hence, we can use the ranking of $a_1$ to retrieve the true ordering of all the givers (note that we have made $2n-2$ queries so far). Now we partition the takers into two sets, the one who have $b_1$ as their top choice and the ones that do not have him as their top choice. We know that all the members of the first set are to the right of all the members of the second set by \Cref{lem:favorite_to_left}. We use the ranking of $b_1$ to find the ordering of the members of the second set and another giver to rank the members of the first set. Putting these two rankings together we have the true ordering of all the takers and givers and can find the optimal matching.

\medskip
\noindent 
{\bf Case 2:} If there exists a taker to the right of $b_1$, then $b_1$ would be its favorite giver. Since none of the takers report $b_1$ as their favorite, we can conclude that all the takers are to the left of $b_1$. This means that the ranking of $b_1$ is the true ordering of the takers and we can retrieve this ordering with $n-1$ extra queries. After that we can check the top taker of all the givers, and do as we did in the last part of the former case. Here we do not need to query $b_1$ since we know it is the rightmost giver. This requires  $n-1$ queries for the top choices of all the takes, and $2n-2$ queries to retrieve the remaining full rankings of two of the takers. In total, we need $5n-4$ queries.
\end{proof}

\section{Conclusion and Future Directions}
We showed a tight bound of $3$ on the distortion of ordinal algorithms for the metric one-sided matching problem on the line with respect to the $k$-centrum cost, the sum of the $k$-largest costs among all $n$ agents, for any $k \in [n]$. We further showed that, for the two-sided matching problem, the ordinal preferences of the agents are sufficient to reveal their true ordering on the line which can then be used to compute an optimal matching; in fact, even less information is sufficient to achieve optimality. 

Although we have resolved the distortion of matching on the line, there are several natural directions for future research. For example, it would be interesting to investigate the one-sided matching problem on the line using the learning-augmented framework \citep{lykouris2021competitive}. According to this framework, which was recently also used for the metric distortion of single-winner voting \citep{BFGT24}, the designer is provided with some (potentially inaccurate) {\em prediction} regarding the agents' cardinal preferences. The goal is to achieve an improved distortion whenever the prediction is accurate (known as consistency) while maintaining a good distortion even if the prediction is arbitrarily inaccurate (known as robustness). In our setting, the main question would be whether a consistency better than $3$ can be combined with a robustness of $3$.

For more general metrics, the distortion of the one-sided matching problem still remains wide open, and bridging the gap between the known lower bound of $\Omega(\log{n})$ and the upper bound of $O(n^2)$ for deterministic algorithms or the upper bound of $O(n)$ for randomized algorithms is quite possibly one of the most challenging open questions in the distortion literature at the moment. To obtain positive results, as we did in this paper for the line metric, one could next consider other specific metric spaces, such $2$-dimensional Euclidean spaces or trees (on which the lower bound holds). Additionally, the two-sided matching problem has not been considered before our work under the metric distortion framework, and it is thus a natural problem to study for general metrics, especially as it seems to be somewhat easier than the one-sided variant (at least for the line, as we have shown). 

\section*{Acknowledgments}
Aris Filos-Ratsikas and Mohamad Latifian were supported by the UK Engineering and Physical Sciences Research Council (EPSRC) grant
EP/Y003624/1. Vasilis Gkatzelis and Emma Rewinski were supported by NSF CAREER award CCF-2047907 and NSF grant CCF-2210502.

\bibliographystyle{plainnat}
\bibliography{references}

\newpage
\section*{Appendix}
\subsection*{The significance of the tie-breaking rule in the choice of $a_\ell$ and $a_r$ by \cref{algname:order_match}}
\begin{observation}\label{obs:tiebreak}
If \cref{algname:order_match} were to select $a_\ell$ and $a_r$ arbitrarily among agents whose favorite item is $g_\ell$ and $g_r$, respectively, then it would not achieve the optimal distortion of $3$ with respect to $\C{k}$ for $k \in [n]$. Specifically, when $k=1$ the distortion would be at least $5$, and when $k \geq 2$ the distortion would be at least $7$. 
See the metrics illustrated in Figure~\ref{fig:tiebreak}.  
\end{observation}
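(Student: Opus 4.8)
The entire output of \cref{algname:order_match} is a function of the pair $(a_\ell,a_r)$ it selects: this pair determines $\Gin$ and $\Gout$ via~\eqref{def:GinGout}, hence $\pi_g$, $\pi_a$, and the matching produced in Step~3. Choosing $(a_\ell,a_r)\in\arg\max|\Gin(\cdot,\cdot)|$ makes $\Gin$ as large as possible, whereas an arbitrary legal pick (any agent with favorite $g_\ell$ paired with any agent with favorite $g_r$) can make $\Gin$ much smaller --- possibly just $\{g_\ell,g_r\}$. Two structural facts bound what is lost: by \cref{lem:G+SubsetGin} we always keep $G_+\subseteq\Gin$ (when $\Gin\neq\varnothing$), and, since any item lying strictly between $g_\ell$ and $g_r$ is preferred to $g_r$ by every agent whose favorite is $g_\ell$, the only items a bad choice can push from $\Gin$ into $\Gout$ are zero-plurality items located outside the interval spanned by $g_\ell$ and $g_r$. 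Once such items land in $\Gout$, Step~3 matches them \emph{arbitrarily}, and the analysis is free to choose the worst such completion.

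\textbf{The instances.} For each case I would use the family of metrics drawn in \cref{fig:tiebreak}, parameterised by an infinitesimal $\varepsilon>0$. In both, $g_\ell$ and $g_r$ are the only items with positive plurality; next to each sits a ``good'' agent having it as favorite; and there is in addition a ``bad'' agent that still has $g_\ell$ (resp.\ $g_r$) as favorite but is placed far enough from it that choosing it as $a_\ell$ (resp.\ $a_r$) expels the remaining zero-plurality items from $\Gin$. Those zero-plurality items are placed just outside $[g_\ell,g_r]$, each with its own nearby agent, so that $M^*$ can service them cheaply while an arbitrary matching cannot. For $k=1$ the distances are tuned so that, under the bad choice, a single agent is forced onto an item at distance $5-O(\varepsilon)$ times the optimal value. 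For $k\ge 2$ the instance is inflated --- in the spirit of the passage from \cref{fig:lower:k=1} to \cref{fig:lower:kgeq2} in the proof of \cref{thm:lower} --- so that $k$ agents are simultaneously sent onto far items, driving the $k$-centrum cost to $7-O(\varepsilon)$ times the optimum, while $M^*$ still keeps the sum of its $k$ largest costs at the optimal value.

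\textbf{What is left to check.} The proof then reduces to four verifications: (i) the displayed metric is consistent with the stated ordinal profile, and in particular $g_\ell,g_r$ are exactly the two items appearing at the bottom of the agents' rankings restricted to $G_+$, so that \cref{algname:order_match} identifies them correctly; (ii) the pair of ``bad'' agents is a legal choice for $(a_\ell,a_r)$, i.e.\ their favorites really are $g_\ell$ and $g_r$ (this is exactly where the placement of the zero-plurality items matters: it must not turn any of them into the favorite of a ``bad'' agent); (iii) with this choice $\Gin=\{g_\ell,g_r\}$, so $\pi_g$, $\pi_a$ and Step~3 leave the zero-plurality items to an arbitrary assignment, and the adversarial completion yields $k$-centrum cost $(5-O(\varepsilon))\cdot c$ for $k=1$ and $(7-O(\varepsilon))\cdot c$ for $k\ge 2$, where $c$ is the optimum; and (iv) by \cref{lem:opt_structure}, $M^*$ matches agents to items left-to-right and attains $k$-centrum cost exactly $c$. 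Letting $\varepsilon\to 0$ gives the two claimed lower bounds.

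\textbf{Main obstacle.} The delicate part is calibrating positions so that both goals hold at once. To make choosing a ``bad'' agent as $a_\ell$ collapse $\Gin$, that agent must be moved away from $g_\ell$; but no item may sit near it (otherwise that item, not $g_\ell$, would be its favorite), so it necessarily pays a cost of the order of the relevant gap already in $M^*$ --- the optimum is therefore not free. The achievable ratio is precisely the balance between how far the arbitrary matching of $\Gout$ can scatter an agent and how much the ``bad'' agent is forced to pay in $M^*$, and choosing the gaps to maximise this balance is what pins the bounds at $5$ and $7$. The $k\ge 2$ case is the harder of the two, since the inflation must create $k$ simultaneous far-matchings without correspondingly inflating the optimal $k$-centrum cost.
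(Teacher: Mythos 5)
Your high-level strategy is the same as the paper's (exhibit instances in the style of \cref{fig:tiebreak} where a bad choice of $a_\ell$ or $a_r$ expels an item from $\Gin$, then compare the resulting matching to the left-to-right optimum of \cref{lem:opt_structure}), and your structural observations are sound: items strictly between $g_\ell$ and $g_r$ can never be expelled, so only zero-plurality items outside that interval are at stake. But as a proof there is a genuine gap: everything that actually establishes the bounds is deferred. You never write down an ordinal profile or distances, and your own checklist (i)--(iv) --- consistency of the metric with the profile, legality of the ``bad'' pair, the cost of the algorithm's run, and the optimal cost --- is precisely the content that needs to be verified and is what the paper's proof consists of. Your closing paragraph concedes that the calibration ``pins the bounds at $5$ and $7$'' without performing it, so the specific constants in the statement are not derived, only asserted.

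Moreover, where you do commit to specifics, they do not match the constructions in the figure you invoke, and it is not shown they can be realized. In the actual instances the bad tie-break expels exactly \emph{one} zero-plurality item, not all of them, so $\Gin$ is not $\{g_\ell,g_r\}$; and the loss is not primarily an adversarial completion on $\Gout$ (in both instances $\Gout$ is a single item, so Step~4 has no freedom). For $k=1$ the factor $5$ comes from the single leftover agent $a_n$ being forced onto the far-left item $g_1$; for $k\geq 2$ the factor $7$ is the \emph{sum of two} displaced edges, one of which is a mismatch \emph{inside} $\Gin$ (agent $a_{n-1}$ pushed onto $g_n$) plus the leftover agent $a_n$ on $g_{n-1}$ --- not ``$k$ agents simultaneously sent onto far items.'' So the mechanism you describe for $k\geq 2$ would need a different (unverified) construction, and without explicit positions and a trace of \cref{algname:order_match}'s run under the bad tie-break, the claimed ratios $5-O(\varepsilon)$ and $7-O(\varepsilon)$ remain unproven.
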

\begin{proof}
Throughout this proof we will refer to selecting $a_\ell$ and $a_r$ such that $|\Gin|$ is maximized as \textit{tie-breaking}. First, we will provide an instance and corresponding metric space such that when $k=1$, if there was no tie-breaking the distortion of \cref{algname:order_match} would be at least $5$. Afterwards, we will provide another instance and corresponding metric space such that when $k \geq 2$, if there was no tie-breaking the distortion of \cref{algname:order_match} would be at least $7$.

For $k=1$, consider an instance with $n$ agents $\{a_1,\ldots,a_n\}$ and $n$ items $\{g_1,\ldots,g_n\}$. The ordinal profile is such that agent $a_1$'s ranking over the items is $g_2 \succ \dots \succ g_{n-2} \succ g_1 \succ g_{n-1} \succ g_n$, agent $a_{n-1}$'s ranking over the items is $g_{n-1} \succ g_n \succ g_2 \succ \dots \succ g_{n-2} \succ g_1$, agent $a_n$'s ranking over the items is $g_{n-1} \succ g_{n} \succ g_2 \succ \dots \succ g_{n-1} \succ g_1$, and the rest of the agents rankings over the items are the same: $g_2 \succ \dots \succ g_{n-2} \succ g_{n-1} \succ g_1 \succ g_n$. Observe that only two items have positive plurality, $g_2$ and $g_{n-1}$. Agents $a_1$ through $a_{n-2}$ have $g_2$ as their favorite item, and agents $a_{n-1}$ and $a_n$ have $g_{n-1}$ as their favorite item. \cref{algname:order_match} will define these two items as $g_\ell$ and $g_r$, let $g_2=g_\ell$ and $g_{n-1}=g_r$.

Consider the metric space that is illustrated in \cref{fig:tiebreak:k=1}, where $\varepsilon > 0$ is an infinitesimal. The optimal matching consists of the pairs $(a_i,g_i)$ for $i \in [n]$, leading to a $1$-centrum cost of $d(a_1,g_1)=1$. Without tie-breaking, any agent $a_1$ through $a_{n-2}$ can be selected as $a_\ell$. Let $a_2=a_\ell$ (note $g_r \succ_{a_2} g_1$). Either agent $a_{n-1}$ or $a_n$ can be selected as $a_r$. Without loss of generality, let $a_{n-1}=a_r$. \cref{algname:order_match} would define $\Gin=\{g_2,\ldots,g_n\}$ and $\Gout=\{g_1\}$. \cref{algname:order_match} would define $\pi_g=\{g_2,\ldots,g_n\}$. In Step 2(b) there are many orderings of $\pi_a$ that \cref{algname:order_match} could define, let it define $\pi_a=\{a_1,\dots,a_n\}$. In Step 3, all agents except $a_n$ will be matched to an item in $\Gin$, leaving $a_n$ to be matched to $g_1$ in Step 4. This will result in a cost of $d(a_n,g_1)=5-5 \cdot \varepsilon$, and consequently a distortion of at least $5- 5 \cdot \varepsilon$. Notice that with tie-breaking, that only agent $a_1$ can be defined as $a_\ell$. This would lead to \cref{algname:order_match} defining $\Gin=\{g_1,\dots,g_n\}$ and $\pi_g=\{g_1,\dots,g_n\}$. Ultimately, with this definition of $\pi_g$, the largest cost \cref{algname:order_match} could achieve is $d(a,g_1)=2 - \varepsilon$ for some $a \in \{a_2, \ldots, a_{n-2}\}$.

For $k \geq 2$, consider an instance with $n$ agents $\{a_1,\ldots,a_n\}$ and $n$ items $\{g_1,\ldots,g_n\}$.  The ordinal profile is such that agent $a_{n-1}$'s ranking over the items is $g_1 \succ \dots \succ g_{n-2} \succ g_{n-1} \succ g_n$, agent $a_n$'s ranking over the items is $g_n \succ g_1 \succ \dots \succ g_{n-2} \succ g_{n-1}$, and the rest of the agents rankings over the items are the same: $g_1 \succ \dots g_{n-2} \succ g_n \succ g_{n-1}$.  Observe that only two items have positive plurality, $g_1$ and $g_n$. Agents $a_1$ through $a_{n-1}$ have $g_1$ as their favorite item, and agent $a_n$ has $g_n$ as their favorite item. \cref{algname:order_match} will define these two items as $g_\ell$ and $g_r$, let $g_1=g_\ell$ and $g_n=g_r$.

Consider the metric space that is illustrated in \cref{fig:tiebreak:kgeq2}, where $\varepsilon > 0$ is an infinitesimal. The optimal matching consists of the pairs $(a_i,g_i)$ for $i \in [n]$, leading to a $k$-centrum cost of $d(a_{n-1},g_{n-1})=1$.  Without tie-breaking, any agent $a_1$ through $a_{n-1}$ can be selected as $a_\ell$. Let $a_1=a_\ell$ (note $g_r \succ_{a_1} g_{n-1}$). $a_n$ is the only agent who has $g_r$ as their favorite item, so $a_n=g_r$. \cref{algname:order_match} would define $\Gin=\{g_1,\ldots,g_{n-2},g_n\}$ and $\Gout=\{g_{n-1}\}$.  In Step 2(b) there are many orderings of $\pi_a$ that \cref{algname:order_match} could define, let it define $\pi_a=\{a_1,\dots,a_n\}$. In Step 3, agent $a_i$ would be matched to item $g_i$ for all $i \leq n-2$, and agent $a_{n-1}$ would be matched to item $g_n$. In Step 4, agent $a_n$ would matched to item $g_{n-1}$.  This will result in a cost of $d(a_{n-1},g_n)+d(a_n,g_{n-1})=7-6 \cdot \varepsilon$, and consequently a distortion of at least $7-6 \cdot \varepsilon$, for any $k \geq 2$. Notice that with tie-breaking, that only agent $a_{n-1}$ can be defined as $a_\ell$. This would lead to \cref{algname:order_match} defining $\Gin=\{g_1,\ldots,g_n\}$ and $\pi_g=\{g_{n-1},g_1,\ldots,g_{n-2},g_n\}$. Ultimately, with this definition of $\pi_g$, the largest cost \cref{algname:order_match} could achieve is $d(a_{n-1},g)+d(a,g_{n-1})=3 - 2 \cdot \varepsilon$ for some $g \in \{g_1, \ldots, g_{n-2}\}$ and $a \in \{a_1, \ldots, a_{n-2}\}$.
\end{proof}

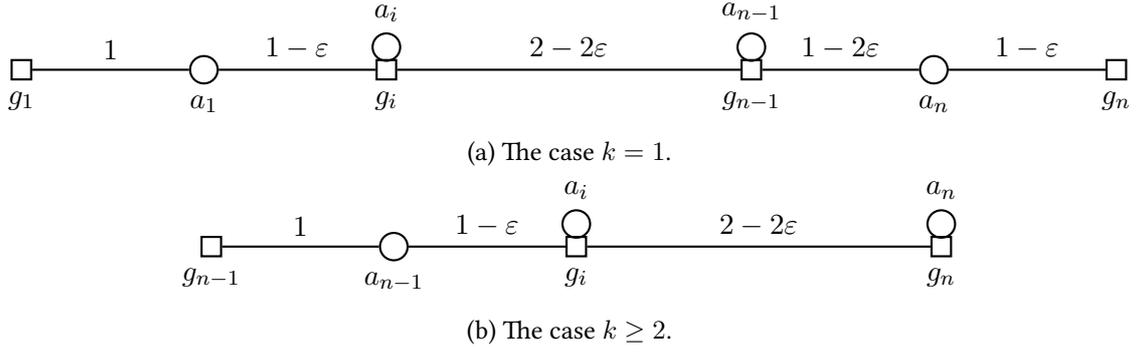
\begin{figure}[t]
\begin{subfigure}[t]{\textwidth}
    \centering
    \begin{tikzpicture}[node distance={24mm}, thick, c/.style = {draw, circle, minimum size=2mm}, s/.style = {draw, rectangle, minimum size=2mm}, t/.style = {draw, regular polygon, regular polygon sides=3, minimum size=2mm]}, invisible/.style = {minimum size=0mm, inner sep=0mm, text width=0mm, outer sep=0mm, draw=none}]
        \node[s] [label=below:$g_1$] (1) {};
        \node[c] [right of=1, label=below:$a_1$] (2) {};
        \node[s] [right of=2, label=below:$g_i$] (3) {};
        \node[c] [above of=3, node distance=3mm, label=above: $a_i$] (4) {};
        \node[s] [right of=3, right of=3, label=below:$g_{n-1}$] (5) {};
        \node[c] [above of=5, node distance=3mm, label=above: $a_{n-1}$] (6) {};
        \node[c] [right of=5, label=below:$a_n$] (7) {};
        \node[s] [right of=7, label=below:$g_n$] (8) {};
        \draw (1) -- node[midway, above] {$1$} (2);
        \draw (2) -- node[midway, above] {$1-\varepsilon$} (3);
        \draw (3) -- node[midway, above] {$2 - 2\varepsilon$} (5);
        \draw (5) -- node[midway, above] {$1 - 2\varepsilon$} (7);
        \draw (7) -- node[midway, above] {$1 - \varepsilon$} (8);
    \end{tikzpicture}
\caption{The case $k=1$.}
\label{fig:tiebreak:k=1}
\end{subfigure}
\begin{subfigure}[t]{\textwidth}
    \centering 
    \begin{tikzpicture}[node distance={24mm}, thick, c/.style = {draw, circle, minimum size=2mm}, s/.style = {draw, rectangle, minimum size=2mm}, t/.style = {draw, regular polygon, regular polygon sides=3, minimum size=2mm]}, invisible/.style = {minimum size=0mm, inner sep=0mm, text width=0mm, outer sep=0mm, draw=none}]
        \node[s] [label=below:$g_{n-1}$] (1) {};
        \node[c] [right of=1, label=below:$a_{n-1}$] (2) {};
        \node[s] [right of=2, label=below:$g_i$] (3) {};
        \node[c] [above of=3, node distance=3mm, label=above: $a_i$] (4) {};
        \node[s] [right of=3, right of=3, label=below:$g_{n}$] (5) {};
        \node[c] [above of=5, node distance=3mm, label=above: $a_{n}$] (6) {};
        \draw (1) -- node[midway, above] {$1$} (2);
        \draw (2) -- node[midway, above] {$1-\varepsilon$} (3);
        \draw (3) -- node[midway, above] {$2 - 2\varepsilon$} (5);
    \end{tikzpicture}
\caption{The case $k \geq 2$.}
\label{fig:tiebreak:kgeq2}
\end{subfigure}
\caption{The metrics used in the proof of \cref{obs:tiebreak} to give a lower bound of 5 and 7 (depending on the value of $k$) on the distortion of \cref{algname:order_match} where in Step 1 of the algorithm $a_\ell$ and $a_r$ are chosen arbitrarily among agents whose favorite item is $g_\ell$ and $g_r$, respectively, rather than being chosen to maximize the size of $\Gin$ for (a) $k=1$ and (b) $k \geq 2$. Circles correspond to agents, rectangles correspond to items, and for (a) $i \in \{2,\dots,n-2\}$ and (b) $i \in [n-2]$. The weight of each edge is the distance between its endpoints.}
\label{fig:tiebreak}
\end{figure}

\end{document}